\documentclass[11pt]{article}
\usepackage[margin=1in]{geometry}
\usepackage[T1]{fontenc}
\usepackage{lmodern}
\usepackage{amsmath,amssymb,amsthm}
\usepackage{microtype}
\usepackage{tikz}
\usetikzlibrary{arrows.meta}
\usepackage[hidelinks]{hyperref}
\numberwithin{equation}{section}
\allowdisplaybreaks[1]
\newtheorem{theorem}{Theorem}[section]
\newtheorem{proposition}[theorem]{Proposition}
\newtheorem{lemma}[theorem]{Lemma}
\newtheorem{corollary}[theorem]{Corollary}
\theoremstyle{definition}
\newtheorem{definition}[theorem]{Definition}

\title{Restricted isometry of sampled Fourier and Hadamard matrices via entropic descent}
\author{Will Burstein \and Alex Iosevich \and Ben Krause}
\date{}

\begin{document}
\maketitle

\begin{abstract}
This is the second paper in a series in which gradient descent and related descent methods serve as proof mechanisms in harmonic analysis. The first paper treated Bourgain's $\Lambda(p)$ selection theorem with stronger probabilistic guarantees. Here we use entropic mirror descent to prove an improved restricted isometry bound. The same estimate improves sampling and recovery bounds in the Fourier Ratio program. Let $H$ be a complex $n\times n$ matrix with $H^*H=nI$ and $|H_{ij}|=1$. For fixed $0<\varepsilon<\frac{1}{4}$ and $A_0>0$, independent Bernoulli row selection with expected cardinality
\[
 C_{\varepsilon,A_0}r\log\frac{2en}{r}\log(2r)
 \leq m\leq\frac{n}{2}
\]
preserves the squared Euclidean norm of every complex vector $y$ satisfying $\|y\|_1\leq\sqrt r\|y\|_2$ within a factor $1\pm\varepsilon$, after normalization by $\sqrt m$, with failure probability at most $Cn^{-A_0}$. This includes every $r$-sparse vector and also vectors with all coordinates nonzero. The support size enters only through this norm inequality. For Fourier coefficients $y=\widehat f$, the condition is exactly $\operatorname{FR}(f)^2\leq r$, where $\operatorname{FR}(f)=\frac{\|\widehat f\|_1}{\|\widehat f\|_2}$. This gives uniform energy sampling and approximate recovery even for signals with full Fourier support. The same conclusion holds for a uniformly chosen subset of prescribed cardinality. This includes discrete Fourier and real Hadamard matrices. At fixed accuracy, the bound removes one sparsity logarithm from the earlier sufficient count and replaces $\log n$ by $\log\frac{2en}{r}$. For Walsh matrices, it agrees up to constants with the lower bound of B\l asiok, Lopatto, Luh, Marcinek, and Rao in their parameter range. One relative-entropy potential controls the corrections of an amplitude predictor at every scale. Counting those corrections on the symmetric difference of two samples gives the uniform estimate. We also prove the sufficient bound $C_{A_0}\varepsilon^{-5}r\log\frac{2en}{r}\log\frac{2r}{\varepsilon}$. We also obtain stable sparse recovery from $Cs\log\frac{2en}{s}\log(2s)$ measurements, with an error controlled by the measurement noise and the best $s$-term approximation. The distortion dependence is not claimed to be optimal.
\end{abstract}

\section{Introduction}

This is the second paper in the research program of Will Burstein, Alex Iosevich, and Ben Krause devoted to using gradient descent and related descent methods as proof mechanisms in harmonic analysis. In the first paper \cite{BIK}, we gave a greedy descent proof of Bourgain's $\Lambda(p)$ selection theorem with stronger probabilistic guarantees. Here, entropic mirror descent leads to an improved restricted isometry bound for sampled Fourier and Hadamard matrices. In both papers, a nonnegative potential controls the total cost of corrections across successive approximation scales. Encoding those corrections, and estimating the finite families determined by their records, then gives a uniform probability estimate. This passage from the progress of a descent procedure to a theorem in analysis is the organizing idea of the series. The present paper is self-contained: the companion paper supplies the context and the common point of view, while all approximation and probability estimates needed here are proved below.

The present paper also connects the descent program with the Fourier Ratio program developed in \cite{FourierRatio,BINFourierRatio}. That program uses $\operatorname{FR}(f)=\frac{\|\widehat f\|_1}{\|\widehat f\|_2}$ to measure concentration of Fourier coefficients without requiring any of them to vanish. The sampling proof below depends on a coefficient vector $y$ through $\frac{\|y\|_1^2}{\|y\|_2^2}$, which becomes exactly $\operatorname{FR}(f)^2$ when $y=\widehat f$. This gives uniform sampling and approximate reconstruction for the full class of signals of bounded Fourier Ratio, with an improved sufficient number of observations. Thus the Fourier Ratio also describes the scope of the main sampling theorem. We explain the connection in Section~\ref{subsec:fourier-ratio-intro} and prove the sampling and recovery statements in Section~\ref{subsec:fourier-ratio}.

An orthogonal transform preserves the Euclidean norm of a vector. If most of its rows are discarded, this can no longer hold for every vector, but it may still hold approximately for every sparse vector. The restricted isometry problem asks how many rows we need to retain for this to happen. Its connection with sparse recovery gives the question a practical interpretation: each retained row is a measurement, and we would like to recover a signal from as few measurements as possible.

The difficulty is that the same rows must work for all sparse vectors. A probability estimate for one vector does not settle the problem, since both its support and its coefficients are allowed to vary. Much of the work on restricted isometry is concerned with the cost of this uniformity. For rows sampled from a bounded orthogonal matrix, that cost is reflected in the logarithmic factors accompanying the sparsity parameter.

A simple example already shows what the selected rows have to detect. Let $h_k,h_\ell$ be two distinct columns of a real Hadamard matrix and put $y=\frac{e_k+e_\ell}{\sqrt2}$. The columns agree on half the rows and disagree on the other half, so $|(Hy)_i|^2$ is either $2$ or $0$. A good sample must balance these two possibilities. Restricted isometry asks for the same sample to give the appropriate balance for every choice of columns and coefficients. Our proof addresses this by building a finite description of the energies that the sample must test.

In this paper we prove that, for discrete Fourier and Hadamard matrices and fixed distortion, the sufficient number of rows is of order $r\log(2r)\log\frac{2en}{r}$. This improves the fixed-accuracy row count of Haviv and Regev \cite{HavivRegev} and Brugiapaglia, Dirksen, Jung, and Rauhut \cite{BDJR} in the common setting of these transforms. For Walsh matrices, it also reaches the lower bound of B\l asiok, Lopatto, Luh, Marcinek, and Rao \cite{BLLMR} in the range covered by their theorem. The proof uses one descent potential to control all amplitude scales. We develop this point of view after stating the sampling theorem.

The argument naturally leads to a larger class of vectors. An $r$-sparse vector satisfies $\|y\|_1\leq\sqrt r\|y\|_2$ by Cauchy--Schwarz. Once this inequality is available, the sampling proof makes no further use of the number or location of its nonzero coordinates. We therefore state the theorem for every vector satisfying this norm inequality. The same randomly chosen rows work for the whole class. This allows, for example, coefficients with a geometrically decreasing tail, even when none of them vanish. The $\ell^2$ normalization remains essential: it determines the total energy and enters the entropy estimate that controls the descent.

For a positive integer $n$, write $[n]=\{1,\ldots,n\}$. The support of a vector $y$ is $\operatorname{supp}(y)=\{k:y_k\neq0\}$, and $y$ is $r$-sparse when $|\operatorname{supp}(y)|\leq r$. We use the counting-measure norms
\[
 \|y\|_1=\sum_{k=1}^n|y_k|,
 \qquad
 \|y\|_2=\left(\sum_{k=1}^n|y_k|^2\right)^{\frac{1}{2}},
 \qquad
 \|y\|_\infty=\max_{k\in[n]}|y_k|.
\]

\begin{samepage}
\begin{definition}
A matrix $B$ over $\mathbb{F}=\mathbb{R}$ or $\mathbb{C}$ has the restricted isometry property of order $r$ with constant $\varepsilon$ if
\[
 (1-\varepsilon)\|y\|_2^2\leq\|By\|_2^2
 \leq(1+\varepsilon)\|y\|_2^2
\]
for every $r$-sparse vector $y\in\mathbb{F}^n$. We abbreviate this property as RIP. The scalar field will be specified in each statement, and $B^*$ denotes the conjugate transpose.
\end{definition}
\end{samepage}

Throughout the main argument, $n\geq2$, $1\leq r\leq n$, and
\begin{equation}
 H\in\mathbb{R}^{n\times n},\qquad H^TH=nI,\qquad |H_{ij}|\leq1.
 \label{eq:matrix-assumptions}
\end{equation}
Observe that every entry must have absolute value one. Indeed, a column has squared norm $n$, and each of its $n$ entries contributes at most one to that norm. Thus \eqref{eq:matrix-assumptions} describes precisely the real Hadamard matrices, including the Walsh matrices discussed below.

For $\mathbb F=\mathbb R$ or $\mathbb C$, define
\begin{equation}
 \mathcal K_r(\mathbb F)
 =\{y\in\mathbb F^n:\|y\|_2=1,\ \|y\|_1\leq\sqrt r\}.
 \label{eq:target-class}
\end{equation}
We write $\mathcal K_r=\mathcal K_r(\mathbb R)$ throughout the real argument. Every $r$-sparse unit vector belongs to this class. More generally, a nonzero vector has its normalization in $\mathcal K_r(\mathbb F)$ precisely when
\begin{equation}
 \frac{\|y\|_1^2}{\|y\|_2^2}\leq r.
 \label{eq:norm-ratio}
\end{equation}
The ratio in \eqref{eq:norm-ratio} lies between $1$ and $n$ and is unchanged by multiplication by a nonzero scalar. It can be much smaller than the support size. We give an explicit example in Section~\ref{subsec:beyond-support}.

Put $z_i(y)=|(Hy)_i|^2$. Orthogonality and the $\ell^1$ bound imply
\begin{equation}
 \sum_{i=1}^n z_i(y)=n,\qquad 0\leq z_i(y)\leq r
 \quad(y\in\mathcal K_r).
 \label{eq:basic-energy}
\end{equation}
Indeed, $\|Hy\|_2^2=n$, whereas $|(Hy)_i|\leq\|y\|_1\leq\sqrt r$. All logarithms are natural. We set
\[
 L_{n,r}=\log\frac{2en}{r},\qquad L_r=\log(2r).
\]
Both quantities are bounded below by positive absolute constants. Letters $C,c$ denote positive absolute constants whose values may change; subscripts indicate permitted parameter dependence. We write $U\asymp V$ when each quantity is bounded by a constant multiple of the other, with any dependence indicated by subscripts.

\begin{theorem}[Bernoulli row sampling]
\label{thm:main}
Fix $0<\varepsilon<\frac{1}{4}$ and $A_0>0$. There is a constant $C_{\varepsilon,A_0}$ such that the following holds for every matrix satisfying \eqref{eq:matrix-assumptions}. Suppose
\begin{equation}
 C_{\varepsilon,A_0}rL_{n,r}L_r\leq m\leq\frac{n}{2}.
 \label{eq:m-range}
\end{equation}
Let $\delta_1,\ldots,\delta_n$ be independent Bernoulli random variables with parameter $p=\frac{m}{n}$; thus $\mathbb{P}(\delta_i=1)=p$ and $\mathbb{P}(\delta_i=0)=1-p$. With probability at least $1-Cn^{-A_0}$,
\begin{equation}
 \left|\sum_{i=1}^n\delta_i|(Hy)_i|^2-m\right|\leq\varepsilon m
 \qquad\text{for every }y\in\mathcal K_r.
 \label{eq:bernoulli-rip}
\end{equation}
Equivalently, if $S=\{i:\delta_i=1\}$ and $H_S$ denotes the restriction to rows indexed by $S$, then, on this event,
\begin{equation}
 (1-\varepsilon)\|y\|_2^2
 \leq\frac{1}{m}\|H_Sy\|_2^2
 \leq(1+\varepsilon)\|y\|_2^2
 \qquad\text{whenever }\|y\|_1\leq\sqrt r\|y\|_2.
 \label{eq:homogeneous-sampling}
\end{equation}
In particular, $m^{-\frac{1}{2}}H_S$ has RIP of order $r$ with constant $\varepsilon$.
\end{theorem}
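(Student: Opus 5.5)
We follow the strategy described in the abstract: first symmetrize, then build a finite description of the sampled energies through an entropic mirror descent.

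\textbf{Symmetrization.} Let $\delta'$ be an independent copy of $\delta$. The supremum $\sup_{y\in\mathcal K_r}\bigl|\sum_i(\delta_i-p)z_i(y)\bigr|$ is controlled, by a standard desymmetrization, by the supremum of
\[
 \sum_i(\delta_i-\delta'_i)z_i(y)=\sum_{i\in S\triangle S'}\pm z_i(y).
\]
Conditionally on the symmetric difference $T=S\triangle S'$, whose size is about $2m$ with overwhelming probability, the signs are independent Rademacher variables. So it suffices to show the following: with probability $1-Cn^{-A_0}$ over the signs, every $y\in\mathcal K_r$ satisfies $\bigl|\sum_{i\in T}\sigma_iz_i(y)\bigr|\le\varepsilon m/2$. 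We use \eqref{eq:basic-energy} throughout: $z_i\le r$ and $\sum_i z_i=n$.

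\textbf{Entropic approximation of $z(y)$.} For each $y$ we build a chain of approximations $z^{(0)},z^{(1)},\dots,z^{(J)}$ of the normalized energy profile $z(y)/n$, viewed as a probability vector on $[n]$ (or on $T$). The scales are dyadic amplitude levels $2^{-j}$ with $2^{j}\lesssim r$, so $J\asymp L_r$. The approximations are produced by entropic mirror descent (multiplicative weights), started at the uniform distribution. Each step corrects the current amplitude predictor using a test vector drawn from a small dictionary: a sparse combination of columns of $H$, or a sign pattern of a few rows. The key observation is Maurey-type. Since $\|y\|_1\le\sqrt r$, $y$ is a convex combination of $\sqrt r\,\omega e_k$ with $|\omega|=1$. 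Empirical averages of $O(r/\eta^2)$ such atoms therefore approximate $Hy$ at each scale, so each correction can be described by a record of $O(r\cdot\text{poly}(1/\varepsilon))$ column indices.

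The potential is the relative entropy $D(z(y)/n\,\|\,z^{(t)})$. It starts at most $\log n$ and, because of the cap $z_i\le r$, effectively at most $\log\frac{2en}{r}=L_{n,r}$. It decreases by a fixed amount $\gtrsim\varepsilon^{c}$ with each correction that detects a discrepancy. Summed over all scales, the number of corrections is therefore bounded by $C_\varepsilon L_{n,r}$ in total, rather than per scale. This is where one logarithm is saved compared with chaining arguments, which pay $L_r$ for each of the $L_r$ scales.

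\textbf{Counting and union bound.} Each $y$ determines a record: the sequence of corrections, each specified by an index set of size $O_\varepsilon(r)$ drawn from $[n]$ together with sign data. The number of records with total length $N$ is $\exp\bigl(C_\varepsilon N\,r\log\frac{2en}{r}\bigr)$-ish. Given a record, the final predictor $z^{(J)}$ is determined. Hoeffding's inequality on $T$, applied with variance proxy $\sum_{i\in T}(z^{(J)}_i)^2\le rm$, gives deviation probability $\exp(-c\varepsilon^2m/r)$. We then organize the bound scale by scale, so the entropy at scale $j$ is paid against concentration at scale $j$, with $2^{j}$ replacing $r$ in the variance. The union bound then closes precisely when $m\ge C_{\varepsilon,A_0}rL_{n,r}L_r$; the factor $L_r$ comes from summing over the $J$ scales, and $A_0\log n$ is absorbed into $C_{\varepsilon,A_0}$. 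The residual $z(y)-nz^{(J)}$ is controlled deterministically by the stopping rule of the descent: no further correction detects a discrepancy, so the residual pairs with every sign pattern at the $\varepsilon m$ level.

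\textbf{Conclusion and main obstacle.} Passing back from $\delta-\delta'$ to $\delta-p$, and the equivalence of \eqref{eq:bernoulli-rip} and \eqref{eq:homogeneous-sampling}, are routine by homogeneity. The main obstacle is the potential step: showing that a single relative-entropy potential, with initial value $O(L_{n,r})$ rather than $O(L_rL_{n,r})$, decreases by a scale-independent amount at every detected correction across all amplitude scales. I would first try a multiplicative-weights regret bound with truncation at level $r$, using the cap $z_i\le r$ to keep the step size uniform across scales.
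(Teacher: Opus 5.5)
There is a genuine gap. The step you flag as the ``main obstacle'' is left unproved, and the quantitative claims built around it do not hold as stated.

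\textbf{The potential and the source of $L_{n,r}$.} Take multiplicative weights over the atoms $\pm\sqrt r\,e_k$ from your Maurey step. The Hoeffding term in the regret bound is $\frac{\lambda^2r}{2}\|g_t\|_\infty^2$ with $\|g_t\|_\infty\asymp a_j$, set against a margin gain of $\lambda\eta a_j^2$. Whatever step size you take, the certified decrement at amplitude $a_j$ is therefore only of order $\eta^2a_j^2/r$. This is exactly what the paper obtains: $\frac{3\eta^2b_j}{32r}$ in Proposition~\ref{prop:mistake-budget}. So you cannot bound the number of corrections by $O_\varepsilon(L_{n,r})$. Only the weighted count $\sum_jb_jd_j\le\frac{32}{3\eta^2}rL_{n,r}$ is controlled, and it permits many corrections at low levels. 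Your source for $L_{n,r}$ is also off. The cap $z_i\le r$ gives $D(z/n\,\|\,\mathrm{unif})\le\log r$, not $L_{n,r}$. In the paper, $L_{n,r}$ is the initial relative entropy of a column-side comparator $Q_y$ with barycenter $y$, built from both $\|y\|_1\le\sqrt r$ and $\|y\|_2=1$ (Lemma~\ref{lem:norm-comparator}).

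\textbf{The encoding and the logarithm count.} Recording $O_\varepsilon(r)$ column indices costs about $rL_{n,r}$ per correction at every scale. Even $C_\varepsilon L_{n,r}$ corrections would give $e^{C_\varepsilon rL_{n,r}^2}$ records. Against your tail $e^{-c\varepsilon^2m/r}$, that forces $m\gtrsim r^2L_{n,r}^2$, and you do not explain how passing to scales $2^j$ removes the extra factor. You also attribute $L_r$ to the sum over scales, which is precisely the loss the paper is built to avoid.

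The missing idea is what a record contains. In the paper it is only the positions of prediction mistakes among the $M\le4m$ active rows, each with a label in $\{+,-,0\}$. The decoder reruns the same deterministic predictor (Lemma~\ref{lem:decoder}), so level $j$ costs $\log\bigl(\binom{M}{d_j}3^{d_j}\bigr)$. Jensen then gives $\sum_jb_jh_M(d_j)\le D\log\frac{3eM\sum_jb_j}{D}\lesssim\eta^{-2}rL_{n,r}\log\bigl(e+\frac{m}{L_{n,r}}\bigr)$, and this last logarithm is the source of $L_r$. Processing levels from top to bottom makes the band at level $j$ depend only on records at levels $\ge j$. The bound $\sum_{j\le k}b_j\le Cb_k/\eta$ then removes any factor for the number of levels.

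\textbf{Two further steps fail as written.} First, the variance proxy $\sum_{i\in T}(z^{(J)}_i)^2\le rm$ presupposes $\sum_{i\in T}z^{(J)}_i\lesssim m$ uniformly. That is the upper half of what you are proving, and it is not available at all for the arbitrary records entering a union bound. The paper instead proves the self-normalized estimate $\bigl|\sum_{i\in A}\sigma_iv_i^A(y)\bigr|\le C\sqrt{V_A(y)K_A(y)}$. It converts this into $|X_y-X_y'|\le C\sqrt{K_m(X_y+X_y'+\eta^2m)}+C\eta(X_y+X_y')+C\eta^2m$. It then closes with the absorption lemma and Chebyshev for a single witness in the ghost sample; this is what your ``standard desymmetrization'' has to become.

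Second, a stopping rule tested only against sparse dictionary elements says nothing about how the residual pairs with the dense random sign vector on $T$. Recording that vector as a test would cost $|T|$ bits. The paper has no such residual. Its decoded labels are correct at every stable query, so $|v_i^A(y)-z_i(y)|\le C\eta z_i(y)+C\eta^2$ holds row by row, independently of the signs.
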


Here $m$ is the expected number of selected rows; the actual number $|S|$ is random. Corollary~\ref{cor:uniform} gives the corresponding statement when exactly $m$ rows are chosen. The range \eqref{eq:m-range} is the one in which the theorem provides random subsampling. If its lower bound exceeds $\frac{n}{2}$, we can retain all rows and obtain an exact isometry. In this case, retaining all rows still uses at most a constant multiple of $\min\{n,rL_{n,r}L_r\}$, with the constant depending on the fixed accuracy and probability parameters.

The real proof contains the approximation and counting ideas in their simplest form. To treat a complex transform, we apply the predictor to real and imaginary parts and then keep the two parts of each measurement together in the probability estimate.

\begin{theorem}[Complex transforms]
\label{thm:complex-main}
The conclusion of Theorem~\ref{thm:main} holds, with an adjustment of $C_{\varepsilon,A_0}$, when
\begin{equation}
 H\in\mathbb{C}^{n\times n},\qquad H^*H=nI,\qquad |H_{ij}|\leq1,
 \label{eq:complex-matrix-assumptions}
\end{equation}
and the estimate is required simultaneously for every $y\in\mathcal K_r(\mathbb C)$. Equivalently, \eqref{eq:homogeneous-sampling} holds for every complex vector satisfying $\|y\|_1\leq\sqrt r\|y\|_2$. Uniform sampling of exactly $m$ distinct rows has the same conclusion when $m$ is an integer. Both assertions also hold under the explicit sufficient condition
\begin{equation}
 C_{A_0}\varepsilon^{-5}rL_{n,r}\log\frac{2r}{\varepsilon}
 \leq m\leq\frac{n}{2}.
 \label{eq:explicit-row-count}
\end{equation}
\end{theorem}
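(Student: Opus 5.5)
We reduce Theorem~\ref{thm:complex-main} to the machinery of Theorem~\ref{thm:main}. We treat its proof as supplying three separable components: an amplitude predictor driven by entropic mirror descent, a chaining/counting scheme on the symmetric difference of two samples, and a Bernstein-type tail estimate. We check that each component survives passage to complex entries.

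For $y\in\mathcal K_r(\mathbb C)$ we write
\[
(Hy)_i=\sum_k H_{ik}y_k,
\]
and split both $H_{ik}y_k$ and the resulting sum into real and imaginary parts. Thus $(Hy)_i=u_i+\mathrm{i}v_i$, where $u=\operatorname{Re}(Hy)$ and $v=\operatorname{Im}(Hy)$ are each real linear images of the real vector $(\operatorname{Re}y,\operatorname{Im}y)\in\mathbb R^{2n}$. The two facts used in the real argument persist:
\[
\sum_i|(Hy)_i|^2=n,\qquad |(Hy)_i|\le\|y\|_1\le\sqrt r.
\]
The second bound gives $|u_i|,|v_i|\le\sqrt r$.

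The predictor step is applied to $u$ and $v$ separately. At each dyadic amplitude scale it produces a finite-record approximation $\tilde u,\tilde v$. The single relative-entropy potential is taken on the energy distribution
\[
z_i/n=(u_i^2+v_i^2)/n,
\]
which is still a probability vector on $[n]$. Its initial entropy deficit is at most $\log(n)$, and more sharply it is controlled by $L_{n,r}$ via the $\ell^1$ bound exactly as in the real case, since that bound only uses $\|y\|_1\le\sqrt r$ and the entrywise bound $|H_{ij}|\le1$. The number of corrections per scale is therefore the same up to a factor $2$, and the count of possible records grows by at most a constant power. The probability estimate is then run on the pair $(u_i,v_i)$ kept together. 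The random quantity is
\[
\sum_i(\delta_i-p)\,|(Hy)_i|^2,
\]
whose increments between consecutive approximations are bounded by the same scale parameters. Bernstein's inequality and the union bound over records therefore give failure probability $Cn^{-A_0}$, after enlarging $C_{\varepsilon,A_0}$. The equivalence with \eqref{eq:homogeneous-sampling} follows by homogeneity, since the ratio $\|y\|_1^2/\|y\|_2^2$ is scale invariant over $\mathbb C$.

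For uniform sampling of exactly $m$ rows we argue in two steps. First, a uniformly random $m$-subset is obtained from the Bernoulli model conditioned on $|S|=m$. Second, $\mathbb P(|S|=m)\ge c/\sqrt m$ for $p=m/n$, since $m$ is the mode of the binomial distribution. Conditioning therefore costs at most a factor $C\sqrt n$, which is absorbed by increasing $A_0$ by $1/2$; this is the argument behind Corollary~\ref{cor:uniform}, and it is insensitive to the field.

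For the explicit condition \eqref{eq:explicit-row-count}, we track the $\varepsilon$-dependence through the argument:
\begin{itemize}
\item the finest scale must reach accuracy $\varepsilon$, so the number of scales becomes $\log(2r/\varepsilon)$ in place of $L_r$;
\item the per-scale correction budget from the entropy potential costs a factor $\varepsilon^{-2}$;
\item Bernstein's inequality at accuracy $\varepsilon m$ with variance proxy $pr$ costs another $\varepsilon^{-2}$;
\item summing a geometric allocation of errors across scales costs a final $\varepsilon^{-1}$.
\end{itemize}
Together these give $\varepsilon^{-5}$. The main obstacle I expect is the predictor step. Rounding $u$ and $v$ separately could double count mass at coordinates where $|u_i|\approx|v_i|$, and it could break the monotonicity of the entropy potential. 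If that happens, I would instead round the complex amplitude in polar form, predicting $|(Hy)_i|$ dyadically together with a phase from an $\varepsilon$-net of the circle. The extra $\log(1/\varepsilon)$ in the record count is harmless.
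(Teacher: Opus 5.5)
Your outer shell matches the paper's: realify, keep both real components of a complex row under one selector, and condition on $|S|=m$. Your mode argument for $\mathbb P(|S|=m)\geq c/\sqrt m$ is fine. But the two steps that actually carry the proof are misidentified, and as written they do not give the stated bound.

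\textbf{The potential.} Write $\widetilde y=(\operatorname{Re}y,\operatorname{Im}y)\in\mathbb R^{2n}$. What bounds the corrections is $D_{\mathrm{KL}}(Q\|P_t)$ for a distribution $Q$ on the signed coordinate experts whose barycenter is the vector that $w_t$ is approximating, namely $\widetilde y$. Only then does the margin inequality $\langle g_t,w_t-\widetilde y\rangle\geq\eta b_j$ enter Lemma~\ref{lem:regret}. The row-energy distribution $z/n$ on $[n]$ has no relation to $w_t$, so its entropy deficit controls no mistakes; in any case that deficit is at most $\log r$, since $z_i\leq r$. The facts you need are the following.
\begin{enumerate}
\item $\|\widetilde y\|_2=1$ and, by $|a|+|b|\leq\sqrt2|a+ib|$, $\|\widetilde y\|_1\leq\sqrt2\|y\|_1\leq\sqrt{2r}$. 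So $\widetilde y\in\mathcal K_{2r}$ in dimension $2n$.
\item The two real queries at row $i$ are inner products of $\widetilde y$ with $(\operatorname{Re}h_i,-\operatorname{Im}h_i)$ and $(\operatorname{Im}h_i,\operatorname{Re}h_i)$, whose entries are bounded by one.
\item Lemma~\ref{lem:norm-comparator}, with experts $\pm\sqrt{2r}e_k$, gives $D_{\mathrm{KL}}(Q_{\widetilde y}\|P_1)\leq L_{2n,2r}=L_{n,r}$.
\end{enumerate}
The entrywise bound is used only for $\|g_t\|_\infty\leq2a_j$, not for the entropy. With this in place, the decoded values $v_{i,1},v_{i,2}$ for the squared real and imaginary parts satisfy $|v_{i,1}+v_{i,2}-|(Hy)_i|^2|\leq C\eta|(Hy)_i|^2+C\eta^2$. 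So there is no double counting, and the polar/phase-net fallback is unnecessary.

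\textbf{The probability step.} A union bound over records combined with Bernstein for $\sum_i(\delta_i-p)|(Hy)_i|^2$ forces the records to name positions among all $n$ rows. The weighted record count is then bounded only by $\eta^{-2}rL_{n,r}\log(e+n/L_{n,r})$. This leads to $m\gtrsim rL_{n,r}\log n$, which is weaker than $rL_{n,r}L_r$ whenever $\log r=o(\log n)$.

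The second logarithm arises only because records are taken on the symmetric difference $A$ of two independent samples. There $|A|\leq4m$ outside probability $e^{-cm}$, and the count becomes $\log(e+m/L_{n,r})$. Conditional on $A$, the signs $\delta_i-\delta_i'$ are Rademacher. Because the two real queries of a row share one sign, each band contributes $\sum_i\sigma_iq_i$ with $q_i\in\{0,1,2\}$, which needs the paired bound of Lemma~\ref{lem:paired-signs}.

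The union bound must also be organized over tail profiles, with levels processed from the top down. Then reversing the summation costs only $\sum_{j\leq k}b_j\leq Cb_k/\eta$. Your accounting instead identifies $L_r$, and $\log\frac{2r}{\varepsilon}$, with ``the number of scales.'' There are $L\asymp\eta^{-1}\log\frac{2r}{\eta}$ levels, since they must be $(1+\eta)$-geometric rather than dyadic to give relative accuracy $\eta$. A multiplicative factor of $L$ is exactly the extra logarithm this organization removes. In the paper the levels enter only through $\log(L/\rho)$. The factor $\log\frac{2r}{\varepsilon}$ comes from $\log(e+m/L_{n,r})$ at $m\asymp\varepsilon^{-5}rL_{n,r}\log\frac{2r}{\varepsilon}$, together with $\log n\leq CL_{n,r}L_r$.

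Finally, returning from the two-sample comparison to one sample needs the absorption lemma and the witness argument, which your sketch omits. One fixes a bad target after seeing $\delta$, then applies Chebyshev to the independent copy.
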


The constants in the fixed-distortion statement are independent of the matrix. As in the real case, \eqref{eq:complex-matrix-assumptions} forces every entry to have absolute value one. In particular, it includes the discrete Fourier matrix of every order. Section~\ref{sec:complex} proves the complex assertion and states the Fourier consequence explicitly. Section~\ref{sec:distortion} proves \eqref{eq:explicit-row-count} by keeping track of the accuracy parameter in the same argument.

\subsection{The result in context}

At fixed accuracy, Bourgain proved the sufficient row count $r\log r(\log n)^2$ \cite{Bourgain2014}. Haviv and Regev subsequently obtained $r(\log r)^2\log n$ for bounded unitary matrices, a class that includes Fourier and Hadamard matrices \cite[Theorems 1.1 and 4.5]{HavivRegev}. Their rows are chosen uniformly and independently, with repetitions allowed. Our theorem treats Bernoulli selection, and Corollary~\ref{cor:uniform} treats uniform selection without replacement. With these sampling conventions understood, the comparison for Fourier and Hadamard matrices is
\[
 r(\log r)^2\log n
 \qquad\longrightarrow\qquad
 r\log(2r)\log\frac{2en}{r}.
\]
For growing $r$, one factor of $\log r$ disappears, and the ambient logarithm becomes $\log\frac{2en}{r}$. To see the size of the improvement in a familiar range, fix $0<\theta<1$ and suppose that $r\asymp n^\theta$. Then
\[
 r(\log r)^2\log n\asymp_\theta r(\log n)^3,
 \qquad
 rL_{n,r}L_r\asymp_\theta r(\log n)^2.
\]
We therefore save a factor of order $\log n$. Since $r(\log n)^2=o(n)$, this comparison lies within the subsampling range for all sufficiently large $n$.

The later work of Brugiapaglia, Dirksen, Jung, and Rauhut also belongs in this comparison \cite{BDJR}. For bounded orthonormal systems with bound one, their Theorem 1.1 gives a sufficient count of order
\[
 \varepsilon^{-2}r\log(en)\log^2\frac{2r}{\varepsilon},
\]
with constants adjusted to the desired RIP accuracy. Their framework allows independent sampling from bounded Riesz systems and includes continuous sampling. At fixed accuracy, the orthonormal specialization has the same $r\log n\log^2(2r)$ order as the preceding comparison. Our improvement concerns the logarithms in the row count for the finite flat transforms in \eqref{eq:complex-matrix-assumptions}. The explicit estimate \eqref{eq:explicit-row-count} has a weaker power of $\varepsilon^{-1}$. Thus the advantage established here is at fixed distortion; the earlier results retain a stronger distortion dependence and a broader setting.

The norm condition also appears in the earlier work. Theorem~1.1 of \cite{BDJR} allows an arbitrary target set contained in an $\ell^1$ ball. In the orthonormal setting, choosing the target set to be $\mathcal K_r(\mathbb C)$ gives uniform norm preservation on the same class considered here. Our formulation makes the full scope of the descent argument explicit. The quantitative improvement discussed above concerns the logarithms in the sufficient number of rows.

The Walsh lower bound explains why the remaining logarithms matter. B\l asiok, Lopatto, Luh, Marcinek, and Rao showed that too few Bernoulli-selected Walsh rows leave a nonzero sparse vector in the kernel \cite[Theorem 3.1]{BLLMR}. More precisely, if $n=2^d$, $r=2^k$, and
\[
 \min\{k,d-k\}\geq C_*\log(2d),
\]
then an expected row count at most $cr\log r\log\frac{n}{r}$ produces such a vector with probability tending to one. We recall the statement in Theorem~\ref{thm:walsh-lower}. In this range,
\[
 rL_{n,r}L_r\asymp r\log r\log\frac{n}{r},
\]
so the upper and lower bounds agree up to constants. In particular, for $k=\lfloor\theta d\rfloor$ with $0<\theta<1$ fixed, the Bernoulli threshold is of order $r(\log n)^2$. The sharpness asserted here concerns fixed distortion and the sparsities covered by the Walsh lower bound.

The smaller row count also improves the measurement guarantee for sparse recovery. Corollary~\ref{cor:sampled-recovery} gives a single sampled Fourier or Hadamard matrix that works for every complex signal and every admissible noise vector when
\[
 m\geq C_{A_0}s\log\frac{2en}{s}\log(2s),
 \qquad m\leq\frac{n}{2}.
\]
Minimizing the $\ell^1$ norm then gives the usual error bound in terms of the noise and the best $s$-term approximation. We include this deduction in Section~\ref{sec:recovery}; the passage from RIP to recovery is classical \cite{CandesTao,CRT}.

The polynomial failure estimate makes it possible to prescribe the sample size by conditioning. It also gives one measurement matrix that works for every signal on the same event. The descent procedure is used to prove that this event has high probability; the reconstruction method in the recovery statement is the usual convex minimization problem.

\subsection{Signal recovery and uncertainty principles}

The recovery application places the present theorem in a line of work relating harmonic analysis to incomplete measurements. Iosevich and Mayeli \cite{IosevichMayeli} use Fourier restriction estimates, Bourgain's $\Lambda(q)$ theorem, and Fourier decay to strengthen uncertainty principles and obtain recovery results when some Fourier coefficients are missing. The underlying observation is simple. If two sparse signals have the same observed coefficients, their difference is sparse and its Fourier transform is supported on the missing frequencies. An uncertainty principle can rule out such a nonzero difference.

For recovery by $\ell^1$ minimization, one needs quantitative control of how a vector invisible to the measurements can concentrate on a small set of coordinates. The work of Kashin and Temlyakov \cite{KashinTemlyakov} connects compressed sensing with norm estimates on subspaces. Iosevich, Kashin, Limonova, and Mayeli \cite{IKLM} use this viewpoint for bounded orthonormal bases and random losses: a comparison of the $\ell^1$ and $\ell^2$ norms on the span of the unobserved basis vectors yields recovery of sparse signals from the retained coefficients. Section~\ref{sec:recovery} gives an elementary form of this implication before proving the noisy recovery estimate.

Related subsystem results of Kashin and Limonova \cite{KashinLimonova} and Limonova \cite{Limonova} establish weak lacunarity and improved norm estimates for suitable subsystems of orthogonal systems. These belong to the broader setting in which a judicious choice of a subsystem supplies additional control of its linear combinations. Their principal conclusions concern lacunarity and maximal partial sums. The direct recovery connection for the present discussion is the norm comparison on the unobserved span used in \cite{IKLM}.

Here the descent argument produces a uniform restricted isometry estimate for the retained measurements. The recovery consequence is therefore quantitative in the sparsity, the number of measurements, the noise level, and the probability of failure. In the finite Fourier and Hadamard setting, the improved row count gives the same improvement in a sufficient measurement count for stable sparse recovery. The recovery principles themselves are classical; the contribution of this paper is the sampling estimate obtained through entropic descent. The connection with \cite{IosevichMayeli,IKLM} also explains how the first two papers in the descent program address related analytic ingredients: selection with a $\Lambda(q)$ bound in the first paper, and simultaneous preservation of sparse energies in this one.

\subsection{The Fourier Ratio connection}
\label{subsec:fourier-ratio-intro}

The norm ratio in our theorem also has a direct interpretation in the Fourier Ratio program of Aldaleh et al.\ \cite{FourierRatio} and Burstein, Iosevich, and Nathan \cite{BINFourierRatio}. Use the unitary Fourier transform on $\mathbb Z_n$, and, for $f\neq0$, write
\[
 \operatorname{FR}(f)=\frac{\|\widehat f\|_1}{\|\widehat f\|_2}.
\]
If we regard $y$ as the Fourier coefficient vector $\widehat f$, then the quantity in \eqref{eq:norm-ratio} is exactly $\operatorname{FR}(f)^2$. Sampling rows of the inverse Fourier matrix amounts to observing values of $f$. Consequently, the same random set $S$ satisfies
\[
 (1-\varepsilon)\|f\|_2^2
 \leq\frac{n}{m}\sum_{x\in S}|f(x)|^2
 \leq(1+\varepsilon)\|f\|_2^2
\]
for every $f$ with $\operatorname{FR}(f)^2\leq r$. The factor $\frac{n}{m}$ comes from the counting-measure norms and the unitary Fourier normalization.

This observation also gives reconstruction. A nonzero function that vanishes on $S$ must have Fourier Ratio larger than $\sqrt r$. On the other hand, minimizing the Fourier $\ell^1$ norm among functions agreeing with the observations gives a candidate whose difference from $f$ has Fourier $\ell^1$ norm at most $2\|\widehat f\|_1$. Combining these two facts bounds the relative reconstruction error by $\frac{2\operatorname{FR}(f)}{\sqrt r}$. Thus, to recover every signal with Fourier Ratio at most $R$ to relative accuracy $\tau$, it is enough to take
\[
 r=\left\lceil\frac{4R^2}{\tau^2}\right\rceil,
 \qquad
 m\geq C_{A_0}r\log\frac{2en}{r}\log(2r),
\]
in the stated subsampling range. Section~\ref{subsec:fourier-ratio} proves these assertions, includes measurement noise, and compares the count with the earlier Fourier Ratio recovery bounds. The isometry accuracy is fixed in this deduction; the desired reconstruction accuracy enters through $r$.

There is a connection with the descent proof itself. Normalize the absolute Fourier coefficients to a probability distribution. The reciprocal of the sum of the squares of its masses is $\operatorname{FR}(f)^2$, and its entropy is at least $\log\operatorname{FR}(f)^2$. This is precisely the norm--entropy relation used to choose the comparison distribution. The Fourier Ratio therefore describes both the class of signals covered by the sampling theorem and the coefficient distribution that controls the descent.

\subsection{Descent as a proof mechanism}

Descent methods are familiar from fitting models and training neural networks; see, for example, \cite[Chapter 8]{GoodfellowBengioCourville}. To explain their role in this series, we begin with the usual gradient step and then describe the change of geometry that suits the present problem. The first paper uses greedy descent in a smooth function space; this paper uses entropic mirror descent on probability distributions. The procedures differ, but both turn a bound on the progress of an approximation into a bound on the information needed to describe it.

Recall that a gradient step for a differentiable function $F:\mathbb{R}^d\to\mathbb{R}$ has the form
\[
 u_{t+1}=u_t-\tau\nabla F(u_t),\qquad \tau>0.
\]
The same point minimizes
\[
 \langle\nabla F(u_t),u-u_t\rangle
 +\frac{1}{2\tau}\|u-u_t\|_2^2.
\]
The first term estimates the change in $F$, while the second penalizes moving too far. We will adjust probabilities in a convex average. Replacing squared Euclidean distance by relative entropy gives an exponential reweighting, known as entropic mirror descent \cite{BeckTeboulle}. Each prediction error determines the next linear loss. The losses change with the prediction errors. Nevertheless, one quantity attached to the fixed target decreases at every correction. Its initial value limits the total number of corrections, with a larger charge for an error at a larger amplitude. This is the precise descent statement used in the proof. Section~\ref{sec:predictor} gives the update and its relative-entropy calculation; see also \cite[Chapter 2]{CBL}.

There are useful precedents for this approach. Temlyakov develops the relation between greedy approximation and convex optimization \cite{Temlyakov}. Mirrokni, Paes Leme, Vladu, and Wong use mirror descent to prove an approximate Carath\'eodory theorem \cite{MLVW}: a point in a convex polytope in $\ell^q$, $2\leq q<\infty$, can be approximated by an average supported on few vertices, with a bound independent of the ambient dimension. In the present problem, the approximation must retain its efficiency when we count all possible records and make the probability estimate uniform.

The companion paper illustrates the common principle in a different setting. For $q>2$, a subsystem $(\varphi_j)_{j\in J}$ on a probability space has the $\Lambda(q)$ property if
\[
 \left\|\sum_{j\in J}a_j\varphi_j\right\|_{L^q}
 \leq C_q\left(\sum_{j\in J}|a_j|^2\right)^{\frac{1}{2}}
\]
for every coefficient vector. Bourgain's theorem selects large subsystems with this property from bounded orthogonal systems \cite{Bourgain1989}. The first paper in the series \cite{BIK} gives a greedy proof with exact selected cardinality and arbitrary polynomial failure bounds. Its approximation takes place in a smooth $L^q$ space. Each correction decreases a convex potential by an amount proportional to the square of the current accuracy. Keeping the approximation as the accuracy changes gives one bound for the weighted number of corrections. Recording their indices, phases, and scales then gives a uniform estimate over the test functions. Figure~\ref{fig:descent-mechanism} displays this shared structure. The persistent potential is the link between descent and the uniform theorem: its initial value bounds the total correction cost, and the correction records specify the finite families to which the probability estimates are applied.

\begin{figure}[htbp]
\centering
\begin{tikzpicture}[font=\small, line width=0.55pt, >=Stealth]
 \node[draw, align=center, text width=5.2cm, minimum height=1.75cm,
       inner sep=6pt] (first) at (-3.25,0)
       {First paper\\[3pt]
        Bourgain's $\Lambda(p)$ theorem\\[3pt]
        Greedy descent in $L^q$};
 \node[draw, align=center, text width=5.2cm, minimum height=1.75cm,
       inner sep=6pt] (second) at (3.25,0)
       {This paper\\[3pt]
        Fourier and Hadamard RIP\\[3pt]
        Entropic mirror descent};
 \node[draw, align=center, text width=8.8cm, minimum height=1.3cm,
       inner sep=6pt] (potential) at (0,-2.1)
       {One potential throughout the approximation scales\\[3pt]
        Weighted control of the corrections};
 \node[draw, align=center, text width=8.8cm, minimum height=1.3cm,
       inner sep=6pt] (records) at (0,-4.0)
       {Finite records of the corrections\\[3pt]
        Count the resulting approximations};
 \node[draw, align=center, text width=8.8cm, minimum height=1.0cm,
       inner sep=6pt] (uniform) at (0,-5.85)
       {Uniform probability estimates for selection and sampling};
 \draw[->] (first.south) -- ([xshift=-3.25cm]potential.north);
 \draw[->] (second.south) -- ([xshift=3.25cm]potential.north);
 \draw[->] (potential.south) -- (records.north);
 \draw[->] (records.south) -- (uniform.north);
\end{tikzpicture}
\caption{The proof mechanism shared by the first paper in the series \cite{BIK} and the present paper. The descent rule and the counting argument are adapted to the relevant space. In both cases, the same potential is retained when the approximation scale changes.}
\label{fig:descent-mechanism}
\end{figure}

For restricted isometry, we need relative accuracy for squared row amplitudes and a two-sided sampling estimate. Haviv and Regev approximate vectors at several amplitude levels by averages of columns and count the resulting finite families \cite[Section 1.3]{HavivRegev}. We instead count the corrections of a predictor. One relative-entropy potential controls their weighted cost across all levels. The choice of comparison distribution gives $\log\frac{2en}{r}$, and recording corrections on the symmetric difference of two samples makes the record count depend on the sample size. Processing the levels from large to small allows the final probability estimate to use this same weighted cost. We now explain these steps.

Fix a row set $A$ with $|A|=M$ and a target vector $y\in\mathcal K_r$. The predictor $w_t$ is an average of the vectors in $\mathcal{V}=\{\pm\sqrt r\,e_k:1\leq k\leq n\}$, where $e_k$ is the $k$th standard coordinate vector. We denote the averaging distribution by $P_t$ and start with the uniform distribution $P_1$. At each geometric amplitude level $a_j$, we use $w_t$ to predict whether $(Hy)_i$ is large and positive, large and negative, or small. A narrow transition interval is left undecided. Whenever the prediction is wrong outside that interval, we record the position and the correct label, and update the weights defining $w_t$. These records suffice to reconstruct an approximation to $|(Hy)_i|^2$ on $A$.

To measure the progress of the corrections, we compare the predictor with the target. Since $\|y\|_1\leq\sqrt r$, the target is also an average of vectors in $\mathcal{V}$. Choose a distribution $Q_y$ with this average and compare it with $P_t$ through the relative entropy
\[
 \Phi_t=D_{\mathrm{KL}}(Q_y\|P_t)
 =\sum_{v\in\mathcal{V}}Q_y(v)\log\frac{Q_y(v)}{P_t(v)}
 \geq0,
\]
where terms with $Q_y(v)=0$ are interpreted as zero. Every weight in $P_t$ remains positive. The distribution $Q_y$ is used to analyze the procedure; the update itself needs only the recorded error. If $\eta$ denotes the accuracy parameter, the separation between the stable labels and our choice of step size give
\[
 \Phi_t-\Phi_{t+1}\geq\frac{3\eta^2a_j^2}{32r}
\]
at every correction made at level $j$. This follows from the one-step calculation in Lemma~\ref{lem:regret} and the margin estimate used in Proposition~\ref{prop:mistake-budget}. A mistake at a larger amplitude therefore uses more of the available potential.

Put $b_j=a_j^2$, and let $d_j$ be the number of corrections at level $j$. We keep the same distribution as we pass from one level to the next. Summing the decreases and using the nonnegativity of the final potential gives
\[
 D=\sum_jb_jd_j\leq C_\eta r\Phi_1
 \leq C_\eta rL_{n,r}.
\]
For the last inequality, Section~\ref{sec:comparator} constructs $Q_y$ so that $\Phi_1\leq L_{n,r}$. This choice gives the first logarithm in the row bound. Using one potential throughout the procedure is what allows a single estimate to control the corrections at every amplitude level.

We next count the records. At level $j$, a record consists of $d_j$ positions in $A$ and one of three labels at each position. There are exactly $\binom{M}{d_j}3^{d_j}$ choices, and their logarithms satisfy
\[
 \sum_jb_j\log\left(\binom{M}{d_j}3^{d_j}\right)
 \leq D\log\frac{3eM\sum_jb_j}{D}.
\]
The size of $A$ now becomes important. We introduce an independent second sample and take $A$ to be the symmetric difference of the two samples. With high probability $M\leq4m$, which turns the last bound into
\[
 C_\eta rL_{n,r}\log\left(e+\frac{m}{L_{n,r}}\right).
\]
At $m\asymp rL_{n,r}L_r$, the last logarithm is bounded by a constant multiple of $L_r$. Thus the second logarithm comes from choosing correction positions among the active rows.

There is still a uniformity issue to resolve. Conditional on $A$, the differences of the two selectors are independent random signs. We consider every possible correction record on $A$ before exposing those signs. Each record gives a deterministic decoded approximation, so a union bound can control all of them at once. A target chosen after seeing the sample must still use one of these records.

To make the counting estimate fit this union bound, we process amplitudes from large to small. The decoded band at a given level then depends only on records at that level and above it. When we sum the probability estimates, geometric summation returns the same weights $b_j$ that occur in the correction bound. This avoids paying another logarithm for the number of levels. Finally, the resulting estimate for the difference of the two samples is combined with concentration for one fixed target and an elementary absorption argument.

The proof is arranged in this order. Sections~\ref{sec:predictor}--\ref{sec:decoder} construct the predictor and recover the squared amplitudes from its correction records. Sections~\ref{sec:counting} and \ref{sec:signs} count those records and estimate the random signed sums. Sections~\ref{sec:ghost} and \ref{sec:completion} introduce the second sample and complete the proof of Theorem~\ref{thm:main}. The remaining sections treat prescribed sample size, complex transforms, Fourier Ratio sampling and recovery, the explicit distortion bound, the Walsh lower bound, and sparse recovery. We finish with an example illustrating the norm condition, an explanation of its role in the proof, and the alternative normalization by the actual sample size.

\section{Predicting the amplitudes}
\label{sec:predictor}

We begin with a deterministic approximation procedure. At this stage the row set is arbitrary, and no randomness is involved. Our aim is to record enough information about $Hy$ to approximate its squared amplitudes, while keeping the total correction cost small. Fix an accuracy parameter
\begin{equation}
 0<\eta<\frac{1}{16}.
 \label{eq:eta-range}
\end{equation}
Set
\[
 a_j=(1+\eta)^j,
 \qquad
 b_j=a_j^2.
\]
Choose integers $j_-<j_+$ so that
\begin{equation}
 \frac{\eta}{4(1+\eta)}<a_{j_-}\leq\frac{\eta}{4},
 \qquad
 \frac{\sqrt r}{1-\eta}\leq a_{j_+}<\frac{(1+\eta)\sqrt r}{1-\eta}.
 \label{eq:level-endpoints}
\end{equation}
Let
\[
 \mathcal{J}=\{j_-,j_-+1,\ldots,j_+\},
 \qquad L=|\mathcal{J}|.
\]
Then
\begin{equation}
 L\leq\frac{C}{\eta}\log\frac{2r}{\eta},
 \qquad
 \sum_{j\in\mathcal{J}}b_j\leq\frac{Cr}{\eta},
 \qquad
 \sum_{j\leq k}b_j\leq\frac{Cb_k}{\eta}.
 \label{eq:geometric-bounds}
\end{equation}
We shall also use
\begin{equation}
 \sum_{j\in\mathcal{J}}b_j(j_+-j+1)\leq\frac{Cr}{\eta^2}.
 \label{eq:geometric-tail-length}
\end{equation}
To check the dependence on $\eta$, put $q=(1+\eta)^2$. Then $b_{j_+-\ell}=b_{j_+}q^{-\ell}$, $b_{j_+}\leq Cr$, and $1-q^{-1}\geq c\eta$. The two geometric sums are $\sum_{\ell\geq0}q^{-\ell}=\frac{1}{1-q^{-1}}$ and $\sum_{\ell\geq0}(\ell+1)q^{-\ell}=\frac{1}{(1-q^{-1})^2}$. The bound for $L$ follows from the ratio of the endpoints and $\log(1+\eta)\geq c\eta$. The final weighted sum will be useful when the number of higher levels appears in the probability calculation.

Now fix $A\subseteq[n]$, write $M=|A|$, and give its rows a deterministic order. The levels $a_j,b_j$ depend on $r$ and $\eta$, but not on $A$ or the target. We visit all rows at the highest level, then all rows at the next level, and continue downwards. Thus the order of the queries is
\[
 j=j_+,j_+-1,\ldots,j_-;
 \qquad
 i\in A\text{ in the fixed order}.
\]
For $y\in\mathcal K_r$, put $x=Hy$. We assign a label to each query by setting
\begin{equation}
 c_y(i,j)=
 \begin{cases}
 +,&x_i\geq(1+\eta)a_j,\\
 -,&x_i\leq-(1+\eta)a_j,\\
 0,&|x_i|\leq(1-\eta)a_j,\\
 *,&(1-\eta)a_j<|x_i|<(1+\eta)a_j.
 \end{cases}
 \label{eq:stable-labels}
\end{equation}
The symbol $*$ marks the transition interval. We impose no requirement on the prediction there. This small amount of freedom gives a definite margin between the correct and incorrect scores at every other query. The encoder knows the target and can therefore decide whether a query is in the transition interval. The decoder will not need this information: an ambiguous query causes no update, and either decoded answer there is accurate enough for the later amplitude estimate.

To make a prediction at $(i,j)$, let $h_i\in\mathbb{R}^n$ be the corresponding row of $H$ and use the scores
\begin{equation}
 s_+(w)=\langle h_i,w\rangle,
 \qquad
 s_-(w)=-\langle h_i,w\rangle,
 \qquad
 s_0(w)=a_j.
 \label{eq:scores}
\end{equation}
Write
\[
 \phi_+=h_i,
 \qquad
 \phi_-=-h_i,
 \qquad
 \phi_0=0.
\]

\begin{lemma}[Stable margin]
\label{lem:stable-margin}
If $c_y(i,j)\in\{+,-,0\}$, then the score of the true label at $y$ exceeds the score of each incorrect label by at least $\eta a_j$.
\end{lemma}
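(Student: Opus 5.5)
We evaluate the scores at the target itself, i.e. at $w=y$. Then $s_+(y)=\langle h_i,y\rangle=x_i$, $s_-(y)=-x_i$ and $s_0(y)=a_j$. (For real $H$ the row $h_i$ pairs with $y$ by the real inner product, so $\langle h_i,y\rangle=(Hy)_i=x_i$.) The proof is then a direct check, one case for each stable label in \eqref{eq:stable-labels}. In each case we compare the true score with the two incorrect scores.

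\emph{Case $c_y(i,j)=+$.} Here $x_i\geq(1+\eta)a_j$.
\begin{itemize}
\item Against the label $0$: $s_+-s_0=x_i-a_j\geq\eta a_j$.
\item Against the label $-$: $s_+-s_-=2x_i\geq2(1+\eta)a_j\geq\eta a_j$.
\end{itemize}

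\emph{Case $c_y(i,j)=-$.} This is symmetric: replace $x_i$ by $-x_i$ and swap the roles of $+$ and $-$.

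\emph{Case $c_y(i,j)=0$.} Here $|x_i|\leq(1-\eta)a_j$, so
\[
s_0-s_\pm=a_j\mp x_i\geq a_j-|x_i|\geq\eta a_j .
\]

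The margin therefore comes entirely from the thresholds $(1\pm\eta)a_j$ in the definition of the stable labels. There is no real obstacle. The only point needing care is the interpretation: the scores in \eqref{eq:scores} are to be evaluated at $w=y$. This is what the later regret argument uses, since $y$ is an average of elements of $\mathcal V$ under $Q_y$. One also uses that the comparison between $+$ and $-$ gives a margin of at least $2a_j$, which is larger than $\eta a_j$ because $\eta<1/16$.
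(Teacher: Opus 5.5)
Your proof is correct and is the same direct case check as the paper's: evaluate the scores at $w=y$ and read the margins off the thresholds $(1\pm\eta)a_j$. The bound $2(1+\eta)a_j\geq\eta a_j$ holds for every $\eta>0$, so the restriction $\eta<\frac{1}{16}$ is not actually needed in the comparison between $+$ and $-$.
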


\begin{proof}
If the true label is $+$, then $x_i\geq(1+\eta)a_j$, so
\[
 s_+(y)-s_0(y)=x_i-a_j\geq\eta a_j,
 \qquad
 s_+(y)-s_-(y)=2x_i\geq2(1+\eta)a_j.
\]
The same calculation with the signs reversed handles the label $-$. If the label is $0$, then $|x_i|\leq(1-\eta)a_j$, and
\[
 s_0(y)-s_+(y)=a_j-x_i\geq\eta a_j,
 \qquad
 s_0(y)-s_-(y)=a_j+x_i\geq\eta a_j.
\]
\end{proof}

The predictor will be a convex average of the signed coordinate vectors in
\begin{equation}
 \mathcal{V}=\{\pm\sqrt r\,e_k:1\leq k\leq n\}.
 \label{eq:experts}
\end{equation}
Here $e_k$ is the $k$th standard coordinate vector. These vectors are often called experts in the prediction literature; in our argument they simply supply the possible values whose weighted average is the predictor. Start with the uniform distribution $P_1$ on $\mathcal{V}$, and fix $\lambda=\frac{\eta}{8r}$ for the construction. We use $t$ to count updates, keeping the distribution unchanged between them. Thus the query index $(i,j)$ and the update index $t$ serve different purposes: many successive queries may use the same $P_t$. When its current value is $P_t$, the predictor is
\[
 w_t=\mathbb{E}_{v\sim P_t}v.
\]
Choose the label with the largest score in \eqref{eq:scores}, using a fixed rule to break ties. A correct stable prediction requires no action, and we also make no update at an ambiguous query. If a stable prediction is wrong, denote the predicted and correct labels by $\widehat c_t$ and $c_t$, set
\begin{equation}
 g_t=a_j(\phi_{\widehat c_t}-\phi_{c_t})
 \label{eq:loss-vector}
\end{equation}
and change the distribution according to
\begin{equation}
 P_{t+1}(v)=
 \frac{P_t(v)e^{-\lambda\langle g_t,v\rangle}}
 {\sum_{u\in\mathcal{V}}P_t(u)e^{-\lambda\langle g_t,u\rangle}}.
 \label{eq:exp-update}
\end{equation}

Each query is visited once. At a mistake we store the correct label and update the distribution for the subsequent queries; we do not repeat the current query until its prediction becomes correct. We keep this distribution when passing to the next amplitude level. To estimate the total number of corrections, we compare it with a distribution whose average is the target $y$. For probability distributions $Q,P$ on $\mathcal{V}$, with $P(v)>0$, write
\[
 D_{\mathrm{KL}}(Q\|P)=\sum_{v\in\mathcal{V}}Q(v)\log\frac{Q(v)}{P(v)},
 \qquad \bar v_Q=\sum_{v\in\mathcal{V}}Q(v)v,
\]
where a term with $Q(v)=0$ is interpreted as zero. Nonnegativity follows directly from the concavity of the logarithm:
\[
 -D_{\mathrm{KL}}(Q\|P)
 =\sum_{Q(v)>0}Q(v)\log\frac{P(v)}{Q(v)}
 \leq\log\left(\sum_{Q(v)>0}P(v)\right)\leq0.
\]
The quantity $D_{\mathrm{KL}}(Q\|P)$ is the nonnegative relative entropy, and $\bar v_Q$ is the average, or barycenter, of $Q$.

We can now make the descent interpretation explicit. Among all probability distributions $R$ on $\mathcal{V}$, the update \eqref{eq:exp-update} uniquely minimizes
\[
 \lambda\langle g_t,\bar v_R\rangle+D_{\mathrm{KL}}(R\|P_t).
\]
Indeed, if $Z_t$ denotes the denominator in \eqref{eq:exp-update}, this expression equals $D_{\mathrm{KL}}(R\|P_{t+1})-\log Z_t$, which is minimized at $R=P_{t+1}$. The first term is the linear loss, and the second measures the change from the current distribution. Vectors with smaller loss $\langle g_t,v\rangle$ gain weight relative to those with larger loss, while the normalization keeps the weights summing to one. Equivalently, at this round define the linear function
\[
 F_t(R)=\sum_{v\in\mathcal V}R(v)\langle g_t,v\rangle.
\]
Its derivative in the coordinate $R(v)$ is $\langle g_t,v\rangle$, and the update minimizes its linear contribution together with the entropy penalty. This is the entropic mirror-descent step referred to in the introduction \cite{BeckTeboulle}. The penalty $D_{\mathrm{KL}}(R\|P_t)$ chooses the next distribution; the potential $D_{\mathrm{KL}}(Q_y\|P_t)$, with the target distribution fixed in its first argument, will control the accumulated corrections. The following standard estimate holds for any fixed $\lambda>0$. We include its calculation to show how the same relative entropy controls all the updates; see also \cite[Chapter 2]{CBL}.

\begin{lemma}[The relative-entropy estimate]
\label{lem:regret}
For any probability distribution $Q$ on $\mathcal{V}$, with barycenter $\bar v_Q$, and any adaptive finite sequence of loss vectors,
\begin{equation}
 \sum_t\langle g_t,w_t-\bar v_Q\rangle
 \leq
 \frac{D_{\mathrm{KL}}(Q\|P_1)}{\lambda}
 +\frac{\lambda r}{2}\sum_t\|g_t\|_\infty^2.
 \label{eq:regret}
\end{equation}
\end{lemma}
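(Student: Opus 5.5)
The plan is the standard one-step potential calculation for exponential weights, followed by telescoping. Fix $Q$ and set $\Phi_t=D_{\mathrm{KL}}(Q\|P_t)$. Throughout, $P_t>0$ because each update multiplies by a positive factor and starts from the uniform $P_1$. Let $Z_t=\sum_{u}P_t(u)e^{-\lambda\langle g_t,u\rangle}$. Directly from \eqref{eq:exp-update},
\[
 \Phi_t-\Phi_{t+1}=\sum_vQ(v)\log\frac{P_{t+1}(v)}{P_t(v)}
 =-\lambda\langle g_t,\bar v_Q\rangle-\log Z_t .
\]
This identity holds for each update, whatever the loss vector. Adaptivity is harmless: $g_t$ may depend on the entire history, but the identity is algebraic and is applied round by round.

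The key step is an upper bound for $\log Z_t$. Write $Z_t=e^{-\lambda\langle g_t,w_t\rangle}\,\mathbb{E}_{v\sim P_t}e^{-\lambda\langle g_t,v-w_t\rangle}$. Under $v\sim P_t$, the random variable $X=\langle g_t,v\rangle$ has mean $\langle g_t,w_t\rangle$. Every $v\in\mathcal V$ has the form $\pm\sqrt r\,e_k$, so $X$ lies in the interval $[-\sqrt r\|g_t\|_\infty,\sqrt r\|g_t\|_\infty]$. Hoeffding's lemma then gives
\[
 \log\mathbb{E}e^{-\lambda(X-\mathbb{E}X)}\leq\frac{\lambda^2(2\sqrt r\|g_t\|_\infty)^2}{8}=\frac{\lambda^2r\|g_t\|_\infty^2}{2}.
\]
Hence $\log Z_t\leq-\lambda\langle g_t,w_t\rangle+\frac{\lambda^2r}{2}\|g_t\|_\infty^2$. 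Substituting into the identity yields
\[
 \lambda\langle g_t,w_t-\bar v_Q\rangle\leq\Phi_t-\Phi_{t+1}+\frac{\lambda^2r}{2}\|g_t\|_\infty^2 .
\]

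Finally, sum over the finitely many updates $t=1,\dots,T$. The potential differences telescope to $\Phi_1-\Phi_{T+1}\leq D_{\mathrm{KL}}(Q\|P_1)$, using the nonnegativity of relative entropy shown above. Dividing by $\lambda$ gives \eqref{eq:regret}. Queries at which no update occurs contribute nothing, since both $w_t$ and $P_t$ are unchanged there.

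The only point needing care is the constant in the variance term. With Hoeffding's lemma applied to the range of length $2\sqrt r\|g_t\|_\infty$, it comes out exactly as $\frac{\lambda r}{2}$. For a self-contained argument, I would include the short proof of Hoeffding's lemma, via convexity of $e^{-\lambda x}$ on the interval and a second-order bound on the log-moment function. A cruder route would bound $e^{-u}\leq1-u+u^2$ for $|u|\leq1$, but that requires $\lambda\sqrt r\|g_t\|_\infty\leq1$ and gives a worse constant. So Hoeffding's lemma is the intended tool.
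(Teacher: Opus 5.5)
Your proof is correct and matches the paper's argument step for step: the exact one-step identity for $D_{\mathrm{KL}}(Q\|P_{t+1})-D_{\mathrm{KL}}(Q\|P_t)$, Hoeffding's lemma applied to $\langle g_t,v\rangle$ on an interval of length $2\sqrt r\|g_t\|_\infty$, and telescoping with the final relative entropy discarded by nonnegativity. The only cosmetic difference is that the paper proves Hoeffding's lemma by bounding the variance of the exponentially tilted law, rather than by the convexity argument you mention.
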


\begin{proof}
Under $P_t$, the random variable $\langle g_t,v\rangle$ ranges over an interval of length at most $2\sqrt r\|g_t\|_\infty$. Hoeffding's lemma therefore gives
\[
 \log\mathbb{E}_{P_t}e^{-\lambda\langle g_t,v\rangle}
 \leq
 -\lambda\langle g_t,w_t\rangle
 +\frac{\lambda^2r}{2}\|g_t\|_\infty^2.
\]
Let us compare the relative entropy before and after this update. Write
\[
 Z_t=\sum_{v\in\mathcal{V}}P_t(v)e^{-\lambda\langle g_t,v\rangle}.
\]
The update identity gives
\begin{align*}
 D_{\mathrm{KL}}(Q\|P_{t+1})-D_{\mathrm{KL}}(Q\|P_t)
 &=\lambda\langle g_t,\bar v_Q\rangle+\log Z_t\\
 &\leq-\lambda\langle g_t,w_t-\bar v_Q\rangle
       +\frac{\lambda^2r}{2}\|g_t\|_\infty^2.
\end{align*}
After rearranging, we sum over $t$. The relative-entropy differences telescope, and the final relative entropy can be discarded because it is nonnegative. This proves \eqref{eq:regret}. Nothing in the calculation requires the loss vectors to be chosen in advance: they may depend on all previous states of the predictor.

The form of Hoeffding's lemma used here also has a short proof. For a bounded random variable $U$, put $F(s)=\log\mathbb{E}e^{sU}$. Under the exponentially tilted law, $F''(s)$ is the variance of $U$. If $U$ lies in an interval of length $a$, this variance is at most $\frac{a^2}{4}$: subtract the midpoint and use $\operatorname{Var}(U)\leq\mathbb{E}(U-\text{midpoint})^2$. Integrating twice gives
\[
 \log\mathbb{E}e^{sU}\leq s\mathbb{E}U+\frac{s^2a^2}{8}.
\]
\end{proof}

\section{Choosing the comparison distribution}
\label{sec:comparator}

We now choose the distribution that will be compared with the predictor. Its average must be $y$, since the margin estimate measures the difference between $w_t$ and $y$. There are many distributions with this average. The point is to use both $\|y\|_2=1$ and $\|y\|_1\leq\sqrt r$ to obtain $\log\frac{2en}{r}$ in place of the general bound $\log(2n)$. When $\|y\|_1$ is close to $\sqrt r$, the identity $\|y\|_2=1$ forces the probability masses proportional to $|y_k|$ to have small sum of squares. This gives the entropy lower bound we need. When $\|y\|_1$ is smaller, we can mix with the uniform distribution while keeping the required average. The following construction uses these two observations for every vector in $\mathcal K_r$.

\begin{lemma}[Entropy of the comparison distribution]
\label{lem:norm-comparator}
For every $y\in\mathcal K_r$, there is a probability distribution $Q_y$ on $\mathcal{V}$ whose barycenter is $y$ and such that
\begin{equation}
 D_{\mathrm{KL}}(Q_y\|P_1)
 \leq L_{n,r}.
 \label{eq:norm-KL}
\end{equation}
\end{lemma}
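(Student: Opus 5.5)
The plan is to write down $Q_y$ explicitly as a mixture of two pieces. The first piece reproduces $y$ from its coordinates. The second is the uniform distribution $P_1$, which has barycenter $0$ and fills the remaining mass. We then bound the relative entropy by convexity together with the elementary inequality between Shannon and collision entropy.

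\emph{Construction.} Put $s=\|y\|_1$. Since $\|y\|_2=1$, we have $1\leq s\leq\sqrt r$. Set $\alpha=\frac{s}{\sqrt r}\in(0,1]$. Define $Q^{(1)}$ on $\mathcal V$ by
\[
 Q^{(1)}\bigl(\operatorname{sgn}(y_k)\sqrt r\,e_k\bigr)=\frac{|y_k|}{s}
\]
for $y_k\neq0$, and zero elsewhere. Its barycenter is $\frac{\sqrt r}{s}\,y$. Let
\[
 Q_y=\alpha Q^{(1)}+(1-\alpha)P_1 .
\]
Since $\mathcal V$ is symmetric under $v\mapsto-v$, the barycenter of $P_1$ is $0$. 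Hence $\bar v_{Q_y}=\alpha\cdot\frac{\sqrt r}{s}y=y$, as required.

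\emph{Entropy bound.} Relative entropy is convex in its first argument, and $D_{\mathrm{KL}}(P_1\|P_1)=0$. Therefore
\[
 D_{\mathrm{KL}}(Q_y\|P_1)\leq\alpha\,D_{\mathrm{KL}}(Q^{(1)}\|P_1)=\alpha\bigl(\log(2n)-H(Q^{(1)})\bigr),
\]
where $H$ is Shannon entropy. By Jensen's inequality applied to the concave logarithm,
\[
 H(Q^{(1)})\geq-\log\sum_k\frac{|y_k|^2}{s^2}=\log s^2=\log(\alpha^2r).
\]
This is the point where both normalizations enter: the $\ell^2$ normalization makes the collision probability equal to $s^{-2}$. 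Consequently
\[
 D_{\mathrm{KL}}(Q_y\|P_1)\leq\alpha\log\frac{2n}{r}+2\alpha\log\frac1\alpha .
\]

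\emph{Final estimate.} Since $r\leq n$, we have $\log\frac{2n}{r}>0$. Using $\alpha\leq1$ in the first term and $\sup_{0<\alpha\leq1}\alpha\log\frac1\alpha=\frac1e$ in the second, the bound is at most
\[
 \log\frac{2n}{r}+\frac2e\leq\log\frac{2n}{r}+1=L_{n,r}.
\]

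The only step needing care is the choice of mixture weight. Taking the mixture is what handles the case where $\|y\|_1$ is much smaller than $\sqrt r$: then $H(Q^{(1)})$ alone may be small, but the factor $\alpha$ compensates, and the trade-off $\alpha\log(1/\alpha)$ is uniformly bounded. I do not expect any further obstacle.
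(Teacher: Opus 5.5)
Your proof is correct and follows the paper's argument essentially step for step. You use the same mixture of the sign-weighted distribution with $P_1$, with weight $\|y\|_1/\sqrt r$, and the same convexity of relative entropy in its first argument. You also use the same Jensen bound $H(Q^{(1)})\geq\log\|y\|_1^2$ and the same final estimate $2\alpha\log\frac1\alpha\leq\frac2e<1$.
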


\begin{proof}
Put $R=\sqrt r$ and $L_1=\|y\|_1$, so $1\leq L_1\leq R$. First distribute mass among the coordinate vectors with the signs of the corresponding coordinates of $y$:
\[
 Q_0(\operatorname{sgn}(y_k)Re_k)=\frac{|y_k|}{L_1}
 \qquad (y_k\neq0).
\]
As usual, $\operatorname{sgn}(y_k)$ is $+1$ or $-1$ according to the sign of $y_k$. The average of $Q_0$ is $\frac{R}{L_1}y$. We can bring it back to $y$ by mixing with $P_1$, whose average is zero. Set $\theta=\frac{L_1}{R}$ and define
\[
 Q_y=\theta Q_0+(1-\theta)P_1.
\]
Then $Q_y$ has the required average. Convexity of $t\log t$ gives convexity of relative entropy in its first argument, and therefore
\begin{equation}
 D_{\mathrm{KL}}(Q_y\|P_1)\leq\theta D_{\mathrm{KL}}(Q_0\|P_1).
 \label{eq:KL-mixture}
\end{equation}
Set $\alpha_k=\frac{|y_k|}{L_1}$ for $1\leq k\leq n$, and interpret $0\log0$ as zero. Then
\[
 D_{\mathrm{KL}}(Q_0\|P_1)=\log(2n)-\mathcal{H}(\alpha),
\]
where $\mathcal{H}(\alpha)=-\sum_k\alpha_k\log\alpha_k$ is the Shannon entropy of the probability vector $\alpha$. Jensen's inequality for the concave logarithm, applied over the indices with $\alpha_k>0$, gives
\[
 \sum_k\alpha_k\log\alpha_k\leq\log\sum_k\alpha_k^2.
\]
Consequently,
\[
 \mathcal{H}(\alpha)\geq-\log\sum_k\alpha_k^2
 =-\log\frac{\|y\|_2^2}{L_1^2}
 =\log L_1^2.
\]
It follows that
\[
 rD_{\mathrm{KL}}(Q_y\|P_1)
 \leq RL_1\log\frac{2n}{L_1^2}.
\]
Writing $t=\frac{L_1}{R}\in(0,1]$ gives
\[
 RL_1\log\frac{2n}{L_1^2}
 =r t\left(\log\frac{2n}{r}+2\log\frac{1}{t}\right)
 \leq r\left(\log\frac{2n}{r}+\frac{2}{e}\right).
\]
After division by $r$, the last expression is at most $L_{n,r}$, since $\frac{2}{e}<1$. This proves \eqref{eq:norm-KL}. The only information used about the coordinates was $L_1\leq\sqrt r$ and $\sum_k|y_k|^2=1$. In particular, the entropy estimate does not require any coordinate to vanish.
\end{proof}

\begin{proposition}[The total cost of the corrections]
\label{prop:mistake-budget}
Let $d_j^A(y)$ be the number of incorrect stable predictions at level $j$ when the procedure is run on $A$. Then
\begin{equation}
 D_A(y):=\sum_{j\in\mathcal{J}}b_jd_j^A(y)
 \leq\frac{32}{3\eta^2}rL_{n,r}.
 \label{eq:mistake-budget}
\end{equation}
The estimate is uniform in the choice and order of $A$.
\end{proposition}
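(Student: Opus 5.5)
We apply Lemma~\ref{lem:regret} with $Q=Q_y$ from Lemma~\ref{lem:norm-comparator}, summing over all updates $t$ in the whole run on $A$ (all levels, in the prescribed order). The loss vectors $g_t$ are adaptive, which the lemma allows. The left side of \eqref{eq:regret} then becomes $\sum_t\langle g_t,w_t-y\rangle$, and the first term on the right is at most $L_{n,r}/\lambda$. The plan is to bound each summand on the left from below, and each $\|g_t\|_\infty^2$ from above, by multiples of $b_j$, where $j$ is the level of the correction $t$.

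\textbf{Lower bound.} At a correction made at level $j$, let the predicted label be $\widehat c$ and the true stable label be $c$. Since the scores are affine in $w$, we can write $\langle g_t,w\rangle=a_j\bigl(s_{\widehat c}(w)-s_c(w)\bigr)$ for the label-dependent parts. The constant score $s_0=a_j$ cancels in these differences, which is consistent with $\phi_0=0$.

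Because the predictor chose $\widehat c$, we have $s_{\widehat c}(w_t)\geq s_c(w_t)$, so $\langle g_t,w_t\rangle\geq0$. By Lemma~\ref{lem:stable-margin}, $s_c(y)-s_{\widehat c}(y)\geq\eta a_j$, so $-\langle g_t,y\rangle\geq\eta a_j^2$. Together these give $\langle g_t,w_t-y\rangle\geq\eta b_j$.

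\textbf{Upper bound.} Since $|H_{ik}|=1$, we have $\|\phi_{\widehat c}-\phi_c\|_\infty\leq2$. Hence $\|g_t\|_\infty^2\leq4b_j$; this worst case arises from the pair $(+,-)$, while pairs involving $0$ give only $b_j$.

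\textbf{Combining.} Inserting both bounds into \eqref{eq:regret} gives
\[
 \eta\sum_j b_jd_j\leq\frac{L_{n,r}}{\lambda}+2\lambda r\sum_j b_jd_j .
\]
With $\lambda=\frac{\eta}{8r}$, the last term equals $\frac{\eta}{4}D_A(y)$, so
\[
 \frac{3\eta}{4}D_A(y)\leq\frac{8r}{\eta}L_{n,r},
 \qquad\text{that is,}\qquad
 D_A(y)\leq\frac{32}{3\eta^2}rL_{n,r}.
\]
Rearranging is legitimate because the number of updates is finite: there are at most $ML$ queries.

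This matches \eqref{eq:mistake-budget} exactly, so the constants check out. Uniformity in $A$ and in the order of its rows is immediate, since Lemma~\ref{lem:regret} holds for any adaptive sequence of loss vectors, and $Q_y$ and $P_1$ do not depend on $A$.

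The only delicate point is the sign bookkeeping in the lower bound. We must confirm that tie-breaking still gives $\langle g_t,w_t\rangle\geq0$; it does, since ties are allowed there. We must also confirm that the $\infty$-norm constant $4$ is compatible with the stated $\frac{32}{3}$; by the computation above it is.
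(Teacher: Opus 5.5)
Your overall route is the paper's, and your constants are right. You apply Lemma~\ref{lem:regret} with the comparator $Q_y$, summed over every update at every level. You then use a gain of $\eta b_j$ per correction and $\|g_t\|_\infty\le 2a_j$, and absorb $2\lambda r D_A(y)=\frac{\eta}{4}D_A(y)$.

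The lower-bound step, however, is incorrect as written, and it is exactly the point you flagged. The zero label has score $s_0(w)=a_j$, not $\langle\phi_0,w\rangle=0$. Write $\kappa_0=a_j$ and $\kappa_\pm=0$. Then $s_{\widehat c}(w)-s_c(w)=a_j^{-1}\langle g_t,w\rangle+\kappa_{\widehat c}-\kappa_c$. When exactly one of the two labels is $0$, this constant does not cancel inside a single difference, and each of your two separate claims fails in some case:
\begin{itemize}
\item If $\widehat c=0$ and $c=+$, the prediction only gives $|\langle h_i,w_t\rangle|\le a_j$. So $\langle g_t,w_t\rangle=-a_j\langle h_i,w_t\rangle$ is negative whenever $0<\langle h_i,w_t\rangle\le a_j$, and $\langle g_t,w_t\rangle\ge0$ is false.
\item If $\widehat c=+$ and $c=0$, then $-\langle g_t,y\rangle=-a_jx_i$. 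This is negative whenever $0<x_i\le(1-\eta)a_j$, so $-\langle g_t,y\rangle\ge\eta b_j$ is false.
\end{itemize}
Your check of tie-breaking does not address this issue.

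The repair is the paper's one line. Add the two full-score inequalities $s_{\widehat c}(w_t)-s_c(w_t)\ge0$ and $s_c(y)-s_{\widehat c}(y)\ge\eta a_j$ before passing to inner products. The constant $\kappa_{\widehat c}-\kappa_c$ enters them with opposite signs and cancels in the sum. This leaves $\langle\phi_{\widehat c}-\phi_c,w_t-y\rangle\ge\eta a_j$, that is, $\langle g_t,w_t-y\rangle\ge\eta b_j$.

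So the inequality you finally used is true, but it does not follow from the two intermediate claims you stated. In the cases involving $0$, the true lower bounds for your two pieces differ from yours by $b_j$ in opposite directions, so only their sum is controlled. With the combined inequality in place of the split one, the rest of your argument is complete.
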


\begin{proof}
Every incorrect stable prediction contributes a definite amount to the left side of \eqref{eq:regret}. Indeed, the predicted label maximizes the score at $w_t$, while the correct label has the margin from Lemma~\ref{lem:stable-margin} at $y$. Adding these two inequalities gives
\[
 (s_{\widehat c_t}(w_t)-s_{c_t}(w_t))
 +(s_{c_t}(y)-s_{\widehat c_t}(y))\geq\eta a_j.
\]
The constant term $a_j$ in the zero-label score cancels between these differences. Thus
\[
 \langle\phi_{\widehat c_t}-\phi_{c_t},w_t-y\rangle\geq\eta a_j.
\]
After multiplying by $a_j$,
\begin{equation}
 \langle g_t,w_t-y\rangle\geq\eta b_j.
 \label{eq:margin-gain}
\end{equation}
Moreover, $\|h_i\|_\infty\leq1$, so
\begin{equation}
 \|g_t\|_\infty\leq2a_j.
 \label{eq:g-bound}
\end{equation}
Apply Lemma~\ref{lem:regret} with the comparator $Q_y$ from Lemma~\ref{lem:norm-comparator} and $\lambda=\frac{\eta}{8r}$. Summing only over update rounds and using \eqref{eq:margin-gain}--\eqref{eq:g-bound},
\[
 \eta D_A(y)
 \leq
 \frac{8r}{\eta}D_{\mathrm{KL}}(Q_y\|P_1)
 +\frac{\eta}{16}\sum_t\|g_t\|_\infty^2
 \leq
 \frac{8r}{\eta}D_{\mathrm{KL}}(Q_y\|P_1)
 +\frac{\eta}{4}D_A(y).
\]
Moving the final term to the left and applying \eqref{eq:norm-KL} proves the claim. Equivalently, the one-step calculation in Lemma~\ref{lem:regret} gives, with $\Phi_t=D_{\mathrm{KL}}(Q_y\|P_t)$,
\[
 \Phi_{t+1}-\Phi_t
 \leq-\frac{\eta^2b_j}{8r}+\frac{\eta^2b_j}{32r}
 =-\frac{3\eta^2b_j}{32r}.
\]
Thus each correction uses a definite amount of one nonnegative potential. The sum has included every update at every amplitude level, so this is one bound for the whole procedure.
\end{proof}

\section{Recovering the squared amplitudes from the records}
\label{sec:decoder}

We have bounded the cost of the mistakes. We next show that their record contains enough information for the approximation we need. For a fixed $y$ and level $j$, let $C_j^A(y)\subseteq A$ be the positions of the incorrect stable predictions, and store the correct label in $\{+,-,0\}$ at each of them.

Here the encoder means the target-dependent procedure from Section~\ref{sec:predictor}. The decoder is given the matrix $H$, the parameters $r,\eta$, the row set $A$ with its order, the fixed rule for ties, and these records, but is not given $y$. It starts from $P_1$ and runs the same predictor. At an unrecorded position it returns the predicted label and makes no update. At a recorded position it first computes the predicted label from its current state, then returns the stored label and performs \eqref{eq:exp-update}. The recorded position identifies the row and the level, so it determines $h_i$ and $a_j$. Together with the computed prediction and the stored correct label, these determine $g_t$. No coefficients of $y$, support information, predictor weights, or additional real parameters have to be stored.

\begin{lemma}[Agreement of the two procedures]
\label{lem:decoder}
The encoder and decoder have identical predictor states at every query. Therefore the decoded label agrees with the target label at every stable query.
\end{lemma}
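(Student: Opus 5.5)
The plan is an induction over the queries in the fixed processing order: levels $j=j_+,j_+-1,\ldots,j_-$, and within each level the rows of $A$ in their fixed order. The inductive hypothesis is that, just before a given query $(i,j)$, the encoder and the decoder hold the same distribution $P_t$ with the same update count $t$. At the first query both hold the uniform distribution $P_1$, and $t=1$ on both sides. This settles the base case, because neither procedure depends on $y$ before it has made any update.

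For the inductive step we fix a query $(i,j)$ with common state $P_t$. Both procedures compute the same predictor $w_t=\mathbb{E}_{v\sim P_t}v$. Both also evaluate the same scores \eqref{eq:scores}, since $h_i$ and $a_j$ depend only on the query, and they break ties by the same fixed rule. Hence both arrive at the same predicted label $\widehat c$. We then split into three cases.

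\begin{itemize}
\item The query is stable and the prediction is correct. The encoder makes no update and the position is not recorded, so the decoder returns $\widehat c$, which is the true label, and also makes no update.
\item The query is ambiguous, with label $*$. The encoder makes no update and records nothing. The decoder sees an unrecorded position, returns $\widehat c$, and makes no update. The states stay equal, and no claim about agreement of labels is needed here.
\item The query is stable and the prediction is wrong. The position then lies in $C_j^A(y)$ and carries the stored correct label $c_t$. The decoder returns $c_t$, which is the target label. Both sides form $g_t=a_j(\phi_{\widehat c}-\phi_{c_t})$ from the same data and apply \eqref{eq:exp-update} with the same $\lambda$, so both obtain the same $P_{t+1}$.
\end{itemize}

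The only point that needs care is that the decoder's rule keys on whether a position is recorded, while the encoder's rule keys on stability and correctness. We therefore have to check that these coincide: a position is recorded if and only if it is a stable mispredicted query. This holds by the definition of $C_j^A(y)$ together with the encoder rule that ambiguous and correctly predicted queries are never recorded. It also matters that the record is indexed by the pair (row, level), so the decoder knows at which query to expect each correction. With that check, induction gives identical states at every query. Agreement of labels at stable queries then follows from the first and third cases.
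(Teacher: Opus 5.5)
Your proposal is correct and follows the paper's argument: induct on the queries in their fixed order. Identical states give identical predictions, recorded mistakes give identical updates, and every other query leaves both states unchanged; then unrecorded stable queries were already correct and recorded ones are corrected by the stored label. Your explicit check that recorded positions are exactly the stable mispredicted queries, together with your remarks on ties and on indexing by (row, level), only spells out what the paper leaves implicit.
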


\begin{proof}
Both procedures start from $P_1$. If their states agree before a query, they make the same prediction. At a recorded mistake they then make the same update, and at every other query both states remain unchanged. Induction proves agreement throughout. An unrecorded stable prediction was already correct, and a recorded one is corrected by the stored label.
\end{proof}

Let $\widehat c_y^A(i,j)$ denote the decoded label. For each row we use the largest level with a nonzero decoded label as our estimate of its amplitude. Thus define
\[
 J_y^A(i)=\max\{j\in\mathcal{J}:\widehat c_y^A(i,j)\in\{+,-\}\}
\]
when this set is nonempty. Put
\begin{equation}
 v_i^A(y)=
 \begin{cases}
 b_{J_y^A(i)},&J_y^A(i)\text{ is defined},\\
 0,&J_y^A(i)\text{ is undefined},
 \end{cases}
 \label{eq:decoded-v}
\end{equation}
and define the disjoint bands
\begin{equation}
 I_j^A(y)=\{i\in A:J_y^A(i)=j\}.
 \label{eq:bands}
\end{equation}
The decoded labels need not be monotone as the level changes. Taking the largest nonzero level assigns each row to at most one band and will also make that band depend only on the records already processed. Then
\begin{equation}
 v^A(y)=\sum_{j\in\mathcal{J}}b_j1_{I_j^A(y)}
 \quad\text{on }A.
 \label{eq:v-band-decomp}
\end{equation}

\begin{lemma}[Accuracy of the squared-amplitude approximation]
\label{lem:quantization}
For every $A\subseteq[n]$, every $y\in\mathcal K_r$, and every $i\in A$,
\begin{equation}
 \left|v_i^A(y)-z_i(y)\right|
 \leq C\eta z_i(y)+C\eta^2.
 \label{eq:pointwise-quantization}
\end{equation}
Consequently, if
\[
 V_A(y)=\sum_{i\in A}v_i^A(y),
 \qquad
 Z_A(y)=\sum_{i\in A}z_i(y),
\]
then
\begin{equation}
 V_A(y)\leq(1+C\eta)Z_A(y)+C\eta^2|A|.
 \label{eq:VA-ZA}
\end{equation}
\end{lemma}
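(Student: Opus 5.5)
The plan is to combine Lemma~\ref{lem:decoder} with the geometry of the level sequence. Fix $i\in A$ and write $x_i=(Hy)_i$, so $z_i(y)=x_i^2$ and $|x_i|\leq\sqrt r$ by \eqref{eq:basic-energy}. By Lemma~\ref{lem:decoder}, at every stable query the decoded label equals $c_y(i,j)$. At an ambiguous query the decoded label is arbitrary, but ambiguity only occurs when $(1-\eta)a_j<|x_i|<(1+\eta)a_j$. Since consecutive levels differ by the factor $1+\eta$, at most a bounded number of levels (about three, since $\eta<\frac1{16}$) can be ambiguous for a given $x_i$.

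\textbf{Case $|x_i|$ large.} Suppose $|x_i|\geq(1+\eta)a_{j_-}$. Let $j^*$ be the largest level with $(1+\eta)a_j\leq|x_i|$. This level exists and is at most $j_+$ because $a_{j_+}\geq\sqrt r/(1-\eta)>|x_i|$. At $j^*$ the label is stable and nonzero, so $J_y^A(i)\geq j^*$. Conversely, any $j$ with $(1-\eta)a_j\geq|x_i|$ is stable with label $0$. Hence any $j$ with $\widehat c\in\{+,-\}$ must satisfy $(1-\eta)a_j<|x_i|$. It follows that
\[
 \frac{|x_i|}{1+\eta}\cdot\frac{1}{1+\eta}\leq a_{J}<\frac{|x_i|}{1-\eta},
\]
where the lower bound uses maximality of $j^*$, namely $(1+\eta)a_{j^*+1}>|x_i|$. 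Squaring gives $|b_J-x_i^2|\leq C\eta x_i^2$.

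\textbf{Case $|x_i|$ small.} Suppose $|x_i|<(1+\eta)a_{j_-}\leq\frac{\eta(1+\eta)}{4}$. Then either $J$ is undefined and $v_i=0$, or $J$ is some level with $(1-\eta)a_J<|x_i|$. In the latter case $b_J\leq x_i^2/(1-\eta)^2\leq C\eta^2$. In both cases $|v_i-z_i|\leq C\eta^2$. Note that in this case $a_{j_-}$ is defined so that the levels reach down to order $\eta$; this is exactly what produces the additive $C\eta^2$ term.

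Summing \eqref{eq:pointwise-quantization} over $i\in A$ gives \eqref{eq:VA-ZA}, since $v_i\leq z_i+C\eta z_i+C\eta^2$.

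The main point to check carefully is the upper bound on $J$. Labels need not be monotone in $j$, so the argument must not assume monotonicity. It only uses that every level with $(1-\eta)a_j\geq|x_i|$ is stable with label $0$ and is therefore decoded correctly. This is what prevents a spurious high band. No other obstacle is expected.
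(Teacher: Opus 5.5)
Your proof is correct and follows essentially the same route as the paper. Both use decoder agreement at stable queries, the fact that a decoded nonzero level satisfies $(1-\eta)a_J<|x_i|$, and the geometric spacing of the levels to get $(1+\eta)^{-4}z_i(y)<b_J<(1-\eta)^{-2}z_i(y)$, with $a_{j_-}\leq\frac{\eta}{4}$ supplying the additive $C\eta^2$ term. The only differences are cosmetic: you split on the size of $|x_i|$ and bound $a_J$ from below via a stable nonzero level $j^*\leq J$, whereas the paper splits on whether $v_i^A(y)>0$ and uses that level $J+1$ is decoded as zero, so it cannot have a stable nonzero label.
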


\begin{proof}
Fix the row and put $t=|(Hy)_i|$. The top query has stable label $0$ by \eqref{eq:level-endpoints} and $t\leq\sqrt r$, so the decoder returns $0$ there. Suppose first that $v_i^A=b_j>0$. Since the decoder returned a nonzero label at level $j$, that query cannot have had stable label $0$. Hence $t>(1-\eta)a_j$. The next level exists, and its decoded label is zero by the choice of $j$. It cannot have had a stable nonzero label, so
\[
 t<(1+\eta)a_{j+1}=(1+\eta)^2a_j.
\]
Thus
\[
 (1+\eta)^{-4}t^2<v_i^A<(1-\eta)^{-2}t^2.
\]
For $0<\eta<\frac{1}{16}$, the mean value theorem gives
\[
 1-(1+\eta)^{-4}\leq4\eta,
 \qquad (1-\eta)^{-2}-1\leq3\eta.
\]
These bounds give the relative error in \eqref{eq:pointwise-quantization}. If $v_i^A=0$, the decoder returned zero even at the bottom level. That query could not have had a stable nonzero label, and therefore
\[
 t<(1+\eta)a_{j_-}\leq\frac{\eta}{2},
\]
so $t^2\leq\frac{\eta^2}{4}$. This accounts for the additive error at the smallest amplitudes. Summing over $A$ proves \eqref{eq:VA-ZA}.
\end{proof}

\section{Counting the correction records}
\label{sec:counting}

The approximation has reduced our task to counting finite records. For a level with $d$ corrections, we choose $d$ positions from the $M$ rows of $A$ and assign a label at each position. Thus the count depends on $M$. Later $A$ will come from the two samples, and this dependence will give the second logarithm in the row bound. The quantity we estimate below is a weighted sum of logarithms of record counts. It is not the logarithm of the total number of complete records. Section~\ref{sec:signs} explains why this weighted quantity is exactly what the signed-sum estimate requires.

For $0\leq d\leq M$, define
\begin{equation}
 N_M(d)=\binom Md3^d,
 \qquad
 h_M(d)=\log N_M(d).
 \label{eq:sample-code-count}
\end{equation}
Then
\begin{equation}
 h_M(d)\leq d\log\frac{3eM}{d}
 \qquad(d\geq1),
 \label{eq:binomial-entropy}
\end{equation}
with $h_M(0)=0$. Indeed, $\binom Md\leq\frac{M^d}{d!}$ and
\[
 \log(d!)=\sum_{k=1}^d\log k
 \geq\int_1^d\log t\,dt=d\log d-d+1,
\]
so $d!\geq(\frac{d}{e})^d$.

\begin{lemma}[A weighted bound for the number of records]
\label{lem:sample-entropy}
Let $d_j=d_j^A(y)$ and $D=\sum_jb_jd_j$. Then
\begin{equation}
 \sum_{j\in\mathcal{J}}b_jh_M(d_j)
 \leq
 D\log\frac{3eM\sum_jb_j}{D},
 \label{eq:weighted-log-sum}
\end{equation}
with both sides interpreted as zero when $D=0$.

In particular, if $M\leq4m$, then
\begin{equation}
 \sum_jb_jh_M(d_j)
 \leq
 \frac{C}{\eta^2}rL_{n,r}
 \log\left(e+\frac{m}{L_{n,r}}\right).
 \label{eq:weighted-entropy-final}
\end{equation}
\end{lemma}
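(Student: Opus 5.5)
We start from the pointwise bound \eqref{eq:binomial-entropy}, $h_M(d_j)\leq d_j\log\frac{3eM}{d_j}$ for $d_j\geq1$, with the terms for $d_j=0$ contributing zero. Then we apply Jensen's inequality for the concave function $\log$ with respect to the probability weights $\mu_j=\frac{b_jd_j}{D}$ on the levels with $d_j\geq1$:
\[
 \sum_jb_jd_j\log\frac{3eM}{d_j}
 =D\sum_j\mu_j\log\frac{3eM\,b_j}{b_jd_j}
 \leq D\log\Big(\sum_j\mu_j\frac{3eMb_j}{b_jd_j}\Big)
 =D\log\frac{3eM\sum_{d_j\geq1}b_j}{D}.
\]
Enlarging the inner sum to all of $\mathcal J$ only increases the logarithm, and this gives \eqref{eq:weighted-log-sum}. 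The case $D=0$ is trivial. This is the standard "log-sum" step, and the only care needed is to drop the zero levels before normalizing.

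For \eqref{eq:weighted-entropy-final}, we insert $M\leq4m$ and $\sum_jb_j\leq\frac{Cr}{\eta}$ from \eqref{eq:geometric-bounds}. Writing $K=\frac{Cmr}{\eta}$, the right side becomes at most $D\log\frac{12eK'}{D}$, where $K'\asymp\frac{mr}{\eta}$. The function $x\mapsto x\log\frac{aK'}{x}$ is increasing for $x\leq K'a/e$. So we bound $D$ by the budget $B=\frac{32}{3\eta^2}rL_{n,r}$ of Proposition~\ref{prop:mistake-budget}. This requires a small case split. If $B\leq aK'/e$, monotonicity gives at most $B\log\frac{aK'}{B}$. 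Otherwise, we use the global maximum $aK'/e\leq B$ of $x\log\frac{aK'}{x}$, which is already at most $B$ times a constant.

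Now $\frac{aK'}{B}\asymp\frac{mr/\eta}{rL_{n,r}/\eta^2}=\frac{\eta m}{L_{n,r}}\leq\frac{m}{L_{n,r}}$. Hence
\[
 B\log\frac{aK'}{B}\leq\frac{C}{\eta^2}rL_{n,r}\Big(C+\log\big(e+\tfrac{m}{L_{n,r}}\big)\Big)
 \leq\frac{C}{\eta^2}rL_{n,r}\log\big(e+\tfrac{m}{L_{n,r}}\big),
\]
where the additive constant is absorbed because $\log(e+\cdot)\geq1$.

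The only mildly delicate point is this monotonicity/case handling. The budget for $D$ is an upper bound, while $x\log(K/x)$ is not monotone globally, so the case split, or equivalently the bound $x\log\frac{K}{x}\leq B\log\frac{K}{B}+B$ for $x\leq B$, must be stated explicitly. Everything else is bookkeeping of constants depending only on $\eta$.
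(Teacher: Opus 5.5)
Your proof is correct and follows the paper's argument. It uses the same Jensen step with weights $b_jd_j/D$ over the levels with $d_j\geq1$, and then a monotonicity case split that compares the correction budget of Proposition~\ref{prop:mistake-budget} with the range where $x\mapsto x\log\frac{c}{x}$ increases. The only difference is cosmetic: the paper keeps $T=M\sum_jb_j$ symbolic and uses $D\leq T$ (from $d_j\leq M$) to stay where $u\log\frac{3eT}{u}$ increases, whereas you first substitute $M\leq4m$ and $\sum_jb_j\leq\frac{Cr}{\eta}$ and handle the remaining case through the global maximum $\frac{aK'}{e}$, which avoids $D\leq T$ altogether.
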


\begin{proof}
The weights $b_j$ allow us to use the correction estimate from the previous section directly. Set $x_j=b_jd_j$ whenever $d_j>0$. By \eqref{eq:binomial-entropy},
\[
 b_jh_M(d_j)
 \leq x_j\log\frac{3eMb_j}{x_j}.
\]
For $D>0$, apply Jensen's inequality to the logarithm with weights $\frac{x_j}{D}$, over indices with $x_j>0$. It gives
\[
 \sum_jx_j\log\frac{3eMb_j}{x_j}
 \leq D\log\left(\sum_{j:x_j>0}\frac{x_j}{D}\frac{3eMb_j}{x_j}\right)
 \leq D\log\frac{3eM\sum_jb_j}{D},
\]
which proves \eqref{eq:weighted-log-sum}.

To deduce the second estimate, put $S_b=\sum_jb_j$ and let $D_0=\frac{32}{3\eta^2}rL_{n,r}$ be the upper bound supplied by Proposition~\ref{prop:mistake-budget}. There is nothing to prove if $M=0$. Otherwise set $T=MS_b$. We know both $D\leq D_0$ and $D\leq T$, since $d_j\leq M$. We must keep both bounds in mind when estimating the right side of \eqref{eq:weighted-log-sum}. The function
\[
 f(u)=u\log\frac{3eT}{u}
\]
is increasing for $0<u\leq T$, since $f'(u)=\log\frac{3T}{u}>0$. If $T\leq D_0$, then $f(D)\leq f(T)=T\log(3e)\leq D_0\log(3e)$. If $T>D_0$, then $f(D)\leq f(D_0)$. These two cases give the convenient uniform estimate
\[
 f(D)\leq C D_0\log\left(e+\frac{T}{D_0}\right).
\]
By \eqref{eq:geometric-bounds}, $S_b\leq\frac{Cr}{\eta}$. Hence $M\leq4m$ gives $\frac{T}{D_0}\leq\frac{C\eta m}{L_{n,r}}$, which proves \eqref{eq:weighted-entropy-final}. The same estimate holds, with an adjusted absolute constant, if $M\leq8m$; we will use this form for complex transforms.
\end{proof}

\section{Random signs and the order of the levels}
\label{sec:signs}

We now introduce independent random signs on the fixed row set $A$. A Rademacher sign takes the values $-1$ and $+1$ with equal probabilities. The aim is to bound the signed sum of every decoded approximation at once.

For this purpose we allow all correction records, including those that do not arise from any target. The decoder still makes sense: at a recorded position it uses the stored label and the prescribed update. If the stored label equals the prediction, the loss vector is zero and the state does not change. Every record therefore has a definite output.

This is where the order of the levels is useful. By the time the decoder reaches level $j$, all higher levels have already been processed. The band at level $j$ is consequently determined without knowing any records below it.

\begin{lemma}[Dependence on the higher levels]
\label{lem:tail-determinism}
For every $j\in\mathcal{J}$, the decoded band $I_j^A(y)$ is determined by the correction records at levels $k\geq j$ and is independent of all records below level $j$.
\end{lemma}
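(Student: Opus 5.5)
The plan is to view the decoder of Section~\ref{sec:decoder} as a deterministic procedure indexed by the query order. The queries are processed level by level from $j_+$ down to $j_-$, and within each level in the fixed order of $A$. The state of the decoder when it reaches a query $(i,k)$ is the distribution $P_t$ current at that moment. I will show by induction along the query order that this state, and the decoded label $\widehat c_y^A(i,k)$, depend only on $H$, $r$, $\eta$, the ordered set $A$, the tie rule, and the records at levels $\geq k$. The induction runs exactly as in Lemma~\ref{lem:decoder}. The initial state $P_1$ is fixed. At each query the decoder does three things. It computes the prediction from its current state and the scores \eqref{eq:scores}, which involve only $h_i$, $a_k$ and $w_t$. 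It then looks up whether $(i,k)$ is recorded at level $k$. If it is, it outputs the stored label and applies \eqref{eq:exp-update} with $g_t$ from \eqref{eq:loss-vector}; if the stored label equals the prediction, then $g_t=0$ and the state is unchanged. Otherwise it outputs the prediction and leaves the state alone. Every ingredient is either fixed data or a record at the current level, so the claim passes to the next query. Records below a level are never consulted before that level is reached, because every query at level $k$ precedes every query at levels $<k$.

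Next I will translate the statement about labels into one about bands. By \eqref{eq:bands}, a row $i\in A$ lies in $I_j^A(y)$ exactly when two things hold: $\widehat c_y^A(i,j)\in\{+,-\}$, and $\widehat c_y^A(i,k)=0$ for every $k\in\mathcal J$ with $k>j$. This characterization is the point of taking the \emph{largest} nonzero level in the definition of $J_y^A(i)$. It involves only decoded labels at levels $k\geq j$, and by the induction these are functions of the records at levels $\geq j$. Hence $I_j^A(y)$ is a function of those records alone. If we change the records at levels $<j$ arbitrarily, with the higher records fixed, the band does not change. Since the decoder is defined for arbitrary records, including ones that no target produces, the statement holds for every record family and not only for records coming from some $y$. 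That is the form needed for the union bound in Section~\ref{sec:signs}.

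I expect no real obstacle here. The one point needing care is to say precisely what ``determined by'' means: the band is a deterministic function of the tuple of records at levels $k\geq j$, once the fixed data are given. I will also check explicitly that the decoder's behaviour at a query depends on the records only through the record at the current level, that is, through whether the current position is listed there and, if so, with which label. This is where the large-to-small ordering of the levels is used, and it is why the bands would not have this property if the levels were processed upward.
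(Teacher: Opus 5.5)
Your proposal is correct and follows the paper's argument. A row lies in the band exactly when its decoded label is nonzero at level $j$ and zero at every higher level, and the top-down processing order means all these labels are produced before any record below level $j$ is consulted. Your explicit induction on the decoder state along the query order simply spells out the paper's one-line remark to that effect.
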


\begin{proof}
A row belongs to $I_j^A(y)$ exactly when its decoded label is nonzero at level $j$ and zero at every higher level. All these labels have been generated before any lower level is visited.
\end{proof}

Fix $j$. We refer to the list of correction counts at levels $k\geq j$ as a tail profile and write
\[
 d_{\geq j}=(d_k)_{k\in\mathcal{J},\ k\geq j}.
\]
Let $\mathcal{G}_{j,d_{\geq j}}^A$ be the family of nonempty bands $I_j$ obtained from all records with these counts. The preceding lemma and the count at each level give, whenever this family is nonempty,
\begin{equation}
 \log|\mathcal{G}_{j,d_{\geq j}}^A|
 \leq
 q_j(d):=\sum_{k\geq j}h_M(d_k).
 \label{eq:tail-family-count}
\end{equation}

We need a union bound over the profiles as well as the bands within each profile. To assign a total failure probability to all these choices, define
\begin{equation}
 Z_M=\sum_{d=0}^M\frac{1}{N_M(d)},
 \qquad
 \pi_M(d)=\frac{1}{Z_MN_M(d)}.
 \label{eq:Kraft-prior}
\end{equation}
Since $N_M(d)\geq3^d$,
\begin{equation}
 1\leq Z_M\leq\sum_{d\geq0}3^{-d}=\frac{3}{2}.
 \label{eq:ZM}
\end{equation}
For a tail profile put
\[
 \pi_{\geq j}(d_{\geq j})=\prod_{k\geq j}\pi_M(d_k).
\]
Fix a desired conditional failure probability $0<\rho<\frac{1}{4}$ and assign
\begin{equation}
 p_{j,d_{\geq j}}=\frac{\rho}{L}\pi_{\geq j}(d_{\geq j}).
 \label{eq:failure-weight}
\end{equation}
For a fixed $j$, the product probabilities sum to one. The failure allowances therefore sum to $\rho$ over all levels and all profiles. Both the records and the families they produce are finite and are fixed before the signs are drawn. The profile probabilities are only a device for allocating failure probabilities; no profile is sampled. They depend on $M$ and the counts, and never on the signs or the chosen target.

\begin{proposition}[A uniform estimate for random signed sums]
\label{prop:rademacher}
Fix $A\subseteq[n]$. Let $(\sigma_i)_{i\in A}$ be independent Rademacher signs. With probability at least $1-\rho$, simultaneously for every $y\in\mathcal K_r$,
\begin{equation}
 \left|\sum_{i\in A}\sigma_i v_i^A(y)\right|
 \leq C\sqrt{V_A(y)K_A(y)},
 \label{eq:rademacher-selfnorm}
\end{equation}
where
\begin{equation}
 K_A(y)=\sum_{j:I_j^A(y)\neq\varnothing}b_jQ_j^A(y)
 \label{eq:KA-def}
\end{equation}
and, for each nonempty band,
\begin{equation}
 Q_j^A(y)
 =\log\frac{2|\mathcal{G}_{j,d_{\geq j}(y)}^A|}
 {p_{j,d_{\geq j}(y)}}.
 \label{eq:Qj}
\end{equation}
Moreover, when $M=|A|\leq4m$,
\begin{equation}
 K_A(y)
 \leq
 C\left[
 \frac{rL_{n,r}}{\eta^3}\log\left(e+\frac{m}{L_{n,r}}\right)
 +\frac{r}{\eta}\log\frac{L}{\rho}
 +\frac{r}{\eta^2}
 \right]
 \label{eq:KA-bound}
\end{equation}
uniformly in $y$.
\end{proposition}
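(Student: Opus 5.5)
By \eqref{eq:v-band-decomp}, the signed sum splits over the decoded bands:
\[
 \sum_{i\in A}\sigma_iv_i^A(y)=\sum_jb_j\,\sigma(I_j^A(y)),
 \qquad \sigma(I)=\sum_{i\in I}\sigma_i .
\]
So the plan is a union bound band by band, followed by Cauchy--Schwarz across levels.

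\textbf{Step 1: the bad event.} For each level $j$, each tail profile $d_{\geq j}$, and each band $I\in\mathcal G^A_{j,d_{\geq j}}$, Hoeffding's inequality gives
\[
 \mathbb P\bigl(|\sigma(I)|>\sqrt{2|I|\,t}\bigr)\leq2e^{-t}.
\]
Take
\[
 t=\log\frac{2|\mathcal G^A_{j,d_{\geq j}}|}{p_{j,d_{\geq j}}}.
\]
Then the union over $I$ in that family costs at most $p_{j,d_{\geq j}}$. By \eqref{eq:failure-weight} and the normalization of $\pi_{\geq j}$, these allowances sum to at most $\rho$ over all $j$ and profiles. Everything here is fixed before the signs are drawn, which Lemma~\ref{lem:tail-determinism} guarantees.

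\textbf{Step 2: the estimate on the good event.} Fix any $y$. Its bands belong to the families indexed by its own tail profiles, so
\[
 |\sigma(I_j^A(y))|\leq\sqrt{2|I_j^A(y)|\,Q_j^A(y)}
\]
for every nonempty band. Cauchy--Schwarz then gives
\[
 \sum_jb_j|\sigma(I_j)|
 \leq\sqrt2\Bigl(\sum_jb_j|I_j|\Bigr)^{1/2}\Bigl(\sum_jb_jQ_j\Bigr)^{1/2}
 =\sqrt{2V_A(y)K_A(y)}.
\]
This proves \eqref{eq:rademacher-selfnorm}.

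\textbf{Step 3: the bound \eqref{eq:KA-bound}.} By \eqref{eq:tail-family-count},
\[
 Q_j\leq\log2+\log\frac{L}{\rho}+q_j(d)+\sum_{k\geq j}\log\frac{1}{\pi_M(d_k)}.
\]
Since $\log(1/\pi_M(d))=\log Z_M+h_M(d)\leq\log\tfrac32+h_M(d)$ by \eqref{eq:ZM}, this becomes
\[
 Q_j\leq C\bigl(1+(j_+-j+1)\bigr)+\log\frac{L}{\rho}+2\sum_{k\geq j}h_M(d_k).
\]
Multiply by $b_j$ and sum over $j$, handling each term separately:
\begin{itemize}
 \item The constant and the $\log(L/\rho)$ terms contribute $\frac{Cr}{\eta}\bigl(1+\log\frac{L}{\rho}\bigr)$ by \eqref{eq:geometric-bounds}.
 \item The $(j_+-j+1)$ term contributes $\frac{Cr}{\eta^2}$ by \eqref{eq:geometric-tail-length}.
 \item The entropy term is where the reversed order of summation matters. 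Exchanging sums and using $\sum_{j\leq k}b_j\leq\frac{Cb_k}{\eta}$ from \eqref{eq:geometric-bounds},
 \[
 \sum_jb_j\sum_{k\geq j}h_M(d_k)=\sum_kh_M(d_k)\sum_{j\leq k}b_j\leq\frac{C}{\eta}\sum_kb_kh_M(d_k).
 \]
 When $M\leq4m$, Lemma~\ref{lem:sample-entropy} bounds the right side by $\frac{C}{\eta^3}rL_{n,r}\log\bigl(e+\frac{m}{L_{n,r}}\bigr)$.
\end{itemize}
Adding these gives \eqref{eq:KA-bound}.

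\textbf{Main obstacle.} The delicate point is the validity of the union bound in Step 1. Each band must be a deterministic set determined by a finite record fixed before the signs are drawn, and the per-profile allowances must be normalized so that the weights $\pi_{\geq j}$ appear inside $Q_j$. The interchange in Step 3 then ensures no extra factor of $L$ appears in the entropy term.
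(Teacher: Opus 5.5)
Your proposal is correct and follows the paper's proof essentially step for step. The matching steps are: the per-family union bound with threshold $\sqrt{2|I|Q}$ and allowance $p_{j,d_{\geq j}}$, with the allowances summing to $\rho$ via the product normalization of $\pi_{\geq j}$; Cauchy--Schwarz over the disjoint bands with $V_A(y)=\sum_j b_j|I_j^A(y)|$; and the bound on $Q_j$ through $\log(1/\pi_M(d))=\log Z_M+h_M(d)$, followed by reversing the order of summation so that $\sum_{j\le k}b_j\le Cb_k/\eta$ feeds directly into Lemma~\ref{lem:sample-entropy}.
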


\begin{proof}
For a fixed nonempty $I\subseteq A$, independence and $\cosh t\leq e^{\frac{t^2}{2}}$ give
\[
 \mathbb{E}\exp\left(t\sum_{i\in I}\sigma_i\right)
 \leq e^{\frac{|I|t^2}{2}}.
\]
Markov's inequality, optimized at $t=\frac{u}{|I|}$, bounds each one-sided tail at $u$ by $e^{-\frac{u^2}{2|I|}}$. Applying this to both signs yields
\[
 \mathbb{P}\left(\left|\sum_{i\in I}\sigma_i\right|>C\sqrt{|I|Q}\right)
 \leq 2e^{-cQ}.
\]
For a nonempty family $\mathcal G=\mathcal{G}_{j,d_{\geq j}}^A$, put $p=p_{j,d_{\geq j}}$ and $Q=\log\frac{2|\mathcal G|}{p}$. The preceding tail estimate with threshold $\sqrt{2|I|Q}$ gives failure probability at most $2e^{-Q}=\frac{p}{|\mathcal G|}$ for each $I\in\mathcal G$. The union bound over the family therefore costs at most $p$. This is the choice in \eqref{eq:Qj}. Empty families require no estimate. We can now sum over all levels and profiles, since their failure allowances total $\rho$. On the resulting event, every decoded band satisfies
\begin{equation}
 \left|\sum_{i\in I_j^A(y)}\sigma_i\right|
 \leq C\sqrt{|I_j^A(y)|Q_j^A(y)}.
 \label{eq:band-rademacher}
\end{equation}
The band decomposition \eqref{eq:v-band-decomp} and Cauchy--Schwarz now give
\begin{align*}
 \left|\sum_{i\in A}\sigma_i v_i^A(y)\right|
 &\leq C\sum_j b_j\sqrt{|I_j^A(y)|Q_j^A(y)}\\
 &=C\sum_j\sqrt{\bigl(b_j|I_j^A(y)|\bigr)\bigl(b_jQ_j^A(y)\bigr)}\\
 &\leq C\sqrt{V_A(y)K_A(y)},
\end{align*}
which is \eqref{eq:rademacher-selfnorm}. All sums in this step are over nonempty bands. If there are no such bands, then $v^A(y)=0$ and the desired estimate holds with both $V_A(y)$ and $K_A(y)$ equal to zero. In particular, the case $A=\varnothing$ is included.

It remains to check that the cost $K_A(y)$ fits the weighted count from Section~\ref{sec:counting}. From \eqref{eq:tail-family-count}, \eqref{eq:Kraft-prior}, and \eqref{eq:failure-weight},
\begin{align*}
 Q_j^A(y)
 &\leq C\left[
 1+\log\frac{L}{\rho}
 +2\sum_{k\geq j}h_M(d_k(y))
 +(j_+-j+1)\log Z_M
 \right].
\end{align*}
Multiply by $b_j$ and sum over $j$. The contribution of the higher-level records is handled by reversing the order of summation:
\[
 \sum_jb_j\sum_{k\geq j}h_M(d_k)
 =\sum_kh_M(d_k)\sum_{j\leq k}b_j
 \leq\frac{C}{\eta}\sum_kb_kh_M(d_k).
\]
Thus geometric summation returns the weighted entropy already estimated in Lemma~\ref{lem:sample-entropy}. The remaining terms are bounded by \eqref{eq:geometric-bounds}, \eqref{eq:geometric-tail-length}, and $\log Z_M\leq\log\frac{3}{2}$. Together these estimates prove \eqref{eq:KA-bound}.

The order of this calculation explains the absence of an additional factor $L$. A record at level $k$ contributes to all bands at levels $j\leq k$, but its total weight there is at most $\frac{C}{\eta}b_k$. At fixed accuracy, this is a constant multiple of its original weight in the correction estimate. The number of levels enters the failure allocation only through $\log\frac{L}{\rho}$.
\end{proof}

\section{Comparing two independent samples}
\label{sec:ghost}

The preceding estimate concerns random signs on a fixed set. To apply it to row sampling, we compare the sample with an independent copy, often called a ghost sample. Let $(\delta_i)_{i=1}^n$ and $(\delta_i')_{i=1}^n$ be independent families of Bernoulli variables with parameter $p=\frac{m}{n}\leq\frac{1}{2}$. Only the rows on which the selectors differ contribute to their difference. Accordingly, set
\begin{equation}
 A=\{i:\delta_i\neq\delta_i'\},
 \qquad
 \sigma_i=\delta_i-\delta_i'\in\{-1,+1\}\quad(i\in A).
 \label{eq:symmetric-difference}
\end{equation}
For each row, the possibilities $(\delta_i,\delta_i')=(1,0)$ and $(0,1)$ have the same probability $p(1-p)$. Independence across rows shows that, conditional on $A$, the signs $(\sigma_i)_{i\in A}$ are independent Rademachers. We order $A$ by increasing row index, so the order used by the decoder is also fixed under this conditioning. Moreover,
\begin{equation}
 M=|A|\sim\operatorname{Bin}(n,2p(1-p)),
 \qquad
 \mathbb{E} M\leq2m.
 \label{eq:M-law}
\end{equation}
The active set is therefore typically no larger than a constant multiple of $m$. More precisely, $\mathbb{E}e^{tM}\leq\exp(2m(e^t-1))$, and Markov's inequality with $t=\log2$ gives
\begin{equation}
 \mathbb{P}(M>4m)\leq e^{-(4\log2-2)m}\leq e^{-cm}.
 \label{eq:M-tail}
\end{equation}

For $y\in\mathcal K_r$, set
\begin{equation}
 X_y=\sum_i\delta_i z_i(y),
 \qquad
 X_y'=\sum_i\delta_i'z_i(y).
 \label{eq:X-Xprime}
\end{equation}
Then
\begin{equation}
 X_y-X_y'=\sum_{i\in A}\sigma_i z_i(y),
 \qquad
 Z_A(y)\leq X_y+X_y'.
 \label{eq:difference-and-mass}
\end{equation}

Figure~\ref{fig:two-samples} separates the four possible outcomes at one row. Rows selected by both samples cancel in the difference. On the remaining active rows, the two possible orientations are equally likely. Conditional on $A$, we first fix all possible correction records and their decoded outputs, and then expose the signs. This order is what allows the estimate to cover a target chosen after the sample is observed.

\begin{figure}[htbp]
\centering
\begin{tikzpicture}[font=\small, line width=0.55pt, >=Stealth]
 \node[align=center] at (0,1.55)
       {The four outcomes of the independent selectors $(\delta_i,\delta_i')$};
 \node[draw, align=center, text width=2.45cm, minimum height=1.65cm,
       inner sep=5pt] (neither) at (-4.8,0)
       {$(0,0)$\\[3pt]Neither\\[3pt]Difference $0$};
 \node[draw, align=center, text width=2.45cm, minimum height=1.65cm,
       inner sep=5pt] (both) at (-1.6,0)
       {$(1,1)$\\[3pt]Both\\[3pt]Difference $0$};
 \node[draw, align=center, text width=2.45cm, minimum height=1.65cm,
       inner sep=5pt] (positive) at (1.6,0)
       {$(1,0)$\\[3pt]First only\\[3pt]Difference $+1$};
 \node[draw, align=center, text width=2.45cm, minimum height=1.65cm,
       inner sep=5pt] (negative) at (4.8,0)
       {$(0,1)$\\[3pt]Second only\\[3pt]Difference $-1$};
 \node[draw, align=center, text width=5.4cm, minimum height=1.25cm,
       inner sep=6pt] (inactive) at (-3.2,-2.25)
       {$i\notin A$\\[3pt]No contribution to $X_y-X_y'$};
 \node[draw, align=center, text width=5.4cm, minimum height=1.25cm,
       inner sep=6pt] (active) at (3.2,-2.25)
       {$i\in A$\\[3pt]$\sigma_i=\delta_i-\delta_i'\in\{-1,+1\}$};
 \node[draw, align=center, text width=11.8cm, minimum height=2.0cm,
       inner sep=7pt] (signed) at (0,-4.6)
       {$\displaystyle X_y-X_y'=\sum_{i\in A}\sigma_i z_i(y)$\\[5pt]
        Independent symmetric signs, conditional on $A$.};
 \draw[->] (neither.south) -- ([xshift=-1.6cm]inactive.north);
 \draw[->] (both.south) -- ([xshift=1.6cm]inactive.north);
 \draw[->] (positive.south) -- ([xshift=-1.6cm]active.north);
 \draw[->] (negative.south) -- ([xshift=1.6cm]active.north);
 \draw[->] (active.south) -- ([xshift=3.2cm]signed.north);
\end{tikzpicture}
\caption{The symmetric difference converts a comparison of two samples into a signed sum on the active set $A$. The outcomes $(1,0)$ and $(0,1)$ each have probability $p(1-p)$, which gives the conditional symmetry. Once $A$ is fixed, the decoded families are fixed before the signs are exposed.}
\label{fig:two-samples}
\end{figure}

Choose
\begin{equation}
 \rho=n^{-(A_0+3)}.
 \label{eq:rho-choice}
\end{equation}
The probability estimate introduces a factor $\log n$. The next observation shows that it fits within the product of logarithms in our row bound.

\begin{lemma}[A comparison of logarithms]
\label{lem:log-absorb}
For $1\leq r\leq n$,
\begin{equation}
 \log n\leq CL_{n,r}L_r.
 \label{eq:logn-product}
\end{equation}
\end{lemma}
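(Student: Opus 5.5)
The plan is to write $\log n=\log\frac{2en}{r}+\log\frac{r}{2e}$ and split according to whether $r$ is small or large compared with $n$. Both $L_{n,r}=\log\frac{2en}{r}\geq\log(2e)>1$ and $L_r=\log(2r)\geq\log 2$ are bounded below by absolute constants, so it suffices to show that each of the two pieces of $\log n$ is bounded by a constant multiple of $L_{n,r}L_r$. The first piece is immediate: $\log\frac{2en}{r}=L_{n,r}\leq L_{n,r}\cdot\frac{L_r}{\log 2}$. For the second, when $r\leq 2e$ we have $\log\frac{r}{2e}\leq0$, and there is nothing to prove.

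When $r>2e$, we have $\log\frac{r}{2e}\leq\log(2r)=L_r$, and $L_r\leq L_{n,r}L_r$ because $L_{n,r}\geq1$. This already gives the claim with $C=1+\frac{1}{\log2}$, say.

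The argument is therefore elementary, and the only step requiring care is the endpoint $r=n$. There $L_{n,r}=\log(2e)$ is only a constant, so the factor $\log n$ must be carried entirely by $L_r=\log(2n)$. The decomposition above handles this automatically. The opposite endpoint $r=1$ gives $L_r=\log 2$ and $L_{n,r}=\log(2en)\geq\log n$, which is again consistent with the bound. I expect no genuine obstacle; the main point is simply to keep track of the lower bounds $L_{n,r}\geq1$ and $L_r\geq\log2$ in both regimes.
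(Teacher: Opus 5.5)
Your proof is correct and is essentially the paper's argument. The paper notes $L_{n,r}+L_r=\log(4en)\geq\log n$ and uses the positive lower bounds on both factors to get $L_{n,r}L_r\geq c(L_{n,r}+L_r)$; your bounds $L_{n,r}\leq L_{n,r}L_r/\log 2$ and $\log\frac{r}{2e}\leq L_r\leq L_{n,r}L_r$ do the same thing with the explicit constant $C=1+\frac{1}{\log 2}$ (the case split at $r\leq 2e$ is harmless but unnecessary, since $\log\frac{r}{2e}\leq\log(2r)$ for all $r>0$).
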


\begin{proof}
Put $a=\log\frac{2en}{r}$ and $b=\log(2r)$. Both are bounded below by positive constants, and $a+b=\log(4en)$. Hence $ab\geq c(a+b)\geq c\log n$.
\end{proof}

We can now apply the signed-sum estimate to the two samples. Its right side will involve their sampled energies. This dependence is useful: the final argument will first bound those energies and then return to the same inequality to obtain the required accuracy.

\begin{proposition}[The difference between the sampled energies]
\label{prop:ghost-difference}
There is an event $\mathcal G$ depending on $(\delta,\delta')$ with
\begin{equation}
 \mathbb{P}(\mathcal G^c)\leq \rho+e^{-cm}
 \label{eq:G-prob}
\end{equation}
such that, on $\mathcal G$, simultaneously for every $y\in\mathcal K_r$,
\begin{equation}
 |X_y-X_y'|
 \leq
 C\sqrt{K_m\bigl(X_y+X_y'+\eta^2m\bigr)}
 +C\eta(X_y+X_y')+C\eta^2m,
 \label{eq:selfnorm-ghost}
\end{equation}
where
\begin{equation}
 K_m=C_{\eta,A_0}\left[
 rL_{n,r}\log\left(e+\frac{m}{L_{n,r}}\right)
 +r\log n
 \right].
 \label{eq:Km}
\end{equation}
\end{proposition}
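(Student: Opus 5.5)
The event is \(\mathcal G=\{M\leq4m\}\cap\mathcal E\), where \(\mathcal E\) is the event of Proposition~\ref{prop:rademacher} applied conditionally on the active set \(A\) from \eqref{eq:symmetric-difference}. Conditional on \(A\) (ordered by increasing index), the signs \((\sigma_i)_{i\in A}\) are independent Rademachers. The records, decoded bands and failure allowances in Proposition~\ref{prop:rademacher} depend only on \(A\), \(r\) and \(\eta\), so that proposition gives \(\mathbb P(\mathcal E^c\mid A)\leq\rho\). Integrating over \(A\) and adding \eqref{eq:M-tail} yields \eqref{eq:G-prob}.

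On \(\mathcal G\), fix any \(y\in\mathcal K_r\) and split the difference using \eqref{eq:difference-and-mass}:
\[
 X_y-X_y'=\sum_{i\in A}\sigma_iv_i^A(y)+\sum_{i\in A}\sigma_i\bigl(z_i(y)-v_i^A(y)\bigr).
\]
The second sum is bounded pointwise by Lemma~\ref{lem:quantization}, giving at most \(C\eta Z_A(y)+C\eta^2M\). Since \(Z_A\leq X_y+X_y'\) and \(M\leq4m\), this is at most \(C\eta(X_y+X_y')+C\eta^2m\), which is the last two terms of \eqref{eq:selfnorm-ghost}.

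For the first sum, use \eqref{eq:rademacher-selfnorm}, which bounds it by \(C\sqrt{V_A(y)K_A(y)}\). Then estimate each factor.

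\textbf{The factor \(V_A\).} By \eqref{eq:VA-ZA},
\[
 V_A\leq(1+C\eta)(X_y+X_y')+C\eta^2\cdot4m\leq C(X_y+X_y'+\eta^2m).
\]

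\textbf{The factor \(K_A\).} Use \eqref{eq:KA-bound} with \(M\leq4m\). With \(\rho=n^{-(A_0+3)}\) from \eqref{eq:rho-choice},
\[
 \log\frac{L}{\rho}\leq\log L+(A_0+3)\log n.
\]
By \eqref{eq:geometric-bounds}, \(L\leq C\eta^{-1}\log\frac{2r}{\eta}\), so \(\log L\leq C_\eta\log(2r)\leq C_\eta\log(2n)\leq C_\eta\log n\) for \(n\geq2\).

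The remaining term \(r/\eta^2\) is absorbed as follows. Since \(L_{n,r}\geq1\) and \(\log(e+\cdot)\geq1\), we have \(r\leq rL_{n,r}\log(e+m/L_{n,r})\). Hence
\[
 K_A(y)\leq C_{\eta,A_0}\Bigl[rL_{n,r}\log\bigl(e+\tfrac{m}{L_{n,r}}\bigr)+r\log n\Bigr]=K_m,
\]
uniformly in \(y\).

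Multiplying the two bounds gives \(C\sqrt{K_m(X_y+X_y'+\eta^2m)}\), which is the first term of \eqref{eq:selfnorm-ghost}.

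The only point needing care is the conditioning step. The family of records is fixed by \(A\) before the signs are exposed, and the tie rule and row order depend only on \(A\). This is what makes the estimate uniform over targets chosen after sampling, and Proposition~\ref{prop:rademacher} already packages it. What remains is bookkeeping: absorbing \(\log L\) and \(r/\eta^2\) into the constant \(C_{\eta,A_0}\).
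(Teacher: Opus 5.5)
Your proposal is correct and follows the paper's proof essentially step for step: restrict to $M\leq4m$, apply Proposition~\ref{prop:rademacher} conditionally on $A$ and average over $A$, split off the quantization error with Lemma~\ref{lem:quantization}, and bound $V_A$ and $K_A$ using $M\leq4m$, $Z_A\leq X_y+X_y'$, $L\leq C_\eta\log(2r)$ and $\rho=n^{-(A_0+3)}$. The only difference is that you spell out the absorption of $\log L$ and of the $r/\eta^2$ term into $C_{\eta,A_0}$, which the paper leaves implicit.
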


\begin{proof}
Restrict first to $M\leq4m$. Conditional on $A$, Proposition~\ref{prop:rademacher} applies to every target simultaneously. In view of $L\leq C_\eta\log(2r)$ and the choice of $\rho$, the bound \eqref{eq:KA-bound} is at most the deterministic quantity $K_m$ in \eqref{eq:Km}, after increasing its constant.

By Lemma~\ref{lem:quantization},
\begin{align*}
 |X_y-X_y'|
 &=\left|\sum_{i\in A}\sigma_i z_i(y)\right|\\
 &\leq
 \left|\sum_{i\in A}\sigma_i v_i^A(y)\right|
 +C\eta Z_A(y)+C\eta^2M\\
 &\leq C\sqrt{K_mV_A(y)}+C\eta Z_A(y)+C\eta^2M.
\end{align*}
By \eqref{eq:VA-ZA}, we can replace $V_A(y)$ with a constant multiple of $Z_A(y)+\eta^2M$. Now use $M\leq4m$ and $Z_A(y)\leq X_y+X_y'$ from \eqref{eq:difference-and-mass}. This proves \eqref{eq:selfnorm-ghost}. The event $M>4m$ has probability at most $e^{-cm}$, and the conditional signed-sum estimate fails with probability at most $\rho$. Adding these gives \eqref{eq:G-prob}. More explicitly, for each deterministic $A$ with $|A|\leq4m$, the conditional failure probability is at most $\rho$. Averaging this bound over $A$ gives
\[
 \sum_{A:\,|A|\leq4m}\mathbb P(A)\,
 \mathbb P(\text{signed-sum estimate fails}\mid A)\leq\rho.
\]
Thus there is no union bound over the possible active sets. Their randomness is handled by conditioning.
\end{proof}

\section{Completing the proof of the sampling theorem}
\label{sec:completion}

The last step has a simple structure. If the first sample fails for some target, choose that target and test it against the independent sample. The independent sample has a good chance of giving an energy close to $m$. On the event from the preceding proposition, closeness for the second sample forces closeness for the first. We begin with the deterministic inequality that makes this implication precise.

\begin{lemma}[An absorption estimate]
\label{lem:absorption}
For every $0<\varepsilon<\frac{1}{4}$, there are constants $c_0,c_1>0$, depending only on the absolute constants in \eqref{eq:selfnorm-ghost}, with the following property. Suppose $u,v,m,K\geq0$ satisfy
\[
 |v-m|\leq\frac{\varepsilon m}{4}
\]
and
\begin{equation}
 |u-v|
 \leq
 C\sqrt{K(u+v+\eta^2m)}
 +C\eta(u+v)+C\eta^2m.
 \label{eq:abstract-selfnorm}
\end{equation}
If
\begin{equation}
 \eta\leq c_0\varepsilon,
 \qquad
 K\leq c_1\varepsilon^2m,
 \label{eq:absorption-hyp}
\end{equation}
then
\begin{equation}
 |u-m|\leq\varepsilon m.
 \label{eq:absorption-conclusion}
\end{equation}
\end{lemma}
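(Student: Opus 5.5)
The argument is a two-stage bootstrap: first show that $u$ is at most a constant multiple of $m$, then feed this crude bound back into \eqref{eq:abstract-selfnorm} to get the sharp conclusion. Throughout we use $v\in[\frac34 m,\frac54 m]$, which follows from $|v-m|\leq\frac{\varepsilon m}{4}$ and $\varepsilon<\frac14$, together with $\eta\leq c_0\varepsilon<\frac{1}{16}$. We may assume $m>0$; if $m=0$ then $v=0$, and taking $c_1$ small forces $K=0$, so \eqref{eq:abstract-selfnorm} becomes $u\leq C\eta u$. Once $C\eta<1$ this gives $u=0$.

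\textbf{Step 1: a crude bound on $u$.} Set $s=u+v$. From \eqref{eq:abstract-selfnorm},
\[
 u\leq v+C\sqrt{K(s+\eta^2m)}+C\eta s+C\eta^2m.
\]
Choose $c_0$ so that $C\eta\leq\frac14$. Using $\sqrt{K(s+\eta^2 m)}\leq \frac{s+\eta^2m}{8C}+2CK$, the $u$-terms on the right carry a coefficient at most $\frac12$ and can be moved to the left. Since $K\leq c_1\varepsilon^2m\leq m$ and $v\leq 2m$, this yields $u\leq C'm$ for an absolute constant $C'$. Hence $s+\eta^2m\leq C''m$.

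\textbf{Step 2: the sharp bound.} Insert $s+\eta^2m\leq C''m$ into the right side of \eqref{eq:abstract-selfnorm}:
\[
 |u-v|\leq C\sqrt{K C''m}+C\eta C''m+C\eta^2m
 \leq \bigl(C\sqrt{c_1C''}\,\varepsilon + C C''c_0\varepsilon + Cc_0^2\varepsilon^2\bigr)m.
\]
Choose $c_1$ and $c_0$ small enough that each of the three terms is at most $\frac{\varepsilon m}{4}$. Then $|u-v|\leq\frac34\varepsilon m$, and the triangle inequality with $|v-m|\leq\frac{\varepsilon m}{4}$ gives $|u-m|\leq\varepsilon m$, which is \eqref{eq:absorption-conclusion}.

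The only real point is Step 1: $u$ appears on both sides, so it must be absorbed before \eqref{eq:abstract-selfnorm} can be used quantitatively. This is handled by the AM--GM split of the square root together with the smallness of $\eta$. The constants $c_0,c_1$ depend only on $C$, since $\varepsilon$ enters linearly or quadratically and the final requirements are homogeneous in $\varepsilon m$.
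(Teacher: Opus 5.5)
Your proposal is correct and follows the paper's proof essentially step for step: first a crude bound $u+v\leq C m$, obtained by absorbing the $u$-terms via Young's inequality and the smallness of $\eta$, then reinsertion into \eqref{eq:abstract-selfnorm} to get $|u-v|\leq\frac34\varepsilon m$. One small remark on the case $m=0$: $K=0$ follows from $K\leq c_1\varepsilon^2 m$ for every choice of $c_1$, not from taking $c_1$ small.
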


\begin{proof}
If $m=0$, the assumptions force $v=K=0$, and then $u\leq C\eta u$ forces $u=0$ once $C\eta<1$. We may therefore assume $m>0$. First we bound the total energy $s=u+v$ by a constant multiple of $m$. Using $u\leq v+|u-v|$, \eqref{eq:abstract-selfnorm}, and $v\leq(1+\frac{\varepsilon}{4})m$, we obtain
\[
 s\leq C m+C\sqrt{K(s+\eta^2m)}+C\eta s+C\eta^2m.
\]
Choose $\eta$ so that the coefficient of $s$ in the third term is at most $\frac{1}{8}$. Young's inequality gives
\[
 C\sqrt{Ks}\leq\frac{s}{4}+C'K,
\]
while $\sqrt{K\eta^2m}\leq\frac{\eta(K+m)}{2}$. Under \eqref{eq:absorption-hyp}, these estimates yield
\begin{equation}
 s\leq C_2m.
 \label{eq:s-upper}
\end{equation}
With this bound available, return to \eqref{eq:abstract-selfnorm}. It now gives
\[
 |u-v|\leq C_3\sqrt{Km}+C_4\eta m.
\]
Choosing $c_0,c_1$ sufficiently small makes the right side at most $\frac{3\varepsilon m}{4}$. Since $|v-m|\leq\frac{\varepsilon m}{4}$, \eqref{eq:absorption-conclusion} follows.
\end{proof}

We only need concentration for one target at a time in the independent sample. The elementary variance estimate below is enough.

\begin{lemma}[Concentration for one target]
\label{lem:ghost-concentration}
For every fixed $y\in\mathcal K_r$,
\begin{equation}
 \operatorname{Var}(X_y')\leq mr.
 \label{eq:ghost-variance}
\end{equation}
Consequently, if $m\geq\frac{64r}{\varepsilon^2}$, then
\begin{equation}
 \mathbb{P}\left(|X_y'-m|\leq\frac{\varepsilon m}{4}\right)\geq\frac{3}{4}.
 \label{eq:ghost-good-prob}
\end{equation}
\end{lemma}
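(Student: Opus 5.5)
The plan is to compute the mean and variance of $X_y'$ directly and then apply Chebyshev's inequality.

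\emph{Mean.} By independence of the selectors and \eqref{eq:basic-energy},
\[
 \mathbb{E}X_y'=p\sum_{i=1}^n z_i(y)=\frac{m}{n}\cdot n=m .
\]

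\emph{Variance.} The summands $\delta_i'z_i(y)$ are independent, so
\[
 \operatorname{Var}(X_y')=\sum_{i=1}^n p(1-p)z_i(y)^2\leq p\sum_{i=1}^n z_i(y)^2 .
\]
To bound the remaining sum, use the pointwise bound $0\le z_i(y)\le r$ from \eqref{eq:basic-energy}. It gives $z_i(y)^2\leq r z_i(y)$, and summing this against $\sum_i z_i(y)=n$ yields $\sum_i z_i(y)^2\le rn$. Hence
\[
 \operatorname{Var}(X_y')\leq\frac{m}{n}\cdot rn=mr,
\]
which is \eqref{eq:ghost-variance}. This step is the only one that uses the defining property of $\mathcal K_r$. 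The pointwise bound $z_i\le\|y\|_1^2\le r$ is where the $\ell^1$ condition enters, and it is exactly what lets the variance scale like $mr$ rather than something depending on the support of $y$.

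\emph{Concentration.} Chebyshev's inequality gives
\[
 \mathbb{P}\left(|X_y'-m|>\frac{\varepsilon m}{4}\right)\leq\frac{16\,mr}{\varepsilon^2m^2}=\frac{16r}{\varepsilon^2 m}.
\]
Under the hypothesis $m\geq 64r/\varepsilon^2$, the right side is at most $\frac{16}{64}=\frac14$. This gives \eqref{eq:ghost-good-prob}.

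No step here is a genuine obstacle. The only care needed is to use the $\ell^\infty$ energy bound from \eqref{eq:basic-energy} rather than a cruder estimate such as $z_i\le n$. Note that the lemma is stated only for one fixed target, so no uniformity over $y$ is required. That uniformity is supplied later by combining this lemma with Proposition~\ref{prop:ghost-difference} and Lemma~\ref{lem:absorption}.
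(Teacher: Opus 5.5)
Your proposal is correct and follows the paper's proof essentially verbatim: you bound the variance by $p(1-p)\sum_i z_i(y)^2\leq pr\sum_i z_i(y)=mr$ using \eqref{eq:basic-energy}, and then Chebyshev's inequality gives the failure bound $\frac{16r}{\varepsilon^2m}\leq\frac{1}{4}$. The only addition is your explicit computation of $\mathbb{E}X_y'=m$, which the paper leaves implicit.
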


\begin{proof}
Using \eqref{eq:basic-energy},
\[
 \operatorname{Var}(X_y')
 =p(1-p)\sum_i z_i(y)^2
 \leq p\,r\sum_i z_i(y)
 =mr.
\]
Chebyshev's inequality gives
\[
 \mathbb{P}\left(|X_y'-m|>\frac{\varepsilon m}{4}\right)
 \leq\frac{16r}{\varepsilon^2m}.
\]
\end{proof}

We also need to verify that the proposed row count makes $K_m$ small enough for the absorption estimate. In the next lemma, $\eta$ is fixed as a sufficiently small constant multiple of $\varepsilon$, as required by Lemma~\ref{lem:absorption}.

\begin{lemma}[The size of $K_m$]
\label{lem:Km-threshold}
For every $0<\varepsilon<\frac{1}{4}$ and $A_0>0$, if
\begin{equation}
 m\geq C_{\varepsilon,A_0}rL_{n,r}L_r,
 \label{eq:threshold-again}
\end{equation}
with a sufficiently large constant, then
\begin{equation}
 K_m\leq c_1\varepsilon^2m,
 \label{eq:Km-small}
\end{equation}
where $c_1$ is the constant in Lemma~\ref{lem:absorption}.
\end{lemma}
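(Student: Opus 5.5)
With $\eta=c_0\varepsilon$ fixed as in Lemma~\ref{lem:absorption}, the constant $C_{\eta,A_0}$ in \eqref{eq:Km} becomes $C_{\varepsilon,A_0}'$. It then suffices to show that each of the two terms in \eqref{eq:Km} is at most $\frac{c_1\varepsilon^2}{2C'_{\varepsilon,A_0}}m$. Write $m=\Lambda rL_{n,r}L_r$ with $\Lambda\geq C_{\varepsilon,A_0}$. The plan is to bound both terms by $o(\Lambda)\cdot rL_{n,r}L_r$ as $\Lambda\to\infty$, uniformly in $n$ and $r$.

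\textbf{The $r\log n$ term.} By Lemma~\ref{lem:log-absorb}, $r\log n\leq C rL_{n,r}L_r=\frac{C}{\Lambda}m$. This is small once $\Lambda$ is large.

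\textbf{The main term $rL_{n,r}\log(e+\frac{m}{L_{n,r}})$.} Here $\frac{m}{L_{n,r}}=\Lambda rL_r$, so
\[
 \log\Bigl(e+\frac{m}{L_{n,r}}\Bigr)\leq \log(e+\Lambda r L_r)\leq 1+\log\Lambda+\log r+\log L_r .
\]
Each of $1$, $\log r$ and $\log L_r$ is at most $C L_r$, since $L_r=\log(2r)\geq\log2$ and $\log L_r\leq L_r$. Also $\log\Lambda\leq (\log\Lambda)L_r/\log 2$. Hence the main term is at most $C(1+\log\Lambda)\,rL_{n,r}L_r=\frac{C(1+\log\Lambda)}{\Lambda}m$.

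Combining the two terms gives $K_m\leq C'_{\varepsilon,A_0}\frac{C(2+\log\Lambda)}{\Lambda}m$. Since $\frac{\log\Lambda}{\Lambda}\to0$, choosing $C_{\varepsilon,A_0}$ large enough makes this at most $c_1\varepsilon^2m$, which proves \eqref{eq:Km-small}.

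\textbf{Expected obstacle.} The only delicate point is that the logarithm $\log(e+m/L_{n,r})$ depends on $m$ itself. Lower bounds alone therefore do not suffice, and one must argue that $m\mapsto K_m/m$ decreases in $m$. This holds because $\frac{\log(e+m/L)}{m}$ is decreasing in $m$. So the threshold computation at $m=C_{\varepsilon,A_0}rL_{n,r}L_r$ extends to every larger $m$; equivalently, the bound above holds for every $\Lambda\geq C_{\varepsilon,A_0}$, as written. The upper constraint $m\leq n/2$ plays no role here.
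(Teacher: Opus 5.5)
Your argument is correct and is essentially the paper's proof: both absorb the $r\log n$ term via Lemma~\ref{lem:log-absorb}, and both split $\log(e+m/L_{n,r})$ into a part of size $CL_r$ plus the logarithm of the ratio $m/(rL_{n,r})$, which is then beaten by $\log u/u\to0$. Your $\Lambda=m/(rL_{n,r}L_r)$ is just the paper's $x=m/(rL_{n,r})$ divided by $L_r$, and it gives the slightly cleaner explicit bound $K_m\leq C'_{\varepsilon,A_0}(2+\log\Lambda)\Lambda^{-1}m$. (The step $\log(e+\Lambda rL_r)\leq1+\log(\Lambda rL_r)$ needs $\Lambda rL_r\geq e/(e-1)$, which holds once $\Lambda$ is large; your closing monotonicity remark is redundant, since the bound already holds for every $\Lambda\geq C_{\varepsilon,A_0}$.)
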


\begin{proof}
The term $r\log n$ in \eqref{eq:Km} is bounded by $CrL_{n,r}L_r$ by Lemma~\ref{lem:log-absorb}. For the first term, put
\[
 x=\frac{m}{rL_{n,r}}.
\]
Under \eqref{eq:threshold-again}, $x\geq CL_r$. Moreover,
\[
 \log\left(e+\frac{m}{L_{n,r}}\right)
 =\log(e+rx)
 \leq CL_r+\log(1+x).
\]
The function $x\mapsto\frac{\log(1+x)}{x}$ decreases to zero. Given any $\gamma>0$, choose a constant $T$ so large that $\frac{C L_r}{x}\leq\frac{\gamma}{2}$ whenever $x\geq T L_r$, and also $\frac{\log(1+x)}{x}\leq\frac{\gamma}{2}$ for all $x\geq T\log2$. Thus the displayed logarithm is at most $\gamma x$ uniformly for $x\geq T L_r$. Take $\gamma$ sufficiently small relative to $c_1\varepsilon^2$ and the constant $C_{\eta,A_0}$ in \eqref{eq:Km}. Multiplication by $rL_{n,r}$, together with the estimate for $r\log n$, proves \eqref{eq:Km-small}.
\end{proof}

\begin{proof}[Proof of Theorem~\ref{thm:main}]
Choose $\eta=\frac{c_0\varepsilon}{2}$, reducing $c_0$ if necessary so that \eqref{eq:eta-range} holds. Then choose $C_{\varepsilon,A_0}$ large enough for Lemmas~\ref{lem:ghost-concentration} and \ref{lem:Km-threshold}. Let $\mathcal B$ be the event that the desired estimate \eqref{eq:bernoulli-rip} fails. For each realization in $\mathcal B$, choose a target witnessing the failure. The target class $\mathcal K_r$ is the intersection of the unit sphere with the closed $\ell^1$ ball of radius $\sqrt r$, so it is compact. The sampled energy is continuous in $y$, and its maximum absolute deviation from $m$ is attained on this class. A failing sample therefore has a witness in $\mathcal K_r$. Choose one witness for each failing realization, and use $e_1$ on the other realizations if a target is needed there. This choice is measurable because the first sample takes only finitely many values.

Fix a realization $\delta\in\mathcal B$ and choose a witness $y=y(\delta)$ with
\[
 |X_y-m|>\varepsilon m.
\]
Once $\delta$ is fixed, so is $y$. The second sample remains independent, and Lemma~\ref{lem:ghost-concentration} shows that
\[
 |X_y'-m|\leq\frac{\varepsilon m}{4}
\]
with probability at least $\frac{3}{4}$. If $\mathcal G$ also holds, Lemmas~\ref{lem:absorption} and \ref{lem:Km-threshold}, applied with $u=X_y$ and $v=X_y'$, imply $|X_y-m|\leq\varepsilon m$. This contradicts the choice of $y$. Thus a bad first sample paired with a second sample that is good for its witness must lie outside $\mathcal G$, and
\[
 \begin{aligned}
 \frac{3}{4}\mathbb{P}(\mathcal B)
 &\leq \mathbb{P}(\mathcal B\cap\{\text{the ghost sample is good for }y(\delta)\})\\
 &\leq \mathbb{P}(\mathcal G^c),
 \end{aligned}
\]
and hence, by \eqref{eq:G-prob},
\[
 \mathbb{P}(\mathcal B)
 \leq C\bigl(\rho+e^{-cm}\bigr).
\]
Since $m\gtrsim_{\varepsilon,A_0}\log n$ under \eqref{eq:m-range}, enlarging $C_{\varepsilon,A_0}$ makes the last expression at most $Cn^{-A_0}$. This proves \eqref{eq:bernoulli-rip} uniformly on $\mathcal K_r$. For a nonzero vector satisfying $\|y\|_1\leq\sqrt r\|y\|_2$, apply this estimate to $\frac{y}{\|y\|_2}$ and multiply by $\frac{\|y\|_2^2}{m}$. This gives \eqref{eq:homogeneous-sampling}; the zero vector satisfies it as well. Cauchy--Schwarz then gives the RIP conclusion for every $r$-sparse vector.
\end{proof}

\section{Choosing a prescribed number of rows}
\label{sec:prescribed}

The Bernoulli model lets us select rows independently. If instead we want exactly $m$ rows, we can condition on the sample size. The polynomial failure estimate is strong enough to absorb the cost of this conditioning.

\begin{corollary}[Uniform sampling without replacement]
\label{cor:uniform}
Fix $0<\varepsilon<\frac{1}{4}$ and $A_0>0$. There is a constant $C_{\varepsilon,A_0}$ such that, if $m$ is an integer and
\[
 C_{\varepsilon,A_0}rL_{n,r}L_r\leq m\leq\frac{n}{2},
\]
then a uniformly chosen $m$-element subset $S\subseteq[n]$ satisfies
\[
 (1-\varepsilon)\|y\|_2^2\leq\frac{1}{m}\|H_Sy\|_2^2
 \leq(1+\varepsilon)\|y\|_2^2
\]
simultaneously for every real vector $y$ satisfying $\|y\|_1\leq\sqrt r\|y\|_2$, with probability at least $1-Cn^{-A_0}$. In particular, the sampled matrix has RIP of order $r$ with constant $\varepsilon$.
\end{corollary}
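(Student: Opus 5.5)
The plan is to deduce the corollary from Theorem~\ref{thm:main} by conditioning the Bernoulli model on its sample size. Let $\delta_1,\ldots,\delta_n$ be independent Bernoulli variables with parameter $p=\frac{m}{n}$, and put $S=\{i:\delta_i=1\}$. Conditional on $|S|=m$, the set $S$ is uniformly distributed among the $m$-element subsets of $[n]$, since every such subset has probability $p^m(1-p)^{n-m}$. Let $\mathcal B$ be the event that \eqref{eq:bernoulli-rip} fails for some $y\in\mathcal K_r$. Then
\[
 \mathbb P(\mathcal B\mid |S|=m)\leq\frac{\mathbb P(\mathcal B)}{\mathbb P(|S|=m)}.
\]
On the complementary event, with $|S|=m$, the normalization by $m$ in \eqref{eq:homogeneous-sampling} is exactly the normalization in the corollary. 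Homogeneity and Cauchy--Schwarz then give the stated inequality for all real $y$ with $\|y\|_1\leq\sqrt r\|y\|_2$, together with the RIP conclusion.

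Two estimates are needed. The first is a lower bound for the probability of hitting the mean exactly. Since $m=np$ is an integer, it is the mode of $\operatorname{Bin}(n,p)$. Stirling's formula, or simply the fact that the mode carries the largest of at most $n+1$ atoms, gives
\[
 \mathbb P(|S|=m)\geq\frac{1}{n+1}\geq\frac{1}{2n}.
\]
The sharper bound $c/\sqrt{m}$ also holds but is not needed. The second estimate is an application of Theorem~\ref{thm:main} with $A_0+1$ in place of $A_0$. This yields $\mathbb P(\mathcal B)\leq Cn^{-A_0-1}$, provided $m\geq C_{\varepsilon,A_0+1}rL_{n,r}L_r$ and $m\leq\frac n2$. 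Combining the two,
\[
 \mathbb P(\mathcal B\mid|S|=m)\leq 2Cn^{-A_0}.
\]
This is the claimed bound once the constant $C_{\varepsilon,A_0}$ of the corollary is defined as the constant of Theorem~\ref{thm:main} at $A_0+1$.

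There is no real obstacle in this argument. The only points to check are that the conditioning preserves uniformity and that the polynomial loss $1/\mathbb P(|S|=m)$ is absorbed by raising $A_0$. I would also confirm that the lower bound on the mode holds uniformly for $p\leq\frac12$, because that is the only place where the sampling range enters; since $m=np$ is an integer mode, the crude bound $1/(n+1)$ suffices and requires no care.
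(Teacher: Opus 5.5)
Your proposal is correct and follows the paper's proof: condition the Bernoulli model on $|S|=m$, observe that the conditional law is uniform, and apply Theorem~\ref{thm:main} with exponent $A_0+1$. The only difference is that you bound $\mathbb P(\operatorname{Bin}(n,p)=m)\geq\frac{1}{n+1}$ using the fact that the integer mean is the mode, whereas the paper uses Stirling's formula to get $c\,m^{-1/2}$. The paper then only uses $\sqrt m\leq\sqrt n$, so your cruder bound loses nothing and avoids Stirling altogether.
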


\begin{proof}
Take $p=\frac{m}{n}$. Conditional on cardinality $m$, every $m$-element subset has the same Bernoulli probability, so the conditional law is the desired uniform law. The two-sided Stirling bounds $c\sqrt{k}(\frac{k}{e})^k\leq k!\leq C\sqrt{k}(\frac{k}{e})^k$, valid for positive integers $k$, give
\begin{align*}
 \mathbb{P}(\operatorname{Bin}(n,p)=m)
 &=\frac{n!}{m!(n-m)!}
   \left(\frac{m}{n}\right)^m
   \left(\frac{n-m}{n}\right)^{n-m}\\
 &\geq c\sqrt{\frac{n}{m(n-m)}}
 \geq c m^{-\frac{1}{2}}.
\end{align*}
The powers supplied by the factorials cancel exactly against the two binomial probability factors. The conditional failure probability is therefore at most $C\sqrt m$ times the Bernoulli failure probability. Apply Theorem~\ref{thm:main} with exponent $A_0+1$ and use $\sqrt m\leq\sqrt n$ to obtain the conclusion. The event in that theorem already controls every target in $\mathcal K_r$, so conditioning preserves this uniformity. The displayed homogeneous estimate follows by applying the unit-vector estimate to $\frac{y}{\|y\|_2}$ when $y\neq0$; it is immediate when $y=0$.
\end{proof}

\section{Complex transforms and Fourier measurements}
\label{sec:complex}

We now pass from a real Hadamard matrix to a complex matrix satisfying \eqref{eq:complex-matrix-assumptions}. A complex measurement has two real components, so the deterministic predictor can be applied to each component. In the sampling argument, however, the two components are selected together. We first explain why this dependence costs only an absolute constant.

\begin{lemma}[A signed sum with paired components]
\label{lem:paired-signs}
Let $A$ be a finite set and let $(\sigma_i)_{i\in A}$ be independent Rademacher signs. For $I\subseteq A\times\{1,2\}$, put
\[
 q_i=1_I(i,1)+1_I(i,2).
\]
Then, for every real $t$,
\begin{equation}
 \mathbb{E}\exp\left(t\sum_{i\in A}\sigma_iq_i\right)
 \leq\exp(t^2|I|).
 \label{eq:paired-mgf}
\end{equation}
Consequently, if $I$ is nonempty and $Q>0$, then
\begin{equation}
 \mathbb{P}\left(\left|\sum_{i\in A}\sigma_iq_i\right|
 >2\sqrt{|I|Q}\right)\leq2e^{-Q}.
 \label{eq:paired-tail}
\end{equation}
\end{lemma}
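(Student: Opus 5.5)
The argument is a direct moment-generating-function computation followed by the same Chernoff optimization used in Proposition~\ref{prop:rademacher}.

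\textbf{Factoring the moment generating function.} Each $q_i\in\{0,1,2\}$ is deterministic, and the signs are independent. Hence
\[
 \mathbb{E}\exp\Bigl(t\sum_{i\in A}\sigma_iq_i\Bigr)=\prod_{i\in A}\cosh(tq_i)\leq\prod_{i\in A}e^{\frac{t^2q_i^2}{2}}.
\]
The pairing is absorbed by the pointwise inequality $q_i^2\leq 2q_i$, which holds because $q_i\in\{0,1,2\}$. It follows that
\[
 \sum_{i\in A}\frac{q_i^2}{2}\leq\sum_{i\in A}q_i=|I|,
\]
and this gives \eqref{eq:paired-mgf}. Here the cost of pairing two components under one sign is exactly the factor $2$ in the variance proxy compared with the unpaired estimate $e^{t^2|I|/2}$.

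\textbf{The tail bound.} For $u>0$, Markov's inequality gives
\[
 \mathbb{P}\Bigl(\sum_{i\in A}\sigma_iq_i>u\Bigr)\leq e^{-tu+t^2|I|}.
\]
Optimize at $t=\frac{u}{2|I|}$, which is allowed since $I$ is nonempty. This yields the bound $e^{-\frac{u^2}{4|I|}}$.

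Set $u=2\sqrt{|I|Q}$, so the exponent becomes exactly $-Q$. The lower tail is identical, because $-\sigma$ has the same law as $\sigma$. Adding the two one-sided bounds gives \eqref{eq:paired-tail}.

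\textbf{Where the care is needed.} There is no real obstacle; the one point to get right is the combinatorial inequality $q_i^2\leq2q_i$. Bounding $q_i^2$ crudely by $4$ would lose the dependence on $|I|$. The inequality instead ties the variance proxy to the number of counted components, which is precisely the quantity the union bounds over bands will control later.
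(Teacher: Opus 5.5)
Your proof is correct and matches the paper's argument essentially step for step: factor the moment generating function into $\prod_i\cosh(tq_i)$, use $q_i^2\leq 2q_i$ to get the variance proxy $|I|$, then apply Markov's inequality at $t=\frac{u}{2|I|}$ with $u=2\sqrt{|I|Q}$ and treat the lower tail by symmetry. One small aside: your closing remark that the crude bound $q_i^2\leq4$ would lose the dependence on $|I|$ is overstated, since summing only over rows with $q_i>0$ gives at most $4|I|$ and so costs only a constant; this does not affect the proof.
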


\begin{proof}
The multiplicity $q_i$ belongs to $\{0,1,2\}$, so $q_i^2\leq2q_i$ and
\[
 \sum_iq_i^2\leq2\sum_iq_i=2|I|.
\]
Independence and $\cosh u\leq e^{\frac{u^2}{2}}$ give
\[
 \mathbb{E}\exp\left(t\sum_i\sigma_iq_i\right)
 =\prod_i\cosh(tq_i)
 \leq\exp\left(\frac{t^2}{2}\sum_iq_i^2\right)
 \leq\exp(t^2|I|).
\]
Markov's inequality, optimized at $t=\frac{u}{2|I|}$, bounds the upper tail at $u$ by $e^{-\frac{u^2}{4|I|}}$. Applying the same argument to the negative sum and taking $u=2\sqrt{|I|Q}$ proves \eqref{eq:paired-tail}.
\end{proof}

\begin{proof}[Proof of the fixed-distortion assertions of Theorem~\ref{thm:complex-main}]
Write $H=H_1+iH_2$ and $y=u+iv$, with real matrices $H_1,H_2$ and real vectors $u,v$. Define
\[
 \widetilde H=
 \begin{pmatrix}H_1&-H_2\\H_2&H_1\end{pmatrix},
 \qquad
 \widetilde y=\begin{pmatrix}u\\v\end{pmatrix}.
\]
Then
\begin{equation}
 \widetilde H\widetilde y=
 \begin{pmatrix}\operatorname{Re}(Hy)\\\operatorname{Im}(Hy)\end{pmatrix},
 \qquad
 \|\widetilde y\|_2=\|y\|_2,
 \qquad
 \|\widetilde y\|_1\leq\sqrt2\|y\|_1.
 \label{eq:realification}
\end{equation}
The last inequality follows by applying $|a|+|b|\leq\sqrt2|a+ib|$ to each coordinate. Every entry of $\widetilde H$ has absolute value at most one. For $y\in\mathcal K_r(\mathbb C)$, the real vector $\widetilde y$ consequently satisfies $\|\widetilde y\|_2=1$ and $\|\widetilde y\|_1\leq\sqrt{2r}$. These are exactly the norm conditions for the deterministic construction with dimension $2n$ and parameter $2r$. No restriction on the support of either vector is needed.

We use this realification in the deterministic construction of Sections~\ref{sec:predictor}--\ref{sec:counting}. Those sections require bounded row entries and the target's norm bounds. Orthogonality enters later, when we compute the total energy and the variance of a sample. In particular, the identity $\widetilde H^T\widetilde H=nI_{2n}$ is compatible with this use of the deterministic argument; the sampling normalization will continue to come from the original complex matrix.

For a fixed active row set $A\subseteq[n]$, label the two real queries at row $i$ by $(i,1)$ and $(i,2)$. Run the predictor on $\widetilde A=A\times\{1,2\}$, in a fixed order, with $n,r$ in its deterministic definitions replaced by $2n,2r$. The experts are $\pm\sqrt{2r}\,e_k$, $1\leq k\leq2n$, and the step size is $\frac{\eta}{16r}$. Use the geometric levels with top endpoint determined by $\sqrt{2r}$, and denote their number by $\widetilde L$. Since
\[
 L_{2n,2r}=\log\frac{2en}{r}=L_{n,r},
\]
the comparison distribution satisfies $D_{\mathrm{KL}}(Q_{\widetilde y}\|P_1)\leq L_{n,r}$. Proposition~\ref{prop:mistake-budget} consequently gives
\begin{equation}
 \sum_jb_jd_j^{\widetilde A}(\widetilde y)
 \leq\frac{64}{3\eta^2}rL_{n,r}.
 \label{eq:complex-budget}
\end{equation}
Here and for the rest of this proof, the levels $b_j$ are those of this realified construction.

Put $z_{i,1}(y)=(\operatorname{Re}(Hy)_i)^2$ and $z_{i,2}(y)=(\operatorname{Im}(Hy)_i)^2$. The decoder supplies numbers $v_{i,1},v_{i,2}\geq0$ and disjoint bands $I_j\subseteq\widetilde A$. Lemma~\ref{lem:quantization}, applied to each real component, gives
\begin{equation}
 |v_{i,1}+v_{i,2}-|(Hy)_i|^2|
 \leq C\eta|(Hy)_i|^2+C\eta^2.
 \label{eq:complex-quantization}
\end{equation}
Thus, with $V=\sum_{i\in A}(v_{i,1}+v_{i,2})$ and $Z_A(y)=\sum_{i\in A}|(Hy)_i|^2$,
\[
 V\leq(1+C\eta)Z_A(y)+C\eta^2|A|.
\]

The bands are determined by their higher-level records exactly as in Lemma~\ref{lem:tail-determinism}. If $|A|=M$, the record count at one level is now $N_{2M}(d)=\binom{2M}{d}3^d$. When $M\leq4m$, the active real queries number at most $8m$. The last observation in Lemma~\ref{lem:sample-entropy} and \eqref{eq:complex-budget} therefore give
\[
 \sum_jb_jh_{2M}(d_j)
 \leq\frac{C}{\eta^2}rL_{n,r}
 \log\left(e+\frac{m}{L_{n,r}}\right).
\]
Define the profile probabilities and failure allowances from Section~\ref{sec:signs} using $2M$ and $\widetilde L$. All records and bands are fixed before the signs are exposed. Lemma~\ref{lem:paired-signs} gives the required tail estimate for each band, with the same sign on the two components of a row. The identical union bound over bands and profiles, followed by Cauchy--Schwarz, now gives, with conditional probability at least $1-\rho$,
\begin{equation}
 \left|\sum_{i\in A}\sigma_i(v_{i,1}+v_{i,2})\right|
 \leq C\sqrt{VK},
 \label{eq:complex-signed}
\end{equation}
simultaneously for every $y\in\mathcal K_r(\mathbb C)$, where
\begin{equation}
 K\leq C\left[
 \frac{rL_{n,r}}{\eta^3}\log\left(e+\frac{m}{L_{n,r}}\right)
 +\frac{r}{\eta}\log\frac{\widetilde L}{\rho}
 +\frac{r}{\eta^2}
 \right].
 \label{eq:complex-K}
\end{equation}
For completeness, the Cauchy--Schwarz step sums $b_j\sqrt{|I_j|Q_j}$ over the disjoint bands in $\widetilde A$. Its two squared sums are $V=\sum_jb_j|I_j|$ and $\sum_jb_jQ_j$. Reversing the higher-level sum in the second expression gives the factor $\frac{C}{\eta}$ times the displayed weighted record count, just as in the proof of Proposition~\ref{prop:rademacher}. The remaining contributions are bounded by $\sum_jb_j\leq\frac{Cr}{\eta}$ and \eqref{eq:geometric-tail-length} with $2r$ in place of $r$. This proves \eqref{eq:complex-K}.

We now return to the original complex energies $z_i(y)=|(Hy)_i|^2$. For every $y\in\mathcal K_r(\mathbb C)$,
\begin{equation}
 \sum_i z_i(y)=n,\qquad z_i(y)\leq\|y\|_1^2\leq r.
 \label{eq:complex-energy}
\end{equation}
Introduce independent Bernoulli samples $\delta,\delta'$ with parameter $\frac{m}{n}$, and let $A$ be their symmetric difference as in Section~\ref{sec:ghost}. The bound $\mathbb{P}(|A|>4m)\leq e^{-cm}$ and the conditional independent-sign description hold for these original rows. Combining \eqref{eq:complex-quantization} and \eqref{eq:complex-signed}, with $\rho=n^{-(A_0+3)}$, proves the energy comparison \eqref{eq:selfnorm-ghost} with adjusted absolute constants. For fixed $\eta$, \eqref{eq:complex-K} is bounded by the expression in \eqref{eq:Km}, since $\widetilde L\leq\frac{C}{\eta}\log\frac{4r}{\eta}$.

Finally, \eqref{eq:complex-energy} gives $\mathbb{E}X_y'=m$ and $\operatorname{Var}(X_y')\leq mr$. The absorption estimate, the threshold calculation, and the bad-sample witness argument in Section~\ref{sec:completion} apply with these same values of $m,r,n$. The class $\mathcal K_r(\mathbb C)$ is compact when identified with a closed bounded subset of $\mathbb R^{2n}$, so the same witness selection is available. These estimates prove the Bernoulli assertion uniformly on this class. Scaling proves \eqref{eq:homogeneous-sampling} for every complex vector satisfying the norm condition. Conditioning on the number of selected original rows, as in Corollary~\ref{cor:uniform}, proves the prescribed-cardinality assertion. The explicit dependence on $\varepsilon$ is proved in the next section.
\end{proof}

\begin{corollary}[Discrete Fourier measurements]
\label{cor:fourier}
For any integer $n\geq2$, let
\[
 F_{j,k}=e^{\frac{2\pi i jk}{n}},\qquad 0\leq j,k<n.
\]
Fix $0<\varepsilon<\frac{1}{4}$ and $A_0>0$. If
\[
 C_{\varepsilon,A_0}rL_{n,r}L_r\leq m\leq\frac{n}{2},
\]
then Bernoulli selection of frequencies with expected size $m$ gives, with probability at least $1-Cn^{-A_0}$,
\[
 (1-\varepsilon)\|y\|_2^2
 \leq\frac{1}{m}\sum_{j\in S}\left|\sum_{k=0}^{n-1}y_k e^{\frac{2\pi i jk}{n}}\right|^2
 \leq(1+\varepsilon)\|y\|_2^2
\]
simultaneously for every complex vector $y$ satisfying $\|y\|_1\leq\sqrt r\|y\|_2$. In particular, this includes every $r$-sparse vector. The same conclusion holds for a uniform $m$-element set of frequencies when $m$ is an integer. The explicit sufficient condition \eqref{eq:explicit-row-count} is also available.
\end{corollary}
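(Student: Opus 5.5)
The plan is to obtain the corollary as a direct specialization of Theorem~\ref{thm:complex-main}, so the only work is verifying its hypotheses for $F$. Index rows and columns by $\{0,\ldots,n-1\}$ in place of $[n]$; this relabeling changes nothing in the argument. Every entry $F_{j,k}=e^{2\pi ijk/n}$ has modulus one. The orthogonality relation
\[
 (F^*F)_{k,\ell}=\sum_{j=0}^{n-1}e^{\frac{2\pi ij(\ell-k)}{n}}=n\,1_{k=\ell}
\]
follows from summing a geometric series of $n$th roots of unity, which vanishes unless $\ell\equiv k\pmod n$. Hence $F$ satisfies \eqref{eq:complex-matrix-assumptions}.

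Next I will invoke Theorem~\ref{thm:complex-main} with $H=F$, keeping the same $\varepsilon$, $A_0$, and range of $m$. With probability at least $1-Cn^{-A_0}$, this yields \eqref{eq:homogeneous-sampling} for every complex $y$ with $\|y\|_1\leq\sqrt r\|y\|_2$. Since $(Fy)_j=\sum_k y_ke^{2\pi ijk/n}$, the quantity $\|F_Sy\|_2^2$ is exactly the displayed sum over $j\in S$. Here $S$ is the set of selected frequencies, that is, the selected rows.

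For the inclusion of sparse vectors, note that if $y$ is $r$-sparse then Cauchy--Schwarz on its support gives $\|y\|_1\leq\sqrt r\|y\|_2$. The remaining assertions, namely uniform sampling of an $m$-element set and the explicit condition \eqref{eq:explicit-row-count}, are also stated in Theorem~\ref{thm:complex-main} for every matrix satisfying \eqref{eq:complex-matrix-assumptions}. They therefore transfer verbatim.

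I do not expect any real obstacle. The only point to check is that the constant $C_{\varepsilon,A_0}$ in the theorem is independent of the matrix, which the paper states explicitly, so it does not depend on $n$ through $F$.
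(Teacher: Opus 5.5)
Your proposal is correct and matches the paper's proof: it checks that every entry of $F$ has modulus one and that $F^*F=nI$ via the vanishing geometric sums of $n$th roots of unity, then applies Theorem~\ref{thm:complex-main} directly. Your remarks on the Cauchy--Schwarz inclusion of sparse vectors and on the matrix-independence of $C_{\varepsilon,A_0}$ are accurate supplementary details.
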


\begin{proof}
Every entry has absolute value one. For $k\neq\ell$, the finite geometric sum $\sum_{j=0}^{n-1}e^{\frac{2\pi i j(k-\ell)}{n}}$ is zero, while for $k=\ell$ it is $n$. Hence $F^*F=nI$, and Theorem~\ref{thm:complex-main} applies.
\end{proof}

\subsection{Fourier Ratio, sampling, and recovery}
\label{subsec:fourier-ratio}

We now view the vector in Corollary~\ref{cor:fourier} as a vector of Fourier coefficients. This change of viewpoint identifies the norm condition with the Fourier Ratio and turns row sampling into observation of a function at a small set of points. We keep the normalization explicit, since it also determines the size of the noise in the recovery statement.

For $f:\mathbb Z_n\to\mathbb C$, use the unitary Fourier transform
\begin{equation}
 \widehat f(\xi)=\frac{1}{\sqrt n}
 \sum_{x\in\mathbb Z_n}f(x)e^{-\frac{2\pi i x\xi}{n}},
 \qquad
 f(x)=\frac{1}{\sqrt n}
 \sum_{\xi\in\mathbb Z_n}\widehat f(\xi)e^{\frac{2\pi i x\xi}{n}}.
 \label{eq:unitary-fourier}
\end{equation}
All norms are again taken with respect to counting measure. Plancherel's identity gives $\|\widehat f\|_2=\|f\|_2$. The Fourier Ratio of a nonzero function is
\begin{equation}
 \operatorname{FR}(f)
 =\frac{\|\widehat f\|_1}{\|\widehat f\|_2}
 =\frac{\|\widehat f\|_1}{\|f\|_2}.
 \label{eq:fourier-ratio}
\end{equation}
This definition agrees with \cite{FourierRatio,BINFourierRatio}; multiplying every Fourier coefficient by the same normalization constant does not change the ratio. Cauchy--Schwarz shows that
\[
 1\leq\operatorname{FR}(f)^2
 \leq|\operatorname{supp}(\widehat f)|\leq n.
\]
Equality in the middle inequality occurs when the nonzero Fourier coefficients have the same absolute value. If their sizes are very different, the ratio can be much smaller than the support size. Thus it measures a form of concentration that remains meaningful when no coefficient is zero.

For $S\subset\mathbb Z_n$ and $m>0$, define the normalized observation map by
\begin{equation}
 T_Sf=\sqrt{\frac{n}{m}}\,(f(x))_{x\in S}.
 \label{eq:normalized-observations}
\end{equation}
For Bernoulli sampling, $m$ denotes the expected size of $S$. For a sample of prescribed size, $m=|S|$.

\begin{theorem}[Uniform sampling for bounded Fourier Ratio]
\label{thm:fourier-ratio-sampling}
Let $1\leq r\leq n$, $0<\varepsilon<\frac{1}{4}$, and $A_0>0$. Suppose
\begin{equation}
 C_{\varepsilon,A_0}r\log\frac{2en}{r}\log(2r)
 \leq m\leq\frac{n}{2}.
 \label{eq:fourier-ratio-sample-count}
\end{equation}
Choose $S$ by independent Bernoulli sampling with parameter $\frac{m}{n}$, or uniformly among the $m$-element subsets when $m$ is an integer. With probability at least $1-Cn^{-A_0}$,
\begin{equation}
 (1-\varepsilon)\|f\|_2^2
 \leq\|T_Sf\|_2^2
 =\frac{n}{m}\sum_{x\in S}|f(x)|^2
 \leq(1+\varepsilon)\|f\|_2^2
 \label{eq:fourier-ratio-energy}
\end{equation}
simultaneously for every nonzero $f$ with $\operatorname{FR}(f)^2\leq r$. The estimate also holds for $f=0$.
\end{theorem}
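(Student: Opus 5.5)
The plan is to deduce the theorem directly from Corollary~\ref{cor:fourier} (equivalently Theorem~\ref{thm:complex-main}) by identifying $f$ with its coefficient vector $y=\widehat f$ and observation at $x$ with selection of row $x$ of the matrix $F$. No new probabilistic argument is needed; the work is in checking the normalization and the norm condition.

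\textbf{Step 1: express the observations through $F$.} By \eqref{eq:unitary-fourier},
\[
 f(x)=\frac{1}{\sqrt n}\sum_{\xi}\widehat f(\xi)e^{\frac{2\pi i x\xi}{n}}=\frac{1}{\sqrt n}(F\widehat f)_x .
\]
Hence
\[
 \|T_Sf\|_2^2=\frac{n}{m}\sum_{x\in S}|f(x)|^2=\frac{1}{m}\sum_{x\in S}|(F\widehat f)_x|^2=\frac1m\|F_S\widehat f\|_2^2 ,
\]
where $F_S$ is the restriction of $F$ to the rows indexed by $S$. Plancherel gives $\|f\|_2=\|\widehat f\|_2$.

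\textbf{Step 2: check the norm condition.} For $f\neq0$, the hypothesis $\operatorname{FR}(f)^2\leq r$ reads $\|\widehat f\|_1\leq\sqrt r\,\|\widehat f\|_2$. This is exactly the condition in Corollary~\ref{cor:fourier} with $y=\widehat f$.

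\textbf{Step 3: apply the corollary.} Under \eqref{eq:fourier-ratio-sample-count}, Corollary~\ref{cor:fourier} supplies an event of probability at least $1-Cn^{-A_0}$. On this event, simultaneously for all complex $y$ with $\|y\|_1\leq\sqrt r\|y\|_2$,
\[
 (1-\varepsilon)\|y\|_2^2\leq\tfrac1m\|F_Sy\|_2^2\leq(1+\varepsilon)\|y\|_2^2 .
\]
This holds both for Bernoulli selection with parameter $\frac mn$ and for a uniform $m$-element set. The key point is that the event does not depend on $f$, so it is uniform over the whole class. Substituting $y=\widehat f$ together with Steps 1 and 2 gives \eqref{eq:fourier-ratio-energy}. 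For $f=0$ all three quantities vanish.

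\textbf{Expected obstacle.} The only delicate point is the bookkeeping of normalizations: the factor $\frac nm$ in $T_S$ against the $\frac{1}{\sqrt n}$ in the unitary transform, and the fact that $m$ is the expected size for Bernoulli sampling but the exact size for uniform sampling. Both cases are covered by Corollary~\ref{cor:fourier} with the same meaning of $m$.

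The sign convention in the exponent is harmless. Sampling $x\in S$ uses rows of $F$ with entries $e^{2\pi i x\xi/n}$, which matches the corollary exactly. If needed, the conjugate matrix also satisfies \eqref{eq:complex-matrix-assumptions}.
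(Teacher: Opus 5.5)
Your proposal is correct and is essentially the paper's own proof. Both set $y=\widehat f$ and use Fourier inversion $Fy=\sqrt n\,f$ to rewrite $\|T_Sf\|_2^2$ as $\frac1m\|F_Sy\|_2^2$. Both observe that $\operatorname{FR}(f)^2\leq r$ is exactly the norm condition of Corollary~\ref{cor:fourier}, and conclude with Plancherel on the single event that corollary provides for either sampling convention.
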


\begin{proof}
Put $y=\widehat f$ and let $F$ be the matrix in Corollary~\ref{cor:fourier}, with its row variable denoted by $x$ and its column variable by $\xi$. Fourier inversion gives $Fy=\sqrt n f$. Moreover,
\[
 \frac{\|y\|_1^2}{\|y\|_2^2}=\operatorname{FR}(f)^2.
\]
Corollary~\ref{cor:fourier} therefore applies to all of these coefficient vectors on the same event. Substituting $Fy=\sqrt n f$ and $\|y\|_2=\|f\|_2$ gives \eqref{eq:fourier-ratio-energy}.
\end{proof}

The lower estimate already has an uncertainty interpretation. If a nonzero function $h$ vanishes at every point of $S$, then
\begin{equation}
 \operatorname{FR}(h)>\sqrt r,
 \qquad\text{or equivalently}\qquad
 \|h\|_2<\frac{\|\widehat h\|_1}{\sqrt r}.
 \label{eq:fourier-ratio-kernel}
\end{equation}
Indeed, otherwise the lower estimate would force $h=0$. The space of functions invisible to the observations therefore contains no nonzero function with small Fourier Ratio. This is a norm comparison on the unobserved space, of the kind discussed in the introduction.

To reconstruct a function, we minimize its Fourier $\ell^1$ norm subject to agreement with the data. The next statement is deterministic. Once the sample has the lower estimate, the reconstruction bound holds for every function and every admissible error vector.

\begin{proposition}[Recovery by Fourier minimization]
\label{prop:fourier-ratio-recovery}
Let $S\subset\mathbb Z_n$, $m>0$, $1\leq r\leq n$, and $0\leq\varepsilon<1$. Suppose
\[
 \|T_Sh\|_2^2\geq(1-\varepsilon)\|h\|_2^2
\]
for every nonzero $h$ satisfying $\operatorname{FR}(h)^2\leq r$. Let $b=T_Sf+e$, where $\|e\|_2\leq\nu$, and let $f^\sharp$ be any minimizer of
\begin{equation}
 \min_{g:\mathbb Z_n\to\mathbb C}\|\widehat g\|_1
 \quad\text{subject to}\quad \|T_Sg-b\|_2\leq\nu.
 \label{eq:fourier-minimization}
\end{equation}
A minimizer exists, and every minimizer satisfies
\begin{equation}
 \|f^\sharp-f\|_2
 \leq\max\left\{
 \frac{2\|\widehat f\|_1}{\sqrt r},
 \frac{2\nu}{\sqrt{1-\varepsilon}}
 \right\}.
 \label{eq:fourier-ratio-noisy-recovery}
\end{equation}
For $f\neq0$, the first term is $\frac{2\operatorname{FR}(f)}{\sqrt r}\|f\|_2$. In particular, when $\nu=0$, exact agreement with the observations gives
\begin{equation}
 \|f^\sharp-f\|_2
 \leq\frac{2\operatorname{FR}(f)}{\sqrt r}\|f\|_2.
 \label{eq:fourier-ratio-noiseless-recovery}
\end{equation}
\end{proposition}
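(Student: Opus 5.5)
Two things must be shown: that a minimizer exists, and that the error bound holds. I would handle them in that order and then argue by cases on the Fourier ratio of the error.

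\emph{Existence.} The feasible set $\{g:\|T_Sg-b\|_2\leq\nu\}$ is closed and convex, and it is nonempty because $f$ itself lies in it: $\|T_Sf-b\|_2=\|e\|_2\leq\nu$. The objective $g\mapsto\|\widehat g\|_1$ is a norm on the finite-dimensional space of functions on $\mathbb Z_n$, since the Fourier transform is injective. Its sublevel sets are therefore compact. Minimizing a continuous coercive function over a nonempty closed set gives a minimizer.

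\emph{Error estimate.} Let $f^\sharp$ be any minimizer and put $h=f^\sharp-f$. If $h=0$ there is nothing to prove. Two basic facts are needed.
\begin{itemize}
\item Since $f$ is feasible, minimality gives $\|\widehat{f^\sharp}\|_1\leq\|\widehat f\|_1$. The triangle inequality then yields $\|\widehat h\|_1\leq\|\widehat{f^\sharp}\|_1+\|\widehat f\|_1\leq2\|\widehat f\|_1$.
\item Both $f^\sharp$ and $f$ are feasible, so $\|T_Sh\|_2\leq\|T_Sf^\sharp-b\|_2+\|b-T_Sf\|_2\leq2\nu$.
\end{itemize}

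Now split according to whether $\operatorname{FR}(h)^2\leq r$.

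\emph{Case $\operatorname{FR}(h)^2\leq r$.} The hypothesis applies to $h$, so $(1-\varepsilon)\|h\|_2^2\leq\|T_Sh\|_2^2\leq4\nu^2$. Since $\varepsilon<1$, this gives $\|h\|_2\leq\frac{2\nu}{\sqrt{1-\varepsilon}}$.

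\emph{Case $\operatorname{FR}(h)^2>r$.} By definition $\|\widehat h\|_1>\sqrt r\|h\|_2$, and hence $\|h\|_2<\frac{\|\widehat h\|_1}{\sqrt r}\leq\frac{2\|\widehat f\|_1}{\sqrt r}$.

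In either case $\|h\|_2$ is bounded by the maximum in \eqref{eq:fourier-ratio-noisy-recovery}.

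\emph{Remaining identifications.} For $f\neq0$, the definition \eqref{eq:fourier-ratio} gives $\|\widehat f\|_1=\operatorname{FR}(f)\|f\|_2$, so the first term equals $\frac{2\operatorname{FR}(f)}{\sqrt r}\|f\|_2$. When $\nu=0$, the second term vanishes and \eqref{eq:fourier-ratio-noiseless-recovery} follows. If $f=0$ in that case, then $\widehat f=0$ and the bound reads $\|f^\sharp\|_2\leq0$. This is consistent because minimality forces $\|\widehat{f^\sharp}\|_1\leq0$, so $f^\sharp=0$.

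I do not expect a real obstacle. The only steps needing care are the existence argument and the observation that the case split on $\operatorname{FR}(h)^2$ makes the sampling hypothesis usable exactly when it is needed.
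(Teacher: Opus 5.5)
Your proposal is correct and follows the paper's proof essentially step for step. Existence comes from compactness of the feasible sublevel set: the paper bounds it using $\|g\|_2=\|\widehat g\|_2\leq\|\widehat g\|_1$, while you use that $g\mapsto\|\widehat g\|_1$ is a norm. The error bound then uses the same two estimates $\|\widehat h\|_1\leq2\|\widehat f\|_1$ and $\|T_Sh\|_2\leq2\nu$, and the same case split on whether $\operatorname{FR}(h)^2\leq r$.
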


\begin{proof}
The true function $f$ is feasible. We may therefore minimize over feasible $g$ with $\|\widehat g\|_1\leq\|\widehat f\|_1$. This is a nonempty closed bounded set in a finite-dimensional space: boundedness follows from
\[
 \|g\|_2=\|\widehat g\|_2\leq\|\widehat g\|_1.
\]
Compactness and continuity give a minimizer. Minimality and the triangle inequality imply, for $h=f^\sharp-f$,
\begin{equation}
 \|\widehat h\|_1
 \leq\|\widehat{f^\sharp}\|_1+\|\widehat f\|_1
 \leq2\|\widehat f\|_1,
 \qquad
 \|T_Sh\|_2\leq2\nu.
 \label{eq:fourier-recovery-two-bounds}
\end{equation}
There is nothing to prove if $h=0$. If $h\neq0$ and $\operatorname{FR}(h)^2\leq r$, the assumed lower estimate gives
\[
 \|h\|_2\leq\frac{\|T_Sh\|_2}{\sqrt{1-\varepsilon}}
 \leq\frac{2\nu}{\sqrt{1-\varepsilon}}.
\]
If instead $\operatorname{FR}(h)^2>r$, the definition of the Fourier Ratio and \eqref{eq:fourier-recovery-two-bounds} give
\[
 \|h\|_2<\frac{\|\widehat h\|_1}{\sqrt r}
 \leq\frac{2\|\widehat f\|_1}{\sqrt r}.
\]
In either case, one of the two terms in \eqref{eq:fourier-ratio-noisy-recovery} bounds the error. This proves the proposition. For $f=0$, minimality forces $f^\sharp=0$ as well.
\end{proof}

It is useful to separate the desired reconstruction accuracy from the accuracy of the sampling estimate. We can keep the latter fixed and increase the Fourier Ratio threshold to obtain a smaller reconstruction error.

\begin{corollary}[A prescribed relative accuracy]
\label{cor:fourier-ratio-accuracy}
Fix $R\geq1$, $0<\tau<1$, and $A_0>0$, and put
\begin{equation}
 r=\left\lceil\frac{4R^2}{\tau^2}\right\rceil.
 \label{eq:fourier-ratio-accuracy-parameter}
\end{equation}
Suppose $r\leq n$ and
\begin{equation}
 C_{A_0}r\log\frac{2en}{r}\log(2r)
 \leq m\leq\frac{n}{2}.
 \label{eq:fourier-ratio-recovery-count}
\end{equation}
Under either sampling convention in Theorem~\ref{thm:fourier-ratio-sampling}, with probability at least $1-Cn^{-A_0}$ the following holds simultaneously for every nonzero $f$ with $\operatorname{FR}(f)\leq R$ and every $e$ with $\|e\|_2\leq\nu$. Every minimizer in \eqref{eq:fourier-minimization}, with $b=T_Sf+e$, satisfies
\begin{equation}
 \|f^\sharp-f\|_2
 \leq\max\left\{\tau\|f\|_2,
 2\sqrt{\frac{8}{7}}\,\nu\right\}.
 \label{eq:fourier-ratio-relative-recovery}
\end{equation}
In particular, noiseless observations give relative error at most $\tau$.
\end{corollary}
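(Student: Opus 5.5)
The plan is to combine Theorem~\ref{thm:fourier-ratio-sampling} with Proposition~\ref{prop:fourier-ratio-recovery} at a fixed sampling accuracy. The reconstruction accuracy $\tau$ will enter only through the choice of $r$ in \eqref{eq:fourier-ratio-accuracy-parameter}.

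\emph{Step 1: the sampling event.} Fix $\varepsilon=\frac18$ once and for all. Apply Theorem~\ref{thm:fourier-ratio-sampling} with this $\varepsilon$ and the given $A_0$. The resulting constant $C_{1/8,A_0}$ depends only on $A_0$, so we take $C_{A_0}$ in \eqref{eq:fourier-ratio-recovery-count} to be this constant. We also need $1\le r\le n$; this holds because $r\ge 4R^2/\tau^2>4$ and $r\le n$ by assumption. On an event of probability at least $1-Cn^{-A_0}$, under either sampling convention, the lower estimate $\|T_Sh\|_2^2\ge\frac78\|h\|_2^2$ then holds simultaneously for every nonzero $h$ with $\operatorname{FR}(h)^2\le r$. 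This is exactly the hypothesis of Proposition~\ref{prop:fourier-ratio-recovery} with $1-\varepsilon=\frac78$.

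\emph{Step 2: deterministic recovery on that event.} Proposition~\ref{prop:fourier-ratio-recovery} is deterministic once its hypothesis holds. So on the event from Step 1 it applies at once to every $f$ and every error vector with $\|e\|_2\le\nu$. It gives
\[
 \|f^\sharp-f\|_2\le\max\Bigl\{\tfrac{2\operatorname{FR}(f)}{\sqrt r}\|f\|_2,\ \tfrac{2\nu}{\sqrt{7/8}}\Bigr\}.
\]
The second term is exactly $2\sqrt{8/7}\,\nu$. For the first term, use $\operatorname{FR}(f)\le R$ and $\sqrt r\ge 2R/\tau$. Together these give $\frac{2\operatorname{FR}(f)}{\sqrt r}\le\tau$, which yields \eqref{eq:fourier-ratio-relative-recovery}. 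Setting $\nu=0$ gives the noiseless statement.

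\emph{Main obstacle.} There is no real analytic difficulty; the substance is already in Theorem~\ref{thm:fourier-ratio-sampling}. The only points needing care are bookkeeping. First, the sampling constant must be independent of $\tau$ and $R$; freezing $\varepsilon=\frac18$ ensures this. Second, the uniformity over all $f$ and $e$ must come from a single event. It does, because Step 1 produces one event and Step 2 uses no further randomness.
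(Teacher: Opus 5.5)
Your proposal is correct and is the paper's own proof: fix $\varepsilon=\frac18$ in Theorem~\ref{thm:fourier-ratio-sampling} with $r=\lceil 4R^2/\tau^2\rceil$, then apply the deterministic Proposition~\ref{prop:fourier-ratio-recovery} on that single event, using $\frac{2\operatorname{FR}(f)}{\sqrt r}\le\tau$ and $(1-\varepsilon)^{-1/2}=\sqrt{8/7}$. Your extra bookkeeping (checking $r\ge1$, and noting that the constant depends only on $A_0$ once $\varepsilon$ is frozen) is accurate and makes explicit what the paper leaves implicit.
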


\begin{proof}
Apply Theorem~\ref{thm:fourier-ratio-sampling} with $\varepsilon=\frac{1}{8}$ and $r$ as in \eqref{eq:fourier-ratio-accuracy-parameter}. On the resulting event, Proposition~\ref{prop:fourier-ratio-recovery} applies to every function and every admissible error vector. Since $\frac{2\operatorname{FR}(f)}{\sqrt r}\leq\tau$, it gives \eqref{eq:fourier-ratio-relative-recovery}.
\end{proof}

The noise in these statements is measured after the normalization in \eqref{eq:normalized-observations}. If the observations are instead written as $b_0=(f(x))_{x\in S}+e_0$, with $\|e_0\|_2\leq\sigma$, then we use $b=\sqrt{\frac{n}{m}}b_0$ and $\nu=\sqrt{\frac{n}{m}}\sigma$. Equivalently, the constraint in \eqref{eq:fourier-minimization} becomes $\|(g(x))_{x\in S}-b_0\|_2\leq\sigma$, and the noise term in \eqref{eq:fourier-ratio-noisy-recovery} is
\[
 2\sqrt{\frac{n}{m(1-\varepsilon)}}\,\sigma.
\]
This factor is necessary when passing from the Euclidean norm of the observed errors to the counting-measure norm on the whole signal.

The Fourier Ratio recovery results in \cite{FourierRatio,BINFourierRatio} give a sufficient count of order
\[
 \frac{R^2}{\tau^2}\log^2\frac{2R}{\tau}\log n
 \asymp r\log^2(2r)\log n,
 \qquad r\asymp\frac{R^2}{\tau^2},
\]
after adjusting absolute constants in the relative accuracy. The earlier argument uses the sampling bounds available there. Corollary~\ref{cor:fourier-ratio-accuracy} replaces this sufficient count by
\[
 r\log(2r)\log\frac{2en}{r}
\]
in the subsampling range. The comparison concerns the number of observations for a prescribed relative accuracy. The original formulation in \cite{FourierRatio} samples independently with replacement, while \cite{BINFourierRatio} also formulates recovery using Bernoulli sampling; our conclusions hold for Bernoulli sampling and for a uniform subset of prescribed size, with the probability stated above. The noise normalization is specified separately in our statements. Since the isometry accuracy is fixed at $\frac{1}{8}$, this deduction incurs no additional power of $\tau^{-1}$ from the distortion estimate in Section~\ref{sec:distortion}. If $r>n$, or if \eqref{eq:fourier-ratio-recovery-count} has no admissible $m$, we may observe all $n$ values instead; noiseless reconstruction is then exact.

The entropy calculation in the sampling proof gives another reason for the appearance of the Fourier Ratio. For $f\neq0$, put
\[
 \alpha_\xi=\frac{|\widehat f(\xi)|}{\|\widehat f\|_1}.
\]
These numbers form a probability distribution, and
\begin{equation}
 \sum_\xi\alpha_\xi^2
 =\frac{\|\widehat f\|_2^2}{\|\widehat f\|_1^2}
 =\frac{1}{\operatorname{FR}(f)^2}.
 \label{eq:fourier-ratio-concentration}
\end{equation}
Concavity of the logarithm, applied with weights $\alpha_\xi$ on the nonzero terms, gives
\[
 \sum_{\alpha_\xi>0}\alpha_\xi\log\alpha_\xi
 \leq\log\left(\sum_\xi\alpha_\xi^2\right).
\]
Consequently, with $0\log0=0$, the entropy satisfies
\begin{equation}
 -\sum_\xi\alpha_\xi\log\alpha_\xi
 \geq\log\operatorname{FR}(f)^2.
 \label{eq:fourier-ratio-entropy}
\end{equation}
For equally sized nonzero Fourier coefficients, this is an equality, and the Fourier Ratio squared is their number. For general coefficients, \eqref{eq:fourier-ratio-concentration} provides a substitute for that number without counting the support. Lemma~\ref{lem:norm-comparator} uses the same norm--entropy inequality for the absolute coordinates of a normalized real vector and then mixes its representing distribution with the uniform distribution. The realification argument supplies the complex case. In this way, the quantity appearing in the Fourier Ratio program is already present in the potential estimate that makes the sampling theorem possible.

Finally, the same argument applies to any flat orthonormal basis covered by Theorem~\ref{thm:complex-main}. If $U=\frac{1}{\sqrt n}H$ is the unitary synthesis matrix, the coefficients of $f$ are $U^*f$. Replacing $\widehat f$ throughout by $U^*f$ gives the corresponding ratio $\frac{\|U^*f\|_1}{\|f\|_2}$ and the same sampling and recovery conclusions, since $H(U^*f)=\sqrt n f$. This is the basis formulation considered in \cite{BINFourierRatio}, here restricted to the flat bases in our sampling theorem. The value of the ratio depends on the chosen basis.

\section{Keeping track of the distortion}
\label{sec:distortion}

The preceding proof fixed $\varepsilon$ and allowed the constants to depend on it. We now keep track of that dependence. The calculation also identifies the cost of the approximation: the correction budget contributes $\eta^{-2}$, and summing higher-level records contributes one more power of $\eta^{-1}$. Absorption requires this cost to be at most a constant times $\varepsilon^2m$. Since we take $\eta$ proportional to $\varepsilon$, these three approximation powers and two concentration powers account for the sufficient factor $\varepsilon^{-5}$. The logarithmic calculation below checks that this accounting is uniform even when $\varepsilon$ varies.

\begin{proposition}[An explicit sufficient row count]
\label{prop:explicit-distortion}
For every $A_0>0$ there is a constant $C_{A_0}$, independent of $\varepsilon,n,r$, such that \eqref{eq:explicit-row-count} implies all the Bernoulli and prescribed-cardinality conclusions of Theorems~\ref{thm:main} and \ref{thm:complex-main}.
\end{proposition}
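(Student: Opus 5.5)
We rerun the proof of Theorem~\ref{thm:main}, and its complex version in Section~\ref{sec:complex}, with $\eta=c_0\varepsilon/2$, tracking every power of $\eta$ and $\varepsilon$. The structure is unchanged; what changes is that \eqref{eq:Km} must be written with its $\eta$-dependence explicit, and then verified against the absorption hypothesis $K\leq c_1\varepsilon^2m$ of Lemma~\ref{lem:absorption} under \eqref{eq:explicit-row-count}.

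\emph{Step 1: the explicit form of $K_m$.} Take $\rho=n^{-(A_0+3)}$. By \eqref{eq:KA-bound}, or \eqref{eq:complex-K} in the complex case, on $M\leq4m$ we have
\[
 K_A(y)\leq C\Bigl[\eta^{-3}rL_{n,r}\log\Bigl(e+\frac{m}{L_{n,r}}\Bigr)+\eta^{-1}r\log\frac{L}{\rho}+\eta^{-2}r\Bigr].
\]
The bound $L\leq C\eta^{-1}\log\frac{2r}{\eta}$ from \eqref{eq:geometric-bounds} gives $\log\frac{L}{\rho}\leq C_{A_0}\log n+\log\frac{C}{\eta}+\log\log\frac{2r}{\eta}$. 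Proposition~\ref{prop:ghost-difference} then holds with
\[
 K_m=C\Bigl[\eta^{-3}rL_{n,r}\log\Bigl(e+\frac{m}{L_{n,r}}\Bigr)+C_{A_0}\eta^{-1}r\log n+\eta^{-2}r\log\frac{2}{\eta}\Bigr].
\]
All other constants in Lemmas~\ref{lem:quantization} and \ref{lem:absorption} are absolute once $\eta\leq c_0\varepsilon$.

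\emph{Step 2: the threshold calculation.} This is the main obstacle, since it must be uniform in $\varepsilon$. We need each of the three terms to be at most $\frac{c_1}{3}\varepsilon^2m$. Write $m=\Lambda\varepsilon^{-5}rL_{n,r}\log\frac{2r}{\varepsilon}$ with $\Lambda\geq C_{A_0}$.

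For the first term it suffices that $\log(e+\frac{m}{L_{n,r}})\leq c\,\varepsilon^5\frac{m}{rL_{n,r}}=c\Lambda\log\frac{2r}{\varepsilon}$. Now $\log(e+\frac{m}{L_{n,r}})\leq\log(e+r\Lambda\varepsilon^{-5}\log\frac{2r}{\varepsilon})\leq C\log\frac{2r}{\varepsilon}+C\log\Lambda$. This is at most $c\Lambda\log\frac{2r}{\varepsilon}$ once $\Lambda$ is large, because $\log\Lambda\ll\Lambda$ and $\log\frac{2r}{\varepsilon}\geq\log 8$. This is the same $\frac{\log(1+x)}{x}$ argument as in Lemma~\ref{lem:Km-threshold}, now run with $x=\Lambda\log\frac{2r}{\varepsilon}$.

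For the second term, Lemma~\ref{lem:log-absorb} gives $\log n\leq CL_{n,r}L_r\leq CL_{n,r}\log\frac{2r}{\varepsilon}$. Hence $\eta^{-1}r\log n\leq C\varepsilon^{-1}rL_{n,r}\log\frac{2r}{\varepsilon}$, which is far below $\varepsilon^2m$.

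The third term is handled similarly, using $\log\frac{2}{\eta}\leq C\log\frac{2r}{\varepsilon}$.

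\emph{Step 3: the remaining ingredients.}
\begin{itemize}
\item Lemma~\ref{lem:ghost-concentration} requires $m\geq64r\varepsilon^{-2}$, which follows from \eqref{eq:explicit-row-count}.
\item The failure probability is $C(\rho+e^{-cm})$. Since $m\geq C_{A_0}\log n$ via Lemma~\ref{lem:log-absorb}, this is at most $Cn^{-A_0}$.
\item The witness and absorption argument of Section~\ref{sec:completion} goes through verbatim, as do the realified complex argument with $2n,2r$ and the conditioning of Corollary~\ref{cor:uniform}, applied with exponent $A_0+1$.
\end{itemize}
The only care needed in the complex case is that $\widetilde L$ and $\log\frac{4r}{\eta}$ are absorbed in the same way as in Step~2.
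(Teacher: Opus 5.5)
Your proposal is correct and follows essentially the same route as the paper. You write $K_m$ with absolute constants and explicit powers of $\eta$, take $\rho=n^{-(A_0+3)}$, absorb $\log\frac{L}{\rho}$ using Lemma~\ref{lem:log-absorb}, set $\eta\asymp\varepsilon$, check each term against $c_1\varepsilon^2m$ under \eqref{eq:explicit-row-count}, and then reuse the concentration, witness, realification and conditioning steps. The differences are cosmetic: the paper handles the first term via monotonicity of $x\mapsto\frac{\log(e+rx)}{x}$ at the threshold $x_0=B\varepsilon^{-5}U$, while you bound it directly for every $\Lambda\geq C_{A_0}$, and the $\eta^{-1}r\log\log\frac{2r}{\eta}$ term you leave implicit is harmlessly absorbed by your $\log n$ and $\log\frac{2}{\eta}$ terms.
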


\begin{proof}
We may use the larger constant needed in the complex case throughout. Write $L$ for the number of levels in the construction under consideration. By \eqref{eq:geometric-bounds} and its realified version,
\[
 L\leq\frac{C}{\eta}\log\frac{4r}{\eta}.
\]
The explicit estimates \eqref{eq:KA-bound} and \eqref{eq:complex-K} show that the energy comparison \eqref{eq:selfnorm-ghost} holds, outside an event of probability at most $\rho+e^{-cm}$, with the deterministic choice
\begin{equation}
 K_m=C\left[
 \frac{ra}{\eta^3}\log\left(e+\frac{m}{a}\right)
 +\frac{r}{\eta}\log\frac{L}{\rho}
 +\frac{r}{\eta^2}
 \right],
 \qquad a=L_{n,r},\quad \rho=n^{-(A_0+3)}.
 \label{eq:Km-explicit}
\end{equation}
All constants in this display are absolute. The squared-amplitude approximation and the passage to the two-sample comparison also have absolute constants for $0<\eta<\frac{1}{16}$.

Choose $\eta=c\varepsilon$, with $c>0$ an absolute constant small enough for Lemma~\ref{lem:absorption} in both the real and complex cases. Put
\[
 U=\log\frac{2r}{\varepsilon},\qquad
 x=\frac{m}{ra}.
\]
Since $a\geq1$, $U\geq1$, and $\log n\leq CaL_r\leq CaU$, the bound for $L$ gives
\begin{equation}
 \log\frac{L}{\rho}\leq C_{A_0}aU.
 \label{eq:level-probability-cost}
\end{equation}
Indeed, $\log L\leq C+\log\frac{1}{\eta}+\log\log\frac{4r}{\eta}\leq CU$, and $\log\frac{1}{\rho}=(A_0+3)\log n$.

Suppose now that
\[
 x\geq x_0=B\varepsilon^{-5}U,
\]
where $B\geq2$ will be chosen depending only on $A_0$. The function $x\mapsto\frac{\log(e+rx)}{x}$ is decreasing for $x>0$: its derivative has the sign of $\frac{rx}{e+rx}-\log(e+rx)$, which is negative. Moreover,
\[
 \log(e+rx_0)
 \leq C(1+\log B)U,
\]
because $\log r$, $\log\frac{1}{\varepsilon}$, and $\log U$ are each bounded by $U$. Consequently,
\begin{equation}
 \frac{\eta^{-3}ra\log(e+\frac{m}{a})}{m}
 =\frac{\eta^{-3}\log(e+rx)}{x}
 \leq\frac{C(1+\log B)}{B}\varepsilon^2.
 \label{eq:explicit-first-term}
\end{equation}
The remaining terms in \eqref{eq:Km-explicit} satisfy
\[
 \frac{\eta^{-1}r\log\frac{L}{\rho}}{m}
 \leq\frac{C_{A_0}}{B}\varepsilon^4,
 \qquad
 \frac{\eta^{-2}r}{m}
 \leq\frac{C}{B}\frac{\varepsilon^3}{aU}.
\]
Choosing $B$ large enough therefore gives $K_m\leq c_1\varepsilon^2m$, as required by Lemma~\ref{lem:absorption}. Increasing $B$ also ensures $m\geq\frac{64r}{\varepsilon^2}$ and $m\geq C_{A_0}\log n$. The one-target concentration estimate and the final witness argument from Section~\ref{sec:completion} now give a Bernoulli failure probability at most $Cn^{-A_0}$.

For uniform selection of an integer number of rows, carry out this calculation with $A_0+1$ and apply the conditioning argument in Corollary~\ref{cor:uniform}. Absorbing the resulting change into $C_{A_0}$ proves the proposition and the remaining assertion of Theorem~\ref{thm:complex-main}.
\end{proof}

For fixed $\varepsilon$, the factor $\log\frac{2r}{\varepsilon}$ is bounded by a constant depending on $\varepsilon$ times $L_r$. Thus the explicit estimate is consistent with the fixed-distortion theorem. Its fifth power of $\varepsilon^{-1}$ records the losses in this proof. No assertion of sharpness in that parameter is made.

\section{Walsh matrices and the matching lower bound}
\label{sec:walsh}

We return to the lower bound that shows the sharpness of the row count. Let $n=2^d$, and index the rows and columns by $\mathbb{F}_2^d$, where $\mathbb{F}_2=\{0,1\}$ has arithmetic modulo two. The Walsh matrix is
\[
 W_{u,v}=(-1)^{u\cdot v}\qquad(u,v\in\mathbb{F}_2^d),
\]
where $u\cdot v$ is the coordinate dot product in $\mathbb{F}_2$. Summing a nontrivial character gives zero, so $W^TW=nI$. Consequently Theorem~\ref{thm:main} applies.

\begin{theorem}[B\l asiok--Lopatto--Luh--Marcinek--Rao]
\label{thm:walsh-lower}
There are absolute constants $c,C_*>0$ with the following property. Let $n=2^d$, $r=2^k$, and
\[
 \min\{k,d-k\}\geq C_*\log(2d).
\]
If each row of $W$ is selected independently with probability $p$ and
\[
 pn\leq cr\log r\log\frac{n}{r},
\]
then the selected matrix has a nonzero $r$-sparse vector in its kernel with probability tending to one as $d\to\infty$.
\end{theorem}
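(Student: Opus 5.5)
The statement is a cited result of B\l asiok, Lopatto, Luh, Marcinek, and Rao, recalled from \cite[Theorem 3.1]{BLLMR}. The natural plan is to reproduce their coupon-collector argument over subspaces of $\mathbb{F}_2^d$.

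The starting point is the basic kernel construction. For a subspace $V\le\mathbb{F}_2^d$ of dimension $d-k$, and for any coset $a+V^\perp$, the indicator $1_{a+V^\perp}$ is $r$-sparse, since $|V^\perp|=2^k=r$. Its Walsh transform is supported on $V$, where it takes values $\pm r$ with a sign depending on $a$. Summing over suitably chosen cosets $a_1+V^\perp,\dots,a_\ell+V^\perp$, or using a single indicator when possible, produces an $r$-sparse vector $y$. The transform $Wy$ vanishes outside $V$, and on $V$ it equals a combination of characters $v\mapsto(-1)^{a_s\cdot v}$. The cleanest route is to look for one subspace $V$ of codimension $k$ that avoids every selected row outside the origin. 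The origin row cannot be avoided by a single coset, so one pairs two cosets with opposite signs, $y=1_{a+V^\perp}-1_{V^\perp}$ with $a\notin V^\perp$. Then $Wy$ vanishes on $\{v\in V:a\cdot v=0\}$, a subspace of codimension $k+1$ containing $0$. The sparsity becomes $2r$, which rescales $k$ by one and is harmless for constants.

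The reduction is therefore to the following claim: with high probability there is a subspace $V'$ of codimension $k'\approx k$ such that no selected row other than possibly $0$ lies in $V'$. Equivalently, the random set $S$ of expected size $pn$ must miss some codimension-$k'$ subspace. Fix a codimension-$k'$ subspace $V$. It has about $n/r$ elements, so the probability that it avoids $S\setminus\{0\}$ is about $(1-p)^{n/r}\approx e^{-pn/r}$. There are roughly $2^{k'(d-k')}$ such subspaces, which gives a first-moment count of $\exp\bigl(k(d-k)\log2-pn/r\bigr)$. This is large exactly when $pn\lesssim r\,k(d-k)\asymp r\log r\log\frac nr$, matching the threshold in the statement.

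The main obstacle is upgrading this first moment to ``with probability tending to one''. The indicators for different subspaces are heavily correlated, so a naive second moment may fail. I would avoid the second moment altogether and build $V$ greedily, one codimension at a time. Begin with $V_0=\mathbb{F}_2^d$ and at step $s$ choose a hyperplane $V_{s+1}\subset V_s$ killing as many of the surviving sample points in $V_s\setminus\{0\}$ as possible. Averaging over hyperplanes of $V_s$ shows that one can keep at most about half of them; more precisely, one keeps a fraction below $1/2$ of the points, with concentration supplied by the count of hyperplanes. Once the surviving set is small compared with the number of available hyperplanes, a better-than-halving step becomes possible. Each sample point lies outside a random hyperplane with probability $1/2$. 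When the surviving count $N$ is at most $2^{\dim V_s}/\mathrm{poly}$, one can choose hyperplanes avoiding all survivors via a union bound, removing roughly $\log_2$ of the number of choices per step. Tracking this recursion shows that $\log N$ decreases by about $\dim V_s/N$-dependent amounts, and that $k$ steps reduce $pn$ points to zero when $pn\le c\,r\,k(d-k)/d\cdot\ldots$. The hypothesis $\min\{k,d-k\}\ge C_*\log(2d)$ enters to absorb polynomial losses in $d$ coming from the union bounds and the binomial concentration of $|S|$. Making this recursion quantitatively sharp, so that the product $\log r\log\frac nr$ emerges rather than a weaker bound, is the step I expect to be delicate. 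There I would follow \cite{BLLMR} directly.
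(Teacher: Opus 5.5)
The paper gives no proof of this statement. It is recalled as \cite[Theorem 3.1]{BLLMR}, so deferring to that source agrees with the paper. Your reduction and your first-moment count are correct, and the reduction can be simplified. If $S$ misses a coset $a+V$ of some codimension-$k$ subspace $V$, then $y=n^{-1}W1_{a+V}$ is supported on $V^\perp$, has exactly $r$ nonzero entries, and satisfies $Wy=1_{a+V}$, so the selected rows annihilate it. Neither the origin nor the extra factor $2$ in the sparsity then needs attention.

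The gap is in the step meant to give probability tending to one. A greedy construction, one hyperplane at a time, cannot produce the factor $\log r$. In a space of dimension $D$, a hyperplane can remove all of $N$ random-looking points only when $N\lesssim D$. When $N\gg D$, the best hyperplane still keeps about $N/2-O(\sqrt{ND})$ of them. The better-than-halving gains therefore amount to only a bounded factor overall. Running your recursion backwards from zero survivors after $k$ steps gives only $pn\lesssim 2^k(d-k)\asymp r\log\frac{n}{r}$.

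The same loss is visible for a single fixed subspace $V$. Coupon collecting shows that, with high probability, some coset of $V$ is missed when $pn\leq(1-\epsilon)r\log r$. Yet descending through its cosets one coordinate at a time, always into the emptier half, ends at a coset that still contains roughly $pn/r$ points. The factor $\log r$ comes from the minimum of $r$ nearly independent counts of mean $pn/r$, and a bit-by-bit choice cannot detect it.

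What is needed is a global count, and the correlations you were worried about are harmless. For $A\in\mathbb{F}_2^{k\times(d-k)}$ and $b\in\mathbb{F}_2^k$, let $C_{A,b}=\{(Au+b,u):u\in\mathbb{F}_2^{d-k}\}$. This is a coset of the codimension-$k$ subspace $\{(Au,u)\}$. Let $X$ be the number of these cosets disjoint from $S$. Each has $n/r$ points, so $\mathbb{E}X=2^{k(d-k)+k}(1-p)^{n/r}\to\infty$ when the constant $c$ in the statement is small.

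Since $|C\cup C'|=2n/r-|C\cap C'|$, the ratio $\mathbb{E}X^2/(\mathbb{E}X)^2$ is the average of $(1-p)^{-|C\cap C'|}$ over independent uniform pairs. Suppose $\operatorname{rank}(A-A')=\rho$. Then the two cosets meet with probability $2^{\rho-k}$, in $2^{d-k-\rho}$ points, and this rank has probability at most $C2^{-(k-\rho)(d-k-\rho)}$.

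Put $\mu=\min\{k,d-k\}$. For the generic rank $\rho=\mu$, the correlation factor is $\exp(O(pn\,2^{-k-\mu}))=\exp(O(k(d-k)2^{-\mu}))=1+o(1)$. For $\rho<\mu$, the contribution is at most $C2^{-(k-\rho)(d-k-\rho)}\bigl(\exp(O(c\,k(d-k)2^{-\rho}))-1\bigr)$. If $\rho\geq\mu/2$, the exponent is $O(d^22^{-\mu/2})$, and these terms sum to $o(1)$. If $\rho<\mu/2$, the prefactor is at most $C2^{-k(d-k)/4}$, which wins once $c$ is small. These estimates are where the hypothesis $\mu\geq C_*\log(2d)$ is used. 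Hence $\operatorname{Var}X=o((\mathbb{E}X)^2)$, and Chebyshev's inequality gives $X>0$ with probability tending to one.
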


This is Theorem 3.1 of \cite{BLLMR}; the constant $C_*$ absorbs the choice of logarithm base. A nonzero $r$-sparse vector in the kernel rules out RIP of order $r$ with any constant less than one. Comparing with Theorem~\ref{thm:main} therefore identifies the Bernoulli threshold up to constants in the stated range.

For example, fix $0<\theta<1$, put $k=\lfloor\theta d\rfloor$, and let $r=2^k$. Then $r\asymp n^\theta$, and elementary logarithmic comparison gives
\[
 r\log\frac{2en}{r}\log(2r)\asymp_\theta r(\log n)^2.
\]
Thus the polynomial-sparsity example from the introduction lies within the range of the matching lower bound. The logarithm saved from the earlier sufficient count brings the upper bound to the Walsh obstruction.

\section{Stable sparse recovery}
\label{sec:recovery}

The sampling theorem gives a matrix that approximately preserves every vector satisfying $\|y\|_1\leq\sqrt r\|y\|_2$. In particular, it preserves all $r$-sparse vectors. Section~\ref{subsec:fourier-ratio} used the full norm condition to obtain approximate recovery in terms of the Fourier Ratio. We now use the sparse consequence to obtain exact recovery for sparse signals in the noiseless case, together with stability under noise and control of the best $s$-term error. This also connects with the uncertainty and subsystem viewpoints described in the introduction. The recovery principles are classical \cite{CandesTao,CRT,KashinTemlyakov}; the improved row count comes from Theorems~\ref{thm:main} and \ref{thm:complex-main}. We give the argument over $\mathbb{C}$, which includes the real case.

Let $H$ satisfy \eqref{eq:complex-matrix-assumptions}, and put $U=n^{-\frac{1}{2}}H$. Thus $U$ is unitary. For a retained row set $S$, let $U_S$ denote its restriction to $S$. A vector $h$ is invisible to these measurements precisely when $U_Sh=0$, or equivalently when $Uh$ is supported on $S^c$. Since $U^*U=I$, this gives the exact identity
\begin{equation}
 \ker U_S=\operatorname{span}\{U^*e_i:i\in S^c\}.
 \label{eq:missing-span}
\end{equation}
Indeed, $h=U^*(Uh)$ proves one inclusion, and $U(U^*e_i)=e_i$ proves the other. The vectors $U^*e_i$ form the orthonormal basis associated with the measurements. When $H$ is Fourier, the right side is the space of signals whose Fourier coefficients vanish on the observed frequencies.

Identity \eqref{eq:missing-span} is the link with recovery from missing coefficients in \cite{IosevichMayeli,IKLM}. A support uncertainty principle can exclude sparse vectors from this kernel and hence imply uniqueness among sparse signals. A norm comparison on the entire kernel also controls vectors that concentrate most of their mass on a small support. The following classical criterion makes the latter implication explicit; it is the elementary geometric mechanism underlying the use of $\ell^1$--$\ell^2$ comparisons in \cite{KashinTemlyakov,IKLM}.

\begin{lemma}[A norm comparison implying exact recovery]
\label{lem:norm-comparison-recovery}
Let $B$ be a complex matrix with $n$ columns, and suppose that
\begin{equation}
 \|h\|_2\leq\alpha\|h\|_1
 \qquad(h\in\ker B)
 \label{eq:kernel-norm-comparison}
\end{equation}
for some $\alpha>0$. If $1\leq s\leq n$ and $\alpha\sqrt s<\frac{1}{2}$, then every $s$-sparse vector $x$ is the unique minimizer of
\[
 \min\{\|v\|_1:Bv=Bx\}.
\]
\end{lemma}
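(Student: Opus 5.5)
The plan is to prove the null space property with constant strictly below one half and then run the standard cone argument. Fix an $s$-sparse $x$ with support $T$, $|T|\leq s$, and let $v$ be any vector with $Bv=Bx$ and $v\neq x$. Put $h=v-x\in\ker B$, so $h\neq0$. The goal is $\|v\|_1>\|x\|_1$, which gives both minimality and uniqueness.

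The first step is a bound on the mass of a kernel vector on a small set. By Cauchy--Schwarz and \eqref{eq:kernel-norm-comparison},
\[
 \|h_T\|_1\leq\sqrt{|T|}\,\|h_T\|_2\leq\sqrt s\,\|h\|_2\leq\alpha\sqrt s\,\|h\|_1<\tfrac12\|h\|_1,
\]
where the last inequality is strict because $h\neq0$ forces $\|h\|_1>0$ and $\alpha\sqrt s<\frac12$. Writing $\|h\|_1=\|h_T\|_1+\|h_{T^c}\|_1$, this is equivalent to $\|h_T\|_1<\|h_{T^c}\|_1$.

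The second step uses the triangle inequality coordinatewise. Since $x$ vanishes off $T$,
\[
 \|v\|_1=\|x_T+h_T\|_1+\|h_{T^c}\|_1\geq\|x\|_1-\|h_T\|_1+\|h_{T^c}\|_1>\|x\|_1.
\]
So every feasible $v\neq x$ has strictly larger $\ell^1$ norm, and $x$ is the unique minimizer. The minimum is attained at $x$ itself, so no separate existence argument is needed.

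No step is a real obstacle. The only point needing care is keeping the inequality strict: it must be derived from $h\neq0$ together with the strict hypothesis $\alpha\sqrt s<\frac12$, not from the comparison inequality alone. The argument works verbatim over $\mathbb C$, since only the triangle inequality for complex moduli is used.
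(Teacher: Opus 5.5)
Your proof is correct and is essentially the paper's argument. Cauchy--Schwarz with the kernel comparison gives $\|h_T\|_1<\frac12\|h\|_1$ for nonzero $h\in\ker B$, hence $\|h_T\|_1<\|h_{T^c}\|_1$, and the triangle inequality on $T$ then shows that every feasible $x+h\neq x$ has strictly larger $\ell^1$ norm.
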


\begin{proof}
Let $T=\operatorname{supp}(x)$, and let $h\in\ker B$ be nonzero. Write $h_T$ for the vector equal to $h$ on $T$ and zero elsewhere. By Cauchy--Schwarz and \eqref{eq:kernel-norm-comparison},
\[
 \|h_T\|_1\leq\sqrt s\|h\|_2
 \leq\alpha\sqrt s\|h\|_1
 <\frac{1}{2}\|h\|_1.
\]
Since $\|h\|_1=\|h_T\|_1+\|h_{T^c}\|_1$, we obtain
$\|h_T\|_1<\|h_{T^c}\|_1$. Every feasible vector other than $x$ has the form $x+h$ for such an $h$. The triangle inequality now gives
\[
 \|x+h\|_1
 =\|x_T+h_T\|_1+\|h_{T^c}\|_1
 \geq\|x\|_1-\|h_T\|_1+\|h_{T^c}\|_1
 >\|x\|_1.
\]
Thus $x$ is the unique minimizer.
\end{proof}

The lemma explains why information about the unobserved span can guarantee recovery from the observed coefficients. The sampling theorem supplies RIP, which also controls errors that are only approximately invisible because of measurement noise. We now give the corresponding quantitative argument. Multiplying all retained rows by a common nonzero scalar does not change their kernel; for noise estimates we use the normalization $B=m^{-\frac{1}{2}}H_S$ specified by the sampling theorem.

For $x\in\mathbb{C}^n$ and an integer $s\geq1$, define its best $s$-term error in $\ell^1$ by
\[
 \sigma_s(x)_1=\inf\{\|x-v\|_1:|\operatorname{supp}(v)|\leq s\}.
\]
Equivalently, retain $s$ coordinates of largest absolute value and sum the absolute values of the remaining coordinates.

\begin{proposition}[Sparse recovery from RIP]
\label{prop:recovery}
Suppose $3s\leq n$ and $B$ has RIP of order $3s$ with constant $0<\delta<\frac{1}{4}$. Let $x\in\mathbb{C}^n$, let $b=Bx+e$ with $\|e\|_2\leq\nu$, and let $\widehat x$ be any solution of
\begin{equation}
 \min_{v\in\mathbb{C}^n}\|v\|_1
 \quad\text{subject to}\quad \|Bv-b\|_2\leq\nu.
 \label{eq:basis-pursuit}
\end{equation}
Then
\begin{equation}
 \|\widehat x-x\|_2\leq
 C\left(\frac{\sigma_s(x)_1}{\sqrt s}+\nu\right),
 \label{eq:stable-recovery}
\end{equation}
where $C$ is absolute. In particular, every $s$-sparse vector is recovered exactly from noiseless measurements.
\end{proposition}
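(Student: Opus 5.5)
The plan is the classical Candès-type argument: turn RIP of order $3s$ into a robust null space property, then combine it with the cone constraint that $\ell^1$ minimality imposes on the error. Put $h=\widehat x-x$. Since $x$ is feasible, minimality gives $\|\widehat x\|_1\leq\|x\|_1$. Both $\widehat x$ and $x$ are feasible, so the triangle inequality gives $\|Bh\|_2\leq2\nu$.

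Let $T_0$ index the $s$ largest coordinates of $x$. Writing $\|x+h\|_1\leq\|x\|_1$ and splitting over $T_0$ and its complement yields the cone condition
\[
 \|h_{T_0^c}\|_1\leq\|h_{T_0}\|_1+2\sigma_s(x)_1 .
\]

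Next decompose $T_0^c$ by decreasing magnitude of $h$. Let $T_1$ carry the $2s$ largest entries of $h_{T_0^c}$, let $T_2$ carry the next $2s$, and so on. The last block may be smaller, and $3s\leq n$ guarantees that $T_0\cup T_1$ has at most $3s$ elements. The standard block estimate gives
\[
 \sum_{k\geq2}\|h_{T_k}\|_2\leq\frac{\|h_{T_0^c}\|_1}{\sqrt{2s}},
\]
because every entry of $T_{k+1}$ is at most the average of the entries of $T_k$.

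The key RIP step bounds $\|h_{T_{01}}\|_2$, where $T_{01}=T_0\cup T_1$. To keep the constant absolute without invoking the more delicate restricted orthogonality estimate, I would use polarization. For disjointly supported $u$ and $v$ with $|\operatorname{supp}u|+|\operatorname{supp}v|\leq3s$, RIP of order $3s$ applied to $u\pm v$ and $u\pm iv$ gives
\[
 |\langle Bu,Bv\rangle|\leq\delta\|u\|_2\|v\|_2 .
\]
This applies to $T_0$ and $T_k$, which together have at most $3s$ elements, and likewise to $T_1$ and $T_k$. Expanding
\[
 \|Bh_{T_{01}}\|_2^2=\langle Bh_{T_{01}},Bh\rangle-\sum_{k\geq2}\langle Bh_{T_{01}},Bh_{T_k}\rangle
\]
and using RIP for the left side together with $\|Bh\|_2\leq2\nu$ and $\|h_{T_0}\|_2+\|h_{T_1}\|_2\leq\sqrt2\|h_{T_{01}}\|_2$, I would obtain
\[
 (1-\delta)\|h_{T_{01}}\|_2\leq2\sqrt{1+\delta}\,\nu+\sqrt2\,\delta\sum_{k\geq2}\|h_{T_k}\|_2 .
\]

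Now insert the block estimate and then the cone condition with $\|h_{T_0}\|_1\leq\sqrt s\|h_{T_{01}}\|_2$. The coefficient of $\|h_{T_{01}}\|_2$ on the right becomes $\delta/(1-\delta)$, which is less than $\frac13$ when $\delta<\frac14$. This term can therefore be absorbed, giving
\[
 \|h_{T_{01}}\|_2\leq C\left(\nu+\frac{\sigma_s(x)_1}{\sqrt s}\right).
\]

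Finally, $\|h_{T_{01}^c}\|_2\leq\sum_{k\geq2}\|h_{T_k}\|_2$. The block estimate and the cone condition bound this by $\|h_{T_{01}}\|_2+2\sigma_s(x)_1/\sqrt{2s}$, and adding the two pieces proves \eqref{eq:stable-recovery}. In the noiseless sparse case we have $\nu=0$ and $\sigma_s(x)_1=0$, so $\widehat x=x$.

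The main obstacle is the RIP step, specifically keeping the absorption coefficient below one. The decomposition above uses blocks of size $2s$, and since $\delta<\frac14$ this leaves comfortable room. I would first check the polarization inequality carefully over $\mathbb{C}$: the normalization argument applied to the real and imaginary parts of $\langle Bu,Bv\rangle$ gives exactly $\delta\|u\|_2\|v\|_2$.
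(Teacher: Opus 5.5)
Your outline (cone condition, $\|Bh\|_2\le2\nu$, block decomposition of $T_0^c$, restricted orthogonality, absorption) is the same as the paper's, but the orthogonality step fails as written because of your block sizes. You take $|T_1|=2s$ and $|T_k|=2s$ for $k\geq2$, except possibly the last block. You then assert that $|\langle Bu,Bv\rangle|\le\delta\|u\|_2\|v\|_2$ applies to $h_{T_1}$ and $h_{T_k}$. However, $|T_1\cup T_2|=2s+\min\{2s,n-3s\}$, which exceeds $3s$ as soon as $n>4s$. In that case $u\pm v$ and $u\pm iv$ are up to $4s$-sparse, and RIP of order $3s$ says nothing about them. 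Only the pairs $(T_0,T_k)$ are legitimately covered. So your inequality $(1-\delta)\|h_{T_{01}}\|_2\le2\sqrt{1+\delta}\,\nu+\sqrt2\,\delta\sum_{k\ge2}\|h_{T_k}\|_2$ is not justified, and the absorption step rests on it.

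The repair is what the paper does: use blocks $T_1,T_2,\dots$ of size $s$. Then $T=T_0\cup T_1$ has $2s$ elements and $T\cup T_j$ has at most $3s$. The orthogonality bound therefore applies directly to $h_T$ and $h_{T_j}$ with constant $\delta$, with no need to split $h_T$ and no factor $\sqrt2$. The tail estimate becomes $\sum_{j\geq2}\|h_{T_j}\|_2\le\|h_{T_0^c}\|_1/\sqrt s$. Combined with $\|h_{T_0}\|_1\le\sqrt s\|h_T\|_2$, this gives $(1-2\delta)\|h_T\|_2\le2\sqrt{1+\delta}\,\nu+2\delta\sigma_s(x)_1/\sqrt s$, which is the absorption coefficient $\delta/(1-\delta)$ you were aiming for.

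Alternatively, keep your $2s$-blocks and split each $T_k$ into two halves of size at most $s$. This costs a factor $\sqrt2$ on the $T_1$ term and gives coefficient $\sqrt{3/2}\,\delta/(1-\delta)$, still below one for $\delta<\frac14$.

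A smaller point: bounding the real and imaginary parts of $\langle Bu,Bv\rangle$ separately by $\delta$ only gives modulus at most $\sqrt2\,\delta$. To get exactly $\delta$, first rotate $v$ by a unimodular scalar so that the inner product is real. Or use the paper's argument: on the common support $U$, the Hermitian matrix $B_U^*B_U-I$ has spectrum in $[-\delta,\delta]$, and $\langle u,v\rangle=0$. Either constant suffices for the absorption here.
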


\begin{proof}
A minimizer exists: $x$ is feasible, and the feasible set restricted to $\|v\|_1\leq\|x\|_1$ is compact. Put $h=\widehat x-x$. We will control this error using minimality of the $\ell^1$ norm and the fact that both vectors fit the measurements. Choose $T_0$ to consist of $s$ largest coordinates of $x$ in absolute value, and let $h_T$ denote the restriction of $h$ to $T$, extended by zero elsewhere. Minimality and feasibility give
\begin{equation}
 \|h_{T_0^c}\|_1\leq\|h_{T_0}\|_1+2\sigma_s(x)_1,
 \qquad \|Bh\|_2\leq2\nu.
 \label{eq:cone-tube}
\end{equation}
For the first inequality, the reverse triangle inequality on $T_0$ and its complement gives
\[
 \|x+h\|_1\geq
 \|x_{T_0}\|_1-\|h_{T_0}\|_1
 +\|h_{T_0^c}\|_1-\|x_{T_0^c}\|_1.
\]
Compare this with $\|x+h\|_1\leq\|x\|_1$ and use $\|x_{T_0^c}\|_1=\sigma_s(x)_1$. For the second inequality, write $Bh=(B\widehat x-b)-(Bx-b)$ and use feasibility of both vectors.

We next separate the large coordinates of the error from its tail. Partition $T_0^c$ into successive sets $T_1,T_2,\ldots$ of size $s$, except possibly the last, in decreasing order of $|h_k|$. For $j\geq2$, every coordinate on $T_j$ is bounded in absolute value by $\frac{\|h_{T_{j-1}}\|_1}{s}$. Every block preceding a nonempty $T_j$ has exactly $s$ elements. Hence
\[
 \|h_{T_j}\|_2
 \leq\sqrt{|T_j|}\max_{k\in T_j}|h_k|
 \leq\frac{\|h_{T_{j-1}}\|_1}{\sqrt s}.
\]
Summing over $j\geq2$ and using disjointness of the preceding blocks gives
\begin{equation}
 \sum_{j\geq2}\|h_{T_j}\|_2\leq\frac{\|h_{T_0^c}\|_1}{\sqrt s}.
 \label{eq:block-tail}
\end{equation}
Put $T=T_0\cup T_1$. To use RIP on the main part and each tail block, we need its restricted orthogonality consequence:
\[
 |\langle Bu,Bv\rangle|\leq\delta\|u\|_2\|v\|_2
\]
whenever $u,v$ have disjoint supports whose union has size at most $3s$. To verify this, on that union the Hermitian Gram matrix $B^*B-I$ has all eigenvalues in $[-\delta,\delta]$ by RIP, and $\langle u,v\rangle=0$.

If $h_T=0$, the largest coordinates of $h$ outside $T_0$, namely those on $T_1$, all vanish. All remaining coordinates then vanish as well, so $h=0$. Otherwise, write $h=h_T+\sum_{j\geq2}h_{T_j}$ and estimate
\begin{align*}
 (1-\delta)\|h_T\|_2^2
 &\leq\|Bh_T\|_2^2\\
 &\leq|\langle Bh_T,Bh\rangle|
       +\sum_{j\geq2}|\langle Bh_T,Bh_{T_j}\rangle|\\
 &\leq\sqrt{1+\delta}\|h_T\|_2\|Bh\|_2
       +\delta\|h_T\|_2\frac{\|h_{T_0^c}\|_1}{\sqrt s}.
\end{align*}
Divide by $\|h_T\|_2$, apply \eqref{eq:cone-tube}, and use $\|h_{T_0}\|_1\leq\sqrt s\|h_T\|_2$. This gives
\[
 (1-2\delta)\|h_T\|_2
 \leq2\sqrt{1+\delta}\nu+\frac{2\delta\sigma_s(x)_1}{\sqrt s}.
\]
Finally, \eqref{eq:block-tail} and \eqref{eq:cone-tube} imply
\[
 \|h\|_2\leq\|h_T\|_2+\frac{\|h_{T_0^c}\|_1}{\sqrt s}
 \leq2\|h_T\|_2+\frac{2\sigma_s(x)_1}{\sqrt s}.
\]
Since $1-2\delta>\frac{1}{2}$, these inequalities prove the proposition.
\end{proof}

\begin{corollary}[Uniform recovery from sampled Fourier and Hadamard measurements]
\label{cor:sampled-recovery}
Fix $A_0>0$. Let $H$ satisfy \eqref{eq:complex-matrix-assumptions} and let $3s\leq n$. For a sufficiently large $C_{A_0}$, if
\[
 C_{A_0}s\log\frac{2en}{s}\log(2s)\leq m\leq\frac{n}{2},
\]
then Bernoulli sampling with expected size $m$ gives a matrix $B=m^{-\frac{1}{2}}H_S$ for which \eqref{eq:stable-recovery} holds simultaneously for every signal and every admissible noise vector, with probability at least $1-Cn^{-A_0}$. The same conclusion holds for a uniform $m$-element subset when $m$ is an integer.
\end{corollary}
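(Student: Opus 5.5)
The plan is to combine Theorem~\ref{thm:complex-main}, applied with sparsity parameter $r=3s$ and a fixed distortion, with the deterministic Proposition~\ref{prop:recovery}. We apply Theorem~\ref{thm:complex-main} (or Theorem~\ref{thm:main} in the real case, since real vectors are complex vectors) with $r=3s$, $\varepsilon=\frac18$, and the given $A_0$. Its conclusion \eqref{eq:homogeneous-sampling} holds for every complex $y$ with $\|y\|_1\leq\sqrt{3s}\|y\|_2$. By Cauchy--Schwarz this includes every $3s$-sparse vector, so on the resulting event $B=m^{-\frac12}H_S$ has RIP of order $3s$ with constant $\delta=\frac18<\frac14$. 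The event has probability at least $1-Cn^{-A_0}$ under Bernoulli sampling, and likewise for a uniform $m$-element subset by the prescribed-cardinality assertion of the same theorem. The hypothesis $3s\leq n$ ensures that $1\leq r\leq n$.

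The only real step is to check that the row-count hypothesis of the corollary implies the hypothesis of the theorem with $r=3s$. That is, we need
\[
 C_{\frac18,A_0}\,3s\,L_{n,3s}L_{3s}\leq C_{A_0}s\log\frac{2en}{s}\log(2s).
\]
This holds for two reasons. First, $L_{n,3s}=\log\frac{2en}{3s}\leq\log\frac{2en}{s}$. Second, $L_{3s}=\log(6s)\leq\log 3+\log(2s)\leq C\log(2s)$, because $\log(2s)\geq\log2$. Hence choosing $C_{A_0}$ as a large multiple of $C_{\frac18,A_0}$ suffices. The upper constraint $m\leq\frac n2$ is the same in both statements.

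On this single event, Proposition~\ref{prop:recovery} is deterministic in $x$ and $e$. It therefore yields \eqref{eq:stable-recovery} simultaneously for every signal $x\in\mathbb C^n$ and every noise vector with $\|e\|_2\leq\nu$, and every solution of \eqref{eq:basis-pursuit} satisfies it. The uniformity over signals and noise is thus inherited from the fact that one event controls the whole class, as in Theorem~\ref{thm:main}.

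I expect no real obstacle; the only care needed is the constant bookkeeping when passing from $r=3s$ to $s$ in the two logarithms.
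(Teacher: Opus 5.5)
Your proposal is correct and follows the paper's own proof. Like the paper, you apply Theorem~\ref{thm:complex-main} with $r=3s$ and $\varepsilon=\frac18$, then invoke the deterministic Proposition~\ref{prop:recovery} on the single resulting RIP event. Your comparison $L_{n,3s}\leq\log\frac{2en}{s}$ and $L_{3s}\leq C\log(2s)$ simply makes explicit the paper's remark that the required row count is a constant multiple of the displayed expression.
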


\begin{proof}
Apply the Bernoulli or prescribed-cardinality assertion of Theorem~\ref{thm:complex-main} with $r=3s$ and RIP constant $\frac{1}{8}$. Its required row count is bounded by a constant multiple of the expression displayed above. On the resulting RIP event, Proposition~\ref{prop:recovery} applies to every signal and every admissible noise vector, which proves the simultaneous assertion.
\end{proof}

\section{Further observations}

\subsection{What the norm condition permits}
\label{subsec:beyond-support}

Theorems~\ref{thm:main} and \ref{thm:complex-main} are stated in terms of the ratio $\frac{\|y\|_1^2}{\|y\|_2^2}$. This is the quantity that the sampling proof uses. For a vector with at most $r$ nonzero coordinates, Cauchy--Schwarz bounds it by $r$. Small values of the same ratio are also possible when every coordinate is nonzero.

For example, let $n\geq3$ and consider
\[
 y=(1,2^{-1},2^{-2},\ldots,2^{-(n-1)}).
\]
The two geometric sums give
\[
 \|y\|_1=2(1-2^{-n}),
 \qquad
 \|y\|_2^2=\frac{4}{3}(1-4^{-n}).
\]
Consequently,
\begin{equation}
 \frac{\|y\|_1^2}{\|y\|_2^2}
 =3\frac{1-2^{-n}}{1+2^{-n}}<3.
 \label{eq:dense-example}
\end{equation}
Thus $\frac{y}{\|y\|_2}\in\mathcal K_3$, although the support of $y$ has size $n$. Permuting the coordinates or multiplying them by complex numbers of absolute value one leaves the ratio unchanged. At fixed accuracy and probability exponent, the theorem therefore gives a sufficient count of order $\log n$ for simultaneous norm preservation of all these vectors, and indeed of every vector satisfying $\|y\|_1\leq\sqrt3\|y\|_2$, whenever the stated subsampling range is nonempty.

Let us trace why no support condition is needed. For a unit vector, the $\ell^1$ bound gives both $|(Hy)_i|\leq\sqrt r$ and the representation of $y$ as an average of the signed coordinate vectors $\pm\sqrt r e_k$. Lemma~\ref{lem:norm-comparator} then uses $\|y\|_2=1$ to show that the comparison distribution can have relative entropy at most $L_{n,r}$. The proof of this lemma uses the identity $\sum_k\alpha_k^2=\frac{1}{\|y\|_1^2}$ for $\alpha_k=\frac{|y_k|}{\|y\|_1}$; it never estimates the number of nonzero $\alpha_k$. Thus the improved ambient logarithm is also obtained from the norm conditions alone.

The later counting argument records positions among the active rows and labels of prediction errors. Every target in $\mathcal K_r$ produces one of these records, and the same descent estimate bounds its weighted correction cost. The final variance estimate uses only $z_i(y)\leq r$ and $\sum_i z_i(y)=n$. Compactness of $\mathcal K_r$ supplies the witness in the last step. For complex vectors, the inequalities $\|\widetilde y\|_2=1$ and $\|\widetilde y\|_1\leq\sqrt{2r}$ supply the same argument in twice the real dimension. These observations explain why the broader target class has the row count and failure probability stated in the main theorems.

Here $r$ is fixed before the rows are sampled, and the resulting event controls the entire class defined by that parameter. Differences of two vectors satisfying the norm condition need not satisfy the same condition, so norm preservation alone does not give exact reconstruction of every vector in the class. Proposition~\ref{prop:fourier-ratio-recovery} handles this difficulty for approximate reconstruction: either the difference satisfies the norm condition and is controlled by its measurements, or its Euclidean norm is small compared with its Fourier $\ell^1$ norm, which minimality controls. Corollary~\ref{cor:fourier-ratio-accuracy} then gives any prescribed relative accuracy by choosing a sufficiently large $r$. The sparse recovery results in Section~\ref{sec:recovery} use additional support and best-approximation information to obtain their stronger conclusions.

\subsection{Uniformity and normalization}

The proof uses independence at two different points. Conditional on $A$, the signs are independent, and we estimate all decoded records at once. A target chosen later is allowed to depend on the sample because its record is already among those covered. In the final argument, we first fix a bad sample and one of its witnesses. Only then do we use independence of the second sample to estimate that single witness. This order explains why choosing a witness after observing the first sample causes no difficulty.

There is also a useful observation about normalization. For every matrix satisfying \eqref{eq:complex-matrix-assumptions}, the RIP event itself controls the number of selected rows. Apply the estimate to a coordinate vector $e_k$. Since $|(He_k)_i|^2=1$ for every row, we obtain
\[
 \bigl||S|-m\bigr|\leq\varepsilon m.
\]
Consequently, on that event $S$ is nonempty and normalization by the actual cardinality gives
\[
 \frac{1-\varepsilon}{1+\varepsilon}\|y\|_2^2
 \leq\frac{\|H_Sy\|_2^2}{|S|}
 \leq\frac{1+\varepsilon}{1-\varepsilon}\|y\|_2^2.
\]
The RIP constant after this normalization is at most $\frac{2\varepsilon}{1-\varepsilon}$. These inequalities hold for every vector satisfying $\|y\|_1\leq\sqrt r\|y\|_2$. Thus normalization by either expected or actual cardinality is available on the same class after adjusting the accuracy parameter.

The questions left open by this argument include sharp dependence on the distortion and the endpoint sparsities outside the range of the matching Walsh lower bound. The explicit estimate in Section~\ref{sec:distortion} isolates where the present argument loses powers of $\varepsilon^{-1}$; improving those losses while retaining the fixed-accuracy row count would strengthen the theorem. The prediction procedure may also suggest computational questions, although its role here is to count approximations. The proof does not give an efficient method for certifying RIP for a particular sample.

\section*{Acknowledgments}

Discussions with ChatGPT contributed mathematical suggestions to the development of the common-potential argument and its application in this paper. The authors take responsibility for the mathematical statements and proofs.

\end{document}